%% The first command in your LaTeX source must be the \documentclass command.
%%
%% Options:
%% twocolumn : Two column layout.
%% hf: enable header and footer.
\documentclass[
% twocolumn,
hf,
]{ceurart}
% \pdfoutput=1

\usepackage[utf8]{inputenc}
\usepackage{lmodern}  % or use 'cm-super' if you prefer
\usepackage{url,hyperref}
\usepackage{array}
\usepackage[notext]{stix2}
\usepackage{amsmath,amstext,stmaryrd}
\usepackage{mathtools}

\usepackage[ruled,vlined,linesnumbered]{algorithm2e}

\usepackage{xcolor}
\usepackage{latexcolors}
\usepackage{graphicx}

\usepackage{latexsym}
%\usepackage{mathfont}
%\mathfont[\M@keys]{STIXGeneral}

%\usepackage{unicode-math}

% never use paralist or enumitem since package enumerate is mandadory in lipics
%\usepackage{paralist}
%\usepackage{subfig} % deprecated: replaced by subcaption
%\usepackage{subcaption} already loaded in lipics.cls provides the cmd \subloat
\usepackage{subcaption}
\usepackage{caption}
%%%%%\usepackage{bbold}

%%%%%%%%%%%%%%%%%%%%%%%%%%%%%%%%%%%%%%%%%%%%%%%%%%%%%%%%%%%%%%%%%%%%%%%%%%%%%%%%
% tikz / pgf libs
\usepackage{tikz}

\usetikzlibrary{graphs}
\usetikzlibrary{positioning}
\usetikzlibrary{arrows.meta,automata,positioning}
\usetikzlibrary{decorations.pathmorphing}
\usetikzlibrary{petri}
\usetikzlibrary{shapes}
\usetikzlibrary{calc}
\usetikzlibrary{shadows}
\usetikzlibrary{patterns, patterns.meta}

\usepackage{datetime2}
\usepackage{scrtime}

% petri nets
\tikzstyle{place}=[circle,draw,minimum size=2mm]
\tikzstyle{transition}=[rectangle,draw,fill=black,minimum width=6mm,inner ysep=1pt]
\tikzset{
	every token/.style={minimum size=1pt},
	token distance=3pt
	anchor=base,
}
\tikzset{
    vertex/.style={
      %state,
      draw, 
      circle,
      fill,
      % circular drop shadow,
      inner sep=0pt,
      outer sep=0pt,
      minimum size=0pt,
      %text=white,
      %color=blue,
    },
    pixel/.style={
      circle,
      draw,
      top color=blue,
      bottom color=blue,
      minimum size=0.4cm 
    },
    transpixel/.style={
      circle,
      draw,
      top color=white,
      bottom color=blue!50,
      minimum size=0.4cm 
    },
    none/.style={
      minimum size=0.4cm 
    }
}
% downscale default scale 1cm
\tikzset{every picture/.style={scale=0.75}}

%%%%%%%%%%%%%%%%%%%%%%%%%%%%%%%%%%%%%%%%%%%%%%%%%%%%%%%%%
% Tikz
% - counter defs
% - style
%%%%%%%%%%%%%%%%%%%%%%%%%%%%%%%%%%%%%%%%%%%%%%%%%%%%%%%%%
\countdef\sizeofgrid=4  % the 4 is mandatory but the count is not 4! WHY?
\pgfmathsetcount\sizeofgrid{4}
\dimendef\vrcholsize=0
\dimendef\arrowwidth=0
\newcommand\gridsize{4}

\tikzstyle{vertex}=[
                     top color=white,
                     minimum size=\vrcholsize,
                   ]
\tikzstyle{dagnode}=[ vertex, ]
\tikzstyle{picnode}=[
                     dagnode,
                     bottom color=blue!40,
                     draw,
	            ]
\tikzstyle{picnodetrans}=[
                     dagnode,
                     bottom color=blue!10,
	            ]
\tikzstyle{picborder}=[
                     dagnode,
                     bottom color=blue!10,
      	             node font=\tiny,
      	       %node contents=$\#$, % does not work with \node 
	                            % only with \path => too buggy
                      ]
\tikzstyle{pixel}=[
                   picnode,
                   top color=blue,
                   draw,
                  ]
\tikzstyle{picarrow}=[
  		     line width=\arrowwidth,
		    ]
\tikzstyle{picedget}=[ picedge, \tcolor,   ]
\tikzstyle{picedgeT}=[ picedge, \Tcolor,   ]
\tikzstyle{bfaedge}=[  picedge, \bfacolor, ]
\tikzstyle{rfaedge}=[  picedge, \rfacolor, ]

\newcommand\pictouch{
  \pgfmathsetlength\vrcholsize{+7.06cm}
  \pgfmathsetlength\arrowwidth{+.65cm}
  \tikzstyle{picedge}=[
                       picarrow,
                       -{[length=1pt]},
  		      ]
  \tikzstyle{dagnode}=[ vertex, ]
}
\newcommand\picedge{
 % \pgfmathsetcount\sizeofgrid{4}
  \pgfmathsetlength\vrcholsize{5.95cm}
  \pgfmathsetlength\arrowwidth{.6pt}
  \tikzstyle{picedge}=[
                       line width=.65cm,
                       picarrow,
                       -{Latex[length=4pt,width=13pt]},
		       line width=6pt,
  		      ]
\tikzstyle{dagnode}=[
                     minimum size=4.5mm,
                     %minimum size=\vrcholsize,% does not work any more :(
                     top color=white,
                     bottom color=blue!40,
	            ]
}
\newcommand\picarrow{
  %\pgfmathsetcount\sizeofgrid{4}
  \pgfmathsetlength\vrcholsize{+.000001cm}
  \pgfmathsetlength\arrowwidth{+.6pt}
  \tikzstyle{dagnode}=[ vertex, ]
  \tikzstyle{picedge}=[ -latex, ]
  %\tikzset{every dagnode/.append style={circle}} % seems to have no effect
  \tikzstyle{dagnode}=[ vertex, circle, ]
}

\newcommand\sigmaX{\sigma}
\newcommand\rhoDA{\rho_A}
\newcommand\rhoOTA{\rho_M}

\newcommand\dontcareone{\qdown}
\newcommand\dontcaretwo{\qright}
\newcommand\alphaone{\alpha_1}
\newcommand\alphatwo{\alpha_2}
\newcommand\alphaonetwo{\alphaone\alphatwo}
\newcommand\betaone{\beta_1}
\newcommand\betatwo{\beta_2}
\newcommand\betaonetwo{\betaone\betatwo}
   \newcommand\statealphaone{\dontcareone\alphaone}
   \newcommand\statealphatwo{\alphatwo\dontcaretwo}
\newcommand\qdown{q_{\outdown}}
\newcommand\qright{q_{\outright}}

\newcommand\forallzeroij{
  \forall i\in[m]_0, j\in[n]_0:
}

\newcommand\OTAsimDAforalltext{
\OTAsimDA \text{ for all } \qdown, \qright \in \dagalphlambda
}
\newcommand\OTAsimDA{
\OTAsimDAinR\implies\OTAsimDAdelta
}

\newcommand\OTAsimDAinR{
(\alphaonetwo \sigmaarrowX \betaonetwo) \in R
}
\newcommand\OTAsimDAdelta{
    \betaonetwo\in
    \delta\left( \statealphaone, \statealphatwo, \sigmaX \right)
}

\newcommand\cmdunused[2]{
  $$
  \forall q\inright,q\indown \in Q :
  ( \qzdag   q\indown \sharparrow \qfdag),
  (q\inright  \qzdag  \sharparrow \qfdag),
  ( \qfdag   q\indown \sharparrow \qfdag),
  (q\inright  \qfdag  \sharparrow \qfdag) \in R
  $$
}

\newcommand\tbd[1]{#1}

% DAG formalism: equivalent Grammar / Automata formalism depending on which used

%\newcommand\tbd[1]{#1}

% General mathematical notation
%

\let\mod\relax\DeclareMathOperator{\mod}{{}\mathrm{mod}{}}

%\newcommand{\suci}{_{\suci}}

%
% Petri net notation
%
 % the null configuration of a Petri net
%
% DAG mathematical symbols
%

%\newcommand{\metaq}{\mathbb q}

 % all configs in a zero cycle

%
% Grammars
%

 % TODO #easy refactor names
\newcommand\dagDef{(V, E, \lab, \IN, \OUT)}
\newcommand\dagdef{G = (V, E, \lab, \IN, \OUT)}
\newcommand\dagdefi{G = (V, E \cup E_\driven, \lab, \IN, \OUT)}

%
% Automata with names of their own

 % ncma2024-rev4-comment-19
 % ncma2024-rev4-comment-19

%
% TikZ
%
\newcommand{\anchor}[2]{\tikz[remember picture,baseline=-.3ex,inner xsep=0,inner ysep=2pt]{\node (#1) {$#2$};}}
% circled numbers via pgf tikz
\usetikzlibrary{calc}
\newcommand*\circled[1]{%
  \tikz[%
        node distance=7pt,
        baseline=(vertexlabel.base),
        edgestring/.style={inner sep=0pt},
        vertex/.style={
                       draw,
                       shape=rounded rectangle,
                       inner ysep=0pt,
                       inner xsep=-2pt
                      }
       ]{
    \node[vertex] (vertexlabel) {\rule[-3pt]{0pt}{\dimexpr2ex+2pt}#1};
    \node[edgestring, left=of vertexlabel]  (ingoing)  {};
    \node[edgestring, right=of vertexlabel] (outgoing) {};
    \draw [->>] (ingoing) -- (vertexlabel);
    \draw [->>] (vertexlabel) -- (outgoing);
}}
%
% DAG automata definitions

\newcommand{\lab}{\mathit{\ell}}
\newcommand{\IN}{\mathit{in}}
\newcommand{\OUT}{\mathit{out}}
\newcommand{\src}{\mathit{src}}
\newcommand{\tar}{\mathit{tar}}

 % vertex or edge label occurring unboundely often, font textup feels like a mixture of tt for vertex labels and italic for edge label
% arrows
\newcommand{\rarrow}[1]{\circled{$\mathtt{#1}$}} % r arrow = rule arrow

\newcommand{\sigmaarrow}{\circled{$\sigma$}}
\newcommand{\sigmaarrowX}{\circled{$\sigmaX$}}

\newcommand{\sharparrow}{\circled{\#}}
\newcommand{\Barrow}{\circled{$\stateB$}}
\newcommand{\Warrow}{\circled{$\stateW$}}

 % emptyarrow
\newcommand{\pixelarrow}{\circled{$\bullet$}} % emptyarrow
\newcommand{\transarrow}{\circled{$\circ$}} % emptyarrow

% DAG rules, drules for short
\newcommand\drule{(\alpha\sigmaarrow\beta)}

\newcommand\sharprule{(\alpha\sharparrow\beta)}
% DAG grammar / automata rules

 % the grass is growing higher

% rules for the language of diagonal (upper left to lower right corner)
 % root pixel
     % left neighbour
      % right neighbour
  % inner pixel
 % leaf pixel
 % transparent

% rule names

% delta
\newcommand\deltaqarrows{
  \delta(q\indown, q\inright, \sigma)
}

\newcommand{\emptygraph}{\varnothing}

% colors
\newcommand\pcolor{tyrianpurple}

\newcommand\qcolor{pakistangreen}

\newcommand\tcolor{navajowhite}
\newcommand\Tcolor{jonquil}
\newcommand\rfacolor{kellygreen}
\newcommand\bfacolor{mahogany}
\newcommand\colorp{\color{\pcolor}}

\newcommand\colorq{\color{\qcolor}}

\newcommand\p{{\colorp p}}

\newcommand\q{{\colorq q}}

\newcommand
  \subsection{\textsc{[}}
  \begin{frame}{\textsc{[}}

    \begin{columns}[b,onlytextwidth]
      \begin{column}[T]{.7\textwidth}
\pgfsetshortenstart{1pt}
\pgfsetshortenend{1pt}
        \input{pic/[}
       \end{column}
       \begin{column}[T]{.2\textwidth}
         \begin{block}{Rules~$R$}
	   % \vspace{-2em} % hack since an extra line before array is added
	   % not needed anymore with aspectratio=1610
           \input{rules/[}
         \end{block}
      \end{column}
      \begin{column}[T]{.19\textwidth}
      ~
      \end{column}
    \end{columns}
    \vfill
    \input{cycles/[}
  \end{frame}
1]{
  \subsection{\textsc{#1}}
  \begin{frame}{\textsc{#1}}

    \begin{columns}[b,onlytextwidth]
      \begin{column}[T]{.7\textwidth}
\pgfsetshortenstart{1pt}
\pgfsetshortenend{1pt}
        \input{pic/#1}
       \end{column}
       \begin{column}[T]{.2\textwidth}
         \begin{block}{Rules~$R$}
	   % \vspace{-2em} % hack since an extra line before array is added
	   % not needed anymore with aspectratio=1610
           \input{rules/#1}
         \end{block}
      \end{column}
      \begin{column}[T]{.19\textwidth}
      ~
      \end{column}
    \end{columns}
    \vfill
    \input{cycles/#1}
  \end{frame}
}
\newcommand
  \subsection{\textsc{[}}
  \begin{frame}{\textsc{[}}

    \begin{columns}[b,onlytextwidth]
      \begin{column}[T]{.7\textwidth}
\pgfsetshortenstart{1pt}
\pgfsetshortenend{1pt}
  % The DAG picture
  \begin{center}
     \input{pic/[}
  \end{center}
       \end{column}
       % The Rules R
   \pause
       \begin{column}[T]{.2\textwidth}
         \begin{block}{Rules~$R$}
	   % \vspace{-2em} % hack since an extra line before array is added
	   % not needed anymore with aspectratio=1610
           \input{rules/[}
         \end{block}
      \end{column}
      \begin{column}[T]{.19\textwidth}
      ~
      \end{column}
    \end{columns}
    \vfill
      \pause
    \input{automaton/[}
  \end{frame}
1]{
  \subsection{\textsc{#1}}
  \begin{frame}{\textsc{#1}}

    \begin{columns}[b,onlytextwidth]
      \begin{column}[T]{.7\textwidth}
\pgfsetshortenstart{1pt}
\pgfsetshortenend{1pt}
  % The DAG picture
  \begin{center}
     \input{pic/#1}
  \end{center}
       \end{column}
       % The Rules R
   \pause
       \begin{column}[T]{.2\textwidth}
         \begin{block}{Rules~$R$}
	   % \vspace{-2em} % hack since an extra line before array is added
	   % not needed anymore with aspectratio=1610
           \input{rules/#1}
         \end{block}
      \end{column}
      \begin{column}[T]{.19\textwidth}
      ~
      \end{column}
    \end{columns}
    \vfill
      \pause
    \input{automaton/#1}
  \end{frame}
}

\newcommand
  \subsection{\textsc{[}}
  \begin{frame}{\textsc{[}}

    \begin{columns}[b,onlytextwidth]
      \begin{column}[T]{.7\textwidth}
\pgfsetshortenstart{1pt}
\pgfsetshortenend{1pt}
  % The DAG picture
  \begin{center}
     \input{pic/[}
  \end{center}
       \end{column}
       % The Rules R
   \pause
       \begin{column}[T]{.2\textwidth}
         \begin{block}{Rules~$R$}
	   % \vspace{-2em} % hack since an extra line before array is added
	   % not needed anymore with aspectratio=1610
           \input{rules/[}
         \end{block}
      \end{column}
      \begin{column}[T]{.19\textwidth}
      ~
      \end{column}
    \end{columns}
    \vfill
  \end{frame}
1]{
  \subsection{\textsc{#1}}
  \begin{frame}{\textsc{#1}}

    \begin{columns}[b,onlytextwidth]
      \begin{column}[T]{.7\textwidth}
\pgfsetshortenstart{1pt}
\pgfsetshortenend{1pt}
  % The DAG picture
  \begin{center}
     \input{pic/#1}
  \end{center}
       \end{column}
       % The Rules R
   \pause
       \begin{column}[T]{.2\textwidth}
         \begin{block}{Rules~$R$}
	   % \vspace{-2em} % hack since an extra line before array is added
	   % not needed anymore with aspectratio=1610
           \input{rules/#1}
         \end{block}
      \end{column}
      \begin{column}[T]{.19\textwidth}
      ~
      \end{column}
    \end{columns}
    \vfill
  \end{frame}
}

% vertices' sets

 % copied from BlumDrewes2019, called \closure there

% run languages
%\newcommand\lrun{L_\texttt{run}}

% language
\newcommand\symbolDiag{\backslash}
\newcommand\Ldia{L_{\symbolDiag}}
\newcommand\Lstripes{L_\Equiv}
\newcommand\Lbalance{L_\mathrm{balance}}
%\newcommand\Lbalance{L_\mathrm{balance}}
% bow grammar and language

%\newcommand\rcq{r_{cq\Pi}}

% markings
  % neutral mark
 % exit  mark OUT
 % entry mark IN

% language classes

 % 2OTA lang

% the whiteboard color scheme is symbolising the infinite meta-state

% the blackboard color scheme is symbolising the finite meta-state

% the conditions of the characterization theorem

% prevent splitting of footnote - hack
\interfootnotelinepenalty=10000 %% Completely prevent breaking of footnotes

% Text Fonts

% Lipics

% paper bases upon

% abstract edge
\newcommand\symboloutdown{\mapsdown}
\newcommand\symboloutright{\barrightarrowdiamond}
\newcommand\symbolindown{\downarrowbar}
\newcommand\symbolinright{\rightarrowbar}
\newcommand\outdown{^\symboloutdown}
\newcommand\inright{^\symbolinright}
\newcommand\indown{^\symbolindown}
\newcommand\outright{^\symboloutright}
\newcommand\pos{_{i,j}}    % current position
\newcommand\posL{_{i,j-1}} % position left
\newcommand\posU{_{i-1,j}} % position up

\newcommand\eijd{e\pos\outdown}
\newcommand\eijr{e\pos\outright}
\newcommand\eiju{e\posU\outdown}

\newcommand\eijl{e\posL\outright}

% edge sets
\newcommand\Eall{E_\text{all}}
\newcommand\einputdriven{E_{\driven}}
\newcommand\edownleft{\hat E_{\downarrow\dottedsquare}}
\newcommand\edownright{\hat E_{\dottedsquare\downarrow}}
\newcommand\Erfa{\hat E_{\text{rfa}\swarrow}}
\newcommand\Ebfa{\hat E_{\downarrow\text{bfa}\downarrow}}

\newcommand\ErightrightArrows{{\hat E}_\rightrightarrows}

\newcommand\ErightleftArrows{{\hat E}_\rightleftarrows}

% stub, not tested, TODO: add braces to insert pixels into the foreach 
% TODO: not as env but as shape?
\newenvironment{griddag}[1][1]{
  \begin{tikzpicture}[scale=#1]
  % the border of the picture
        % debugging not leadin anywhere so far
        %\node [red,name=test5] (x5y5) at (5,5) {\tiny x5y5};
        %\path (6,6) node [picborder,name=test5] ;
        %\path (\j,\i) node [picborder];
        %\path (test6) edge [orange] node {t} (test5); 
    \foreach \i in {0,...,\sizeofgrid}
      \foreach \j in {0,\gridsize}{ %sizeofgrid does compile only for range
        % invisible nodes with names for the edges
        % since I do not know how to name nodes in a path
        \node [picborder] (x\i y\j) at (\i,\j) {$\#$};
        \node [picborder] (x\j y\i) at (\j,\i) {$\#$};
        %debug: \node (x\j y\i) at (\j,\i) {\tiny x\j y\i};
	% nodes as "\path node" since \node is buggy with "node content"
	% TODO BUG: arrow starts at symbol instead of node border
        %\path (\i,\j) node [picborder];
        %\path (\j,\i) node [picborder];
        % debugging not leadin anywhere so far
        %\path (x\i y\j) edge [red] node {t} (x\j y\i); 
      }
  % the picture itself thus without the border
  \pgfmathaddtocount\sizeofgrid{-1} % hack TODO refactor
  \foreach \x in {1,...,\sizeofgrid}
     \foreach \y in {1,...,\sizeofgrid}
        \node [picnode] (x\x y\y) at (\x,\y) {};
}{
  \end{tikzpicture}
}
\newenvironment{griddagTrans}[1][1]{
  \begin{tikzpicture}[scale=#1]
  % the border of the picture
        % debugging not leadin anywhere so far
        %\node [red,name=test5] (x5y5) at (5,5) {\tiny x5y5};
        %\path (6,6) node [picborder,name=test5] ;
        %\path (\j,\i) node [picborder];
        %\path (test6) edge [orange] node {t} (test5); 
    \foreach \i in {0,...,\sizeofgrid}
      \foreach \j in {0,\gridsize}{ %sizeofgrid does compile only for range
        % invisible nodes with names for the edges
        % since I do not know how to name nodes in a path
        \node [picborder] (x\i y\j) at (\i,\j) {$\#$};
        \node [picborder] (x\j y\i) at (\j,\i) {$\#$};
        %debug: \node (x\j y\i) at (\j,\i) {\tiny x\j y\i};
	% nodes as "\path node" since \node is buggy with "node content"
	% TODO BUG: arrow starts at symbol instead of node border
        %\path (\i,\j) node [picborder];
        %\path (\j,\i) node [picborder];
        % debugging not leadin anywhere so far
        %\path (x\i y\j) edge [red] node {t} (x\j y\i); 
      }
  % the picture itself thus without the border
  \pgfmathaddtocount\sizeofgrid{-1} % hack TODO refactor
  \foreach \x in {1,...,\sizeofgrid}
     \foreach \y in {1,...,\sizeofgrid}
        \node [picnodetrans] (x\x y\y) at (\x,\y) {};
}{
  \end{tikzpicture}
}
\newcommand\cooDAGedges{
  \pgfmathaddtocount\sizeofgrid{1} % hack TODO refactor
  % horizontal arrows
  \foreach \x [remember=\x as \lastx (initially 0)] in {1,...,\sizeofgrid}
     \foreach \y in {0,...,\sizeofgrid}
       \path (x\lastx y\y) edge [picedge,carmine] node {} (x\x y\y); 
  % vertical arrows
  \foreach \y [remember=\y as \lasty (initially 0)] in {1,...,\sizeofgrid}
     \foreach \x in {0,...,\sizeofgrid}
       \path (x\x y\y) edge [picedge,carmine] node {} (x\x y\lasty);
  %%% debugcode:
  %%%   \path (0,0) node [red,node contents=a];
  %%%   \path (0,1) node [orange] {A};
  %%%   \node [blue] (testB) at (1,0) {B};
  %\node [blue,node contents={b}] (testb) at (2,0); does not compile
}

% coordination DAG edges in transparent
\newcommand\cooDAGedgesTrans{
  \pgfmathaddtocount\sizeofgrid{1} % hack TODO refactor
  % horizontal arrows
  \foreach \x [remember=\x as \lastx (initially 0)] in {1,...,\sizeofgrid}
     \foreach \y in {0,...,\sizeofgrid}
       \path (x\lastx y\y) edge [picedget] node {} (x\x y\y); 
  % vertical arrows
  \foreach \y [remember=\y as \lasty (initially 0)] in {1,...,\sizeofgrid}
     \foreach \x in {0,...,\sizeofgrid}
       \path (x\x y\y) edge [picedget] node {} (x\x y\lasty);
}

% all rows connected
\newcommand\Erightright{
  % horizontal arrows
  \foreach \x [remember=\x as \lastx (initially 0)] in {1,...,\sizeofgrid}
     \foreach \y in {0,...,\sizeofgrid}
       \path (x\lastx y\y) edge [picedge,black] node {} (x\x y\y); 
}

% The Edges for the RFA scanning strategy from right end to left start
\newcommand\Edownrfa{
  % vertical arrows left border
  \foreach \y [remember=\y as \lasty (initially 0)] in {1,...,\sizeofgrid}
       \path (x4y\y.south) edge [\rfacolor] node {} (x0y\lasty.north); % TODO \sizofgrid does not compile
}
% The Edges for the RFA scanning strategy from right end to left start
\newcommand\Elinesrfa{
  % horizontal arrows
  \foreach \x [remember=\x as \lastx (initially 0)] in {1,...,\sizeofgrid}
     \foreach \y in {0,...,\sizeofgrid}
       \path (x\lastx y\y.north east) edge [picedge,\rfacolor] node {} (x\x y\y.north west); 
}
\newcommand\Edownleft{
  % vertical arrows left border
  \foreach \y [remember=\y as \lasty (initially 0)] in {1,...,\sizeofgrid}
       \path (x0y\y) edge [picedge] node {} (x0y\lasty); % TODO \sizofgrid does not compile
}
\newcommand\Edownright{
  % vertical arrows right border
  \foreach \y [remember=\y as \lasty (initially 0)] in {1,...,\sizeofgrid}
       \path (x4y\y) edge [picedge] node {} (x4y\lasty); % TODO \sizofgrid does not compile
}

\newcommand\Esese{
  % diagonal arrows pointing to the south east
  \foreach \x [remember=\x as \lastx (initially 0)] in {1,...,\sizeofgrid}
    \foreach \y [remember=\y as \lasty (initially 0)] in {1,...,\sizeofgrid} 
       \path (x\lastx y\y) edge [picedge,\rfacolor] node {} (x\x y\lasty); 
}

\newcommand\Eleftbottom{
  % vertical arrows at the bottom, thus last row
  \foreach \x [remember=\x as \lastx (initially 0)] in {1,...,\sizeofgrid}
       \path (x\x y0) edge [picedge] node {} (x\lastx y0); 
}

\newcommand\grassDAGedges{
  \pgfmathaddtocount\sizeofgrid{1} % hack TODO refactor
  % horizontal arrows
  \Erightright
  % vertical arrows
  \Edownleft
       % RFA scanning strategy: \path (x4y\y) edge [picedgeT] node {} (x0y\lasty);
       %\path (x0y\y) edge [picedgeT] node {} (x0y\lasty);
}

\newcommand\Erightleftarrowssouth{
  % horizontal arrows for BFA ox scanning
  \foreach \x [remember=\x as \lastx (initially 0)] in {1,...,\sizeofgrid}
  {
     \foreach \y in {1,3,...,\sizeofgrid}
       \path (x\lastx y\y.south east) edge [bfaedge] node {} (x\x y\y.south west); 
     \foreach \y in {0,2,...,\sizeofgrid}
       \path (x\x y\y.south west) edge [bfaedge] node {} (x\lastx y\y.south east); 
  }
}

% tikz edges only downward for orientated the BFA ox scanning strategy 
\newcommand\Eoxdown{
  % vertical arrows of BFA ox scanning
  \foreach \y [remember=\y as \lasty (initially 0)] in {1,...,\sizeofgrid} {
     \pgfmathsetmacro\x{int(4 * (mod(\y, 2)))};
       \path (x\x y\y) edge [picedge,\bfacolor] node [left] {} (x\x y\lasty);
  }
     %%%%%%%%%%%% \foreach \y in {2,4,...,\sizeofgrid}
     %%%%%%%%%%%% {
     %%%%%%%%%%%%   \pgfmathsetmacro\z{\y-1}
     %%%%%%%%%%%%   \def\yy{\z};
     %%%%%%%%%%%%   \path (x0y\yy) edge [picedge,red] node {} (x0y\y); 
     %%%%%%%%%%%% %\foreach \y in {1,3,...,\sizeofgrid}
     %%%%%%%%%%%% %  \path (x\x y\y) edge [picedge,red] node {} (x\lastx y\y); 
     %%%%%%%%%%%% }
}
% tikz edges adapted BFA ox scanning strategy by 
% still alternating directions but downward on one side

% draw the colored pixels of the transparent picture
\newcommand\pixelize[1]{
  \foreach \x in {1,...,\sizeofgrid}
    \foreach \y in {1,...,\sizeofgrid}
        \node [pixel] at (#1) {};
}

\newcommand{\picdags}{\mathcal P}
\newcommand{\hatpicdags}{\hat{\mathcal P}}
\newcommand{\allpicdags}{\picdags\cup\hatpicdags}
\newcommand\Pij{P\pos}

\newcommand\Pdia{P_\backslash}
\newcommand\hPdia{\hat P_\backslash}

\newcommand{\symbolqzdag}{\pushout} % top + left line, dot; z = zero
\newcommand{\symbolqfdag}{\pullback} % bottom+right line, dot; f= final
\newcommand{\symbolqzfdag}{\boxdot} % square+dot for start+final state
\newcommand{\qzdag}{\symbolqzdag} % top + left line, dot; z = zero
\newcommand{\qfdag}{\symbolqfdag} % bottom + right line, dot
\newcommand{\qzfdag}{\symbolqzfdag} % square+dot for start+final state

\newcommand{\driven}{\iota}

\DeclareMathOperator{\aDAG}{\dottedsquare} % some DAG encoding
\DeclareMathOperator{\eqdag}{\equiv_\dottedsquare} % equiv some DAG enc
\DeclareMathOperator{\linDAG}{\squarehfill} % coordinate DAG
 % equiv coordinate DAG
\newcommand{\symbolrfa}{\rightarrowshortleftarrow} %\rightthreearrows % 
\newcommand{\symbolbfa}{~^{\hookleftarrow}_{\hookrightarrow}} % 
\DeclareMathOperator{\rfaDAG}{\symbolrfa} % coordinate DAG
\DeclareMathOperator{\eqrfa}{\equiv_{\symbolrfa}} % equiv coordinate DAG
\newcommand{\symbollinl}{\downarrow\rightrightarrows} % 
\DeclareMathOperator{\linlDAG}{\symbollinl} % 
\DeclareMathOperator{\linlDAGL}{\mathcal L_{\symbollinl}} % 
\DeclareMathOperator{\ddalinl}{\DDA_{\symbollinl}} % 
\DeclareMathOperator{\ddalinr}{\DDA_{\symbollinr}} % 
 % equiv 
 % equiv 
\newcommand{\symbollinr}{\rightrightarrows\downarrow} % 
\DeclareMathOperator{\linrDAG}{\symbollinr} % 
\DeclareMathOperator{\linrDAGL}{\mathcal L_{\symbollinr}} % 
 % equiv 
 % coordinate DAG
 % equiv coordinate DAG
\DeclareMathOperator{\bfaDAG}{\symbolbfa} % coordinate DAG
\DeclareMathOperator{\eqbfa}{\equiv_{\symbolbfa}} % equiv coordinate DAG
\DeclareMathOperator{\cooDAG}{\squarehvfill} % coordinate DAG
\DeclareMathOperator{\eqcoo}{\equiv_\squarehvfill} % coordinate DAG
 
\newcommand{\symboldia}{\squarenwsefill} %\rightthreearrows % 
\newcommand{\symbolDia}{\squareneswfill} %\rightthreearrows % 
\DeclareMathOperator{\diaDAG}{\symboldia} % same diagonlly
 % same diagonlly
 % vertical mirror
 % horizontal mirror
 % clock-wise rotation

 % the string of the path ... see def
 % the string of the path ... see def
 % the string of the path ... see def
 % the string of the path ... see def
%\DeclareMathOperator{\}{} % 

 % 
 % 
 % equiv 

% states
\newcommand\stateWB{\circlebottomhalfblack}
\newcommand\stateBW{\circletophalfblack}
\newcommand\stateB{\mdlgblkcircle}
\newcommand\stateW{\mdlgwhtcircle}

% for eptcs.cls :
%\newcommand\swswarrow{\swarrow\!\!\!\!\!\swarrow}
%\newcommand\sesearrow{\searrow\!\!\!\!\!\searrow}
% for ceurart.cls:

\newcommand\sesearrow{\searrow\!\!\!\searrow}

% alphabets
\newcommand\Sigmastar{\Sigma^*} 
\newcommand\Sigmastarstar{\Sigma^{\ast,\ast}}
\newcommand\Sigmasharp{\Sigma \cup \{\#\}}
\newcommand\dagalph{N}

\newcommand\dagalphlambda{\dagalph \cup \{\lambda\}}

% font for typesetting automaton model
\newcommand\automatonfont[1]{\mathsf{#1}}

% automata names
% determinstic DAG automaton
% 1. acronym was: {DagA^{det}} -- (top-down)deterministic DAG automaton
\newcommand\DDA{\automatonfont{DDA}}% (top-down)deterministic DAG automaton
 
\newcommand\NDA{\automatonfont{NDA}} % nondeterministic DAG automaton
\newcommand\FSA{\automatonfont{FSA}}
\newcommand\DFA{\automatonfont{DFA}}
\newcommand\NFA{\automatonfont{NFA}}
\newcommand\RFA{\automatonfont{RFA}} 
 
\newcommand\BFA{\automatonfont{BFA}}

\newcommand\twoDOTA{\automatonfont{2DOTA}} 
\newcommand\twoOTA{\automatonfont{2OTA}}

% eptcs so far +
%\usepackage{caption}[=v3.5] % hack to get subfloat working from pageage subcaption -- did not help and did not load version 3.5
\usepackage{subcaption} % subfigure in eptcs and ceurart
\usepackage{amsthm}     % load myself for eptcs
%
% Environments not defined in eptcs and ceurart
%
% Define a restatable theorem environment
\usepackage{thmtools}
\declaretheorem[numberwithin=section]{theorem}

\newtheorem{corollary}[theorem]{Corollary}

\newtheorem{definition}[theorem]{Definition}

\newtheorem*{pdconjecture*}{Pushdown Conjecture}

%this makes the following environments upright normalfont instead of italic
%\theoremstyle{remark}
\newtheorem{example}[theorem]{Example}

\title{%
A Unifying Approach to Picture Automata
% --
% \\
% Pictures recognized by DAG automata
}

%%
%% One can fix some overfulls
\sloppy

%%
%% Minted listings support
%% Need pygment <http://pygments.org/> <http://pypi.python.org/pypi/Pygments>
\usepackage{listings}
%% auto break lines
\lstset{breaklines=true}

%%
%% end of the preamble, start of the body of the document source.
\begin{document}

%%
%% Rights management information.
%% CC-BY is default license.
\copyrightyear{2025}
\copyrightclause{Copyright for this paper by its authors.
  Use permitted under Creative Commons License Attribution 4.0
  International (CC BY 4.0).}
\conference{ITAT'25: Information Technologies -- Applications and Theory, September 26--30, 2025, Telg{\'a}rt, Slovakia}

%%
%% The "author" command and its associated commands are used to define
%% the authors and their affiliations.
\author[1]{%\DTMnow{}
Yvo Ad~Meeres}[%
email=Yvo.AdMeeres@mailbox.org,
]
\cormark[1]
\address[1]{%
University of Bremen,
Department of Theoretical Computer Science, Bibliothekstr. 5, 283 59  Bremen, Germany}

\author[2]{Franti{\v s}ek Mr\'az}[%
email=frantisek.mraz@mff.cuni.cz,
]
\cormark[1]
\address[2]{Charles University, Department of Software and Teacher Training, Malostransk\'e n\'am.~25,
118 00 Prague 1, Czech Republic}

%% Footnotes
\cortext[1]{Corresponding author.}

%% Keywords. The author(s) should pick words that accurately describe
%% the work being presented. Separate the keywords with commas.
\begin{keywords}
directed acyclic graph \sep
DAG automaton \sep
picture language
\sep
regular language
%\sep
%two-dimensional automaton \sep
%finite state automaton \sep
%automata theory
\end{keywords}

\newcommand{\titlerunning}{\DTMnow}

\maketitle

\begin{abstract}
%A directed acyclic graph (DAG)
%can encode a two-dimensional string, called a picture.
%We propose a framework for recognizing picture languages via directed acyclic graph automata (DAG automata) by encoding two-dimensional inputs into directed acyclic graphs. There are many ways how to encode a string or a picture into a DAG. We consider input-agnostic encodings, where the underlying DAG depends just on the dimensions of the input, and input-driven encodings, where the in- and out-degree of a node depends on the symbol at the corresponding position in the input string or picture. Three distinct input-agnostic encodings characterize classes of picture languages accepted by returning finite automata (row-wise traversal), boustrophedon automata (bidirectional processing), and online tessellation automata (diagonal tiling). When encoding a string as a simple path from its first to its last symbol, DAG automata recognize just regular languages. On the other hand, input-driven encodings enable recognition of some context-sensitive string languages. In two dimensions, input-driven encodings enable DAG automata to surpass online tessellation automata.
A directed acyclic graph (DAG) can represent a two-dimensional string or picture. We propose recognizing picture languages using DAG automata by encoding 2D inputs into DAGs. An encoding can be input-agnostic (based on input size only) or input-driven (depending on symbols). Three distinct input-agnostic encodings characterize classes of picture languages accepted by returning finite automata, boustrophedon automata, and online tessellation automata. Encoding a string as a simple directed path limits recognition to regular languages. However, input-driven encodings allow DAG automata to recognize some context-sensitive string languages and outperform online tessellation automata in two dimensions.
\end{abstract}

%\tableofcontents\newpage

% stix2 package test: Pitchfork $\pitchfork$

%einput{fig/enc}

%\marginpar{TODO
%- extending locality
%- relax neighborhood but still finite
%- representing pics as graphs for unifying approach
%- learning picture lang \cite{learnpicautomata}
%- add Anis' paper about graphs for image recognition
%}
%
%\cite{daggrammar}
%\cite{DBLP:conf/mol/Drewes17}

A picture can be represented as a rectangular array of symbols. Thus, pictures can be considered as the two-dimensional extension of strings. A picture language is then a set of pictures. While many different automata models that accept two-dimensional inputs were introduced (see \cite{DBLP:reference/hfl/GiammarresiR97}), there is still no class of such automata whose corresponding class of picture languages has properties similar to the class of regular languages.

%A picture is a rectangular array of symbols, extending strings to two dimensions, and a picture language is a set of such pictures. Although various automata models for two-dimensional inputs exist (see \cite{DBLP:reference/hfl/GiammarresiR97,DBLP:journals/jcss/FernauPST18}), none define a class of picture languages with properties comparable to regular languages.

Probably the closest class of picture languages with respect to its properties is the class of recognizable languages (REC), which are languages accepted by two-dimensional online tessellation automata ($\twoOTA$). Actually, each language from REC can be obtained as a projection of a local picture language, where a picture language is local if it is the set of all pictures having all subpictures of dimension two-by-two from a given finite set.

%The class of recognizable languages (REC), accepted by two-dimensional online tessellation automata ($\twoOTA$), is likely the closest to having regular-like properties. Each REC language can be seen as a projection of a local picture language, defined as all pictures whose two-by-two subpictures belong to a given finite set.

DAG automata introduced by
Kamimura and Slutzki in \cite{kamimura-slutzki:81} process single-rooted DAGs,
just like the DAGs we propose for encoding strings and pictures.
%just as will be the DAGs suggested
Discussions at a Dagstuhl Seminar on graph transformations
with insights from Bj{\"o}rklund and Maletti \cite{DagRep.5.3.143}
gave rise to a variant of a DAG automaton subsequently introduced by
Blum and Drewes~\cite{journals/iandc/BlumDrewes2019}.
%This model exhibits favorable properties.
A DAG automaton has a finite set of states, a finite alphabet, and a set of rules. For an input DAG, the DAG automaton assigns states to the edges of the input DAG and checks whether all nodes comply with the rules of the DAG automaton. In the positive case, the automaton accepts the input graph. Otherwise, it rejects. So, the DAG automaton, similarly to a $\twoOTA$, checks only local properties (labeling of the incoming and outgoing edges for each node).

%DAG automata, introduced by Kamimura and Slutzki in \cite{kamimura-slutzki:81}, process single-rooted DAGs like those used for encoding strings and pictures. Inspired by discussions at a Dagstuhl Seminar and work by Blum and Drewes~\cite{journals/iandc/BlumDrewes2019}, a DAG automaton has a finite set of states, an alphabet, and rules. It assigns states to edges and checks if all nodes follow its rules, accepting or rejecting the input graph. Like $\twoOTA$, DAG automata verify only local properties based on edge labeling at each node.

Regarding one-dimensional input, it is easy to represent a word (a string of symbols) as a graph. We can, e.g., take the set of positions in the string as the set of nodes labeled with the corresponding symbols of the string and connect them in a path from the first to the last. In two dimensions, we can represent a picture as a rectangular grid of nodes connected using horizontal and vertical edges. If we use oriented edges, a directed acyclic graph (DAG) representing a given picture, as illustrated in Fig.~\ref{fig:enc}, can be built easily. We call such encodings input-agnostic, as the edges of the encoding DAG do not depend on the contents of the input.

%A graph can represent a word using its positions as nodes labeled with symbols, connected in a path from the first to the last. Similarly, a picture can be represented as a rectangular grid of nodes linked by directed horizontal and vertical edges. Such a DAG encoding is called input-agnostic, since its edges depend only on the input shape, not the content.

Simple input-agnostic DAG encodings for strings yield automata recognizing only regular languages, but for pictures, different input-agnostic encodings yield DAG automata corresponding to known 2D automata like returning finite automata, boustrophedon automata, and online tessellation automata.

We also introduce input-driven encodings of strings and pictures into DAGs, where a node's in-degree and out-degree depend on its label. Thus, DAG edges can vary between different pictures even if they share the same dimensions.

%Input-driven encodings increase the power of DAG
%automata when recognizing both string and picture languages. Input-driven encodings enable recognition of
%some context-sensitive string languages. In two dimensions, input-driven DAG automata accept a proper
%super-class of the class of recognizable languages.

Input-driven encodings enhance DAG automata's power, enabling recognition of some context-sensitive string languages and a proper super-class of REC in two dimensions.

DAG automata have many appealing properties. Blum and Drewes \cite{journals/iandc/BlumDrewes2019} proved that emptiness and
finiteness are decidable for DAG automata, and the class of DAGs accepted by DAG automata is closed
under union and intersection. These properties carry over to the case of accepting encodings of picture
languages.
Deterministic DAG automata can be reduced and minimized, and their equivalence is decidable in polynomial time \cite{journals/iandc/BlumDrewes2019}. Equivalence for deterministic DAG automata with respect to picture language recognition is still an
open problem, as two deterministic DAG automata accepting the same encodings of pictures need not
be equivalent, because they can accept different sets of DAGs that do not encode pictures.
Ad Meeres \cite{AdMeeres2024} provides a framework to port known efficient finite state automata algorithms and
properties from string to DAG languages. By applying these techniques, we could obtain classes of picture
languages that can be efficiently recognized by DAG automata.

DAG automata have many strong properties~\cite{journals/iandc/BlumDrewes2019}: emptiness and finiteness are decidable, and accepted DAG classes are closed under union and intersection, including for picture language encodings. We can minimize deterministic DAG automata, and their equivalence is polynomial-time decidable, though equivalence for deterministic DAG automata with respect to picture language recognition is still open. A recent framework by Ad Meeres \cite{AdMeeres2024} enables adapting efficient finite automata algorithms from strings to DAG languages.

The paper is structured as follows. After establishing the basic notation,
Sect.~\ref{sec:2Dautomata} provides an overview of common automata accepting two-dimensional inputs. Sect.~\ref{subsec:dagautomata} presents a model of a DAG automaton suitable for recognizing pictures due to its limited capabilities. Sect.~\ref{sec:pictureencoding} introduces various encodings of pictures into DAGs. Sect.~\ref{sec:comp} compares DAG automata first to classical string automata to show the capabilities of DAG automata in the 1D case and then to the picture automata for the 2D case. The concluding Sect.~\ref{sec:conclusions} summarizes the obtained results and indicates open problems for future research. The majority of the proofs have been placed in the Appendix. %%% FINAL: change wording
\begin{figure*}[t]
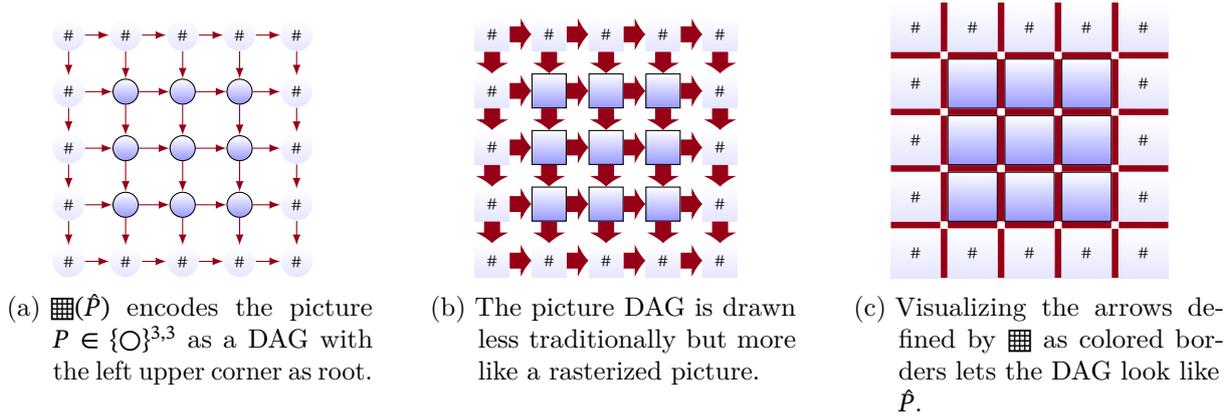

\captionsetup{subrefformat=parens}
\begin{subfigure}{0.3\textwidth}\center
\input{pic/enc-arrow}
\caption{
    $\cooDAG(\hat P)$ encodes the picture $P\in\{\stateW\}^{3,3}$
    as a DAG with the left upper corner as root.\\
    }\label{pic:enc-arrow}
\end{subfigure}
\hfill
\begin{subfigure}{0.3\textwidth}\center
\input{pic/enc-edge}
\caption{The picture DAG is
    drawn less
   %like DAGs are illustrated
    traditionally
    but more like a rasterized picture.\\
  }\label{pic:enc-edge}
\end{subfigure}
\hfill
\begin{subfigure}{0.3\textwidth}\center
\input{pic/enc-touch}
\caption{Visualizing the arrows defined by $\cooDAG$
    as colored borders lets the DAG
    look like %the mere
    %almost like the border picture
    $\hat P$.
  }\label{pic:enc-touch}
\end{subfigure}
\caption{
The DAG encodes a picture $P$ of dimensions $(3,3)$
over the unary alphabet $\Sigma = \{\stateW\}$
as a \emph{picture DAG} with the edges specified
by the \emph{DAG encoding} $\cooDAG$.
A DAG encodes a picture
illustrated with vertices in the form of circles as shown in
%\subref{pic:enc-arrow}.
\subref*{pic:enc-arrow}.
A transition between the traditional illustration of graphs and rasterized pictures is the picture DAG in
\subref*{pic:enc-edge}.
The visualization of the DAG,
omitting the orientation of the arrows is shown in
\subref*{pic:enc-touch}.
}
\label{fig:enc}

\end{figure*}

% For testing subfigures
%\begin{figure}
%\centering
%\begin{subfigure}{0.3\textwidth}
%${CAT_WITH_A}$
%\caption{ This is a cat} \label{cat2}
%\end{subfigure}
%\begin{subfigure}{0.3\textwidth}
%${ELEPHANT_WITH_B}$
%\caption{This is an elephant} \label{elephant2}
%\end{subfigure}
%\captionsetup{subrefformat=parens}
%\caption{Two animals: \subref{cat2} a cat,
%and \subref{elephant2} an elephant}
%\label{animals}
%\end{figure}
%

The paper is structured as follows. After establishing the basic notation,
Sect.~\ref{sec:2Dautomata} provides an overview of common automata accepting two-dimensional inputs. Sect.~\ref{subsec:dagautomata} presents a model of a DAG automaton suitable for recognizing pictures due to its limited capabilities. Sect.~\ref{sec:pictureencoding} introduces various encodings of pictures into DAGs. Sect.~\ref{sec:comp} compares DAG automata first to classical string automata to show the capabilities of DAG automata in the 1D case and then to the picture automata for the 2D case. The concluding Sect.~\ref{sec:conclusions} summarizes the obtained results and indicates open problems for future research.

The majority of the proofs have been placed in the
Appendix~\ref{sec:proofappendix}.
\section{Notation}
\label{sec:notation}
For the positive integers $\mathbb{N}$
with $\mathbb{N}_0 = \mathbb{N}\cup\{0\}$,
%we define
let
$[n]   = \{1,\dots,n\}$ and
$[n]_0 = \{0,\dots,n\}$
where $n\!\in\!\mathbb{N}$.
The cardinality of a set $S$ equals  $|S|$.
A finite set $S$ of \emph{symbols} forms an \emph{alphabet}.
Concatenated symbols yield a \emph{string}.
The concatenation of two strings $u$ and $v$ is written
as $u\cdot v$ or as their juxtaposition~$uv$.
For a string $w=w_1w_2 \dots w_i \dots w_n \in S^*$ with $i \in [n]$,
symbol $w_i \in S$ occurs at \emph{position} $i$ where
$|w|$ denotes the length $n$ of it,
$\lambda$ represents the empty string,
$|w|_\sigma$ gives the number of occurrences of the symbol $\sigma$ in $w$
and $[w]$ denotes the smallest subset $T\subseteq S$ such that $w \in T^*$.
All finite strings \emph{over} $S$ form the set $S^*$,
with any subset of $S^*$ called a \emph{(string) language},
and $S^n \subset S^*$ comprising all strings over $S$ of length $n$.
A doubly ranked alphabet \(\Sigma\) is defined as an alphabet
in which each symbol is assigned a rank
through a function \(r: \Sigma \to \mathbb{N}_0^2\).
%An alphabet $\Sigma$ is doubly ranked if each symbol is assigned a rank
%through a function $r: \Sigma \to \mathbb{N}_0^2$, denoted as $(\Sigma,r)$.
% $[w]$ denotes the smallest set $S$ such that $w\in S^*$.

%TODO: A for set and A for automaton
An \emph{automaton model}~$M$ is a computational model
%TODO: gen only if we use it in the paper
that defines how a specific instance of $M$, an automaton $A$,
\emph{recognizes} a language $L(A)$.
%Two automata $A$ and $A'$ are equivalent denoted by $A\equiv A'$
%  if they recognize or generate the same language, thus $L(A) = L(A')$.
Two automata models $M$ and $M'$ are considered equivalent,
denoted by $M\equiv M'$,
if they recognize the same class of languages.
Throughout the paper,
%an automaton $A$ is defined by a tuple using the symbols
in an automaton definition, the following symbols
do not require further definition when they appear:
$\Sigma$ for an input alphabet excluding the border signal %$\#$,
$\# \notin \Sigma$, %for a special symbol marking border of a picture,
$Q$ for a finite set of states,
$q_0 \in Q$ for an initial state,
$F \subseteq Q$ for a set of final states.
A  finite state automaton ($\FSA$), in its nondeterministic variant ($\NFA$)
(or its deterministic one ($\DFA$)) is defined
as a tuple $A = (Q, \Sigma, \delta, q_0, F)$,
where $\delta \subseteq Q \times \Sigma \times Q$
is the transition relation (or a function in the case of DFA).
The reflexive transitive closure
$\delta^* \subseteq Q \times \Sigma^* \times Q$ is defined by:
(i) for all $q \in Q$, $(q,\lambda,q) \in \delta^*$, and
(ii) if $(q,a,q') \in \delta$ and $(q',w,q'') \in \delta^*$
for $a \in \Sigma$, $w \in \Sigma^*$, then $(q,aw,q'') \in \delta^*$.
If $(q_0, w, q') \in \delta^*$ and $q' \in F$,
$A$ \emph{accepts} $w \in \Sigma^*$.
The automaton $A$ recognizes the language $L(A)$
which is the set of all strings that $A$ accepts.

\section{Computational Models for Two Dimensions}
\label{sec:2Dautomata}

%write sophisticated text that 2-ary this binary relations
%can be viewed as two-dimensional.
Both
rectangular arrays of symbols
and other binary, i.e.\,2-ary, relations
can be regarded as two-dimen\-sional structures.
For both notions of two-dimensionality,
automata have been introduced in the literature.
Picture automata process two-dimensional matrices over an alphabet
(see Sect.~\ref{subsec:picautomata}),
while DAG automata digest finite multisets of an alphabet
with a binary relation defined on them
(see Sect.~\ref{subsec:dagautomata}).
% The literature introduces automata to address
%these two notions of two-dimensionality.
%An n-ary relation can be considered n-dimensional.
%An n-ary relation consists of tuples with n components.
%Each component can be thought of as a dimension,
%meaning that the overall relation occupies an n-dimensional space.
%For example, in a binary relation (2-ary), you can visualize it as a two-dimensional space (like a coordinate system).

\subsection{Picture Automata}
\label{subsec:picautomata}
Inspired from natural languages,
a one-dimensional string is called a word.
Inspired from image processing,
a two-dimensional string or a matrix is called a \emph{picture}.
Both terms are standard in automata theory.
A \emph{picture automaton} recognizes a \emph{picture language}.

We consider the three picture automaton models
\emph{returning} and \emph{boustrophedon finite automaton}
($\RFA$ and $\BFA$, Def.~\ref{def:rfabfa}) %,
and
\emph{two-dimensional online tessellation automaton}
($\twoDOTA$ and $\twoOTA$, Def.~\ref{def:ota}). %,
%and the %sgraffito automaton %(2SA, Def.~\ref{def:sgraffito}).
We consolidate these fundamentally different picture automata
under a single DAG-automaton model,
providing a complete characterization for each individual automaton
while simultaneously unifying them within a common framework,
thereby capturing the full expressive power of
each characterized picture automaton.

Notably, when restricted to one-row input pictures
(i.e. in the one-dimensional case),
each of these picture automaton models
recognizes exactly the regular string languages.

\begin{definition}[(Boundary) Picture]\label{def:pic}
A \emph{two-dimensional string} $P$ over a finite alphabet $\Sigma$ is called a \emph{picture}.
The \emph{dimensions} of a picture $P$ with $m$ rows and $n$ columns are given by the ordered pair
$(m,n)$.
%$\left(\rows(P), \cols(P)\right)$.
The symbol $\Lambda$ denotes the unique
\emph{empty picture} with dimensions $(0, 0)$.
$\Sigma^{m,n}$ represents the set of all pictures
with dimensions $(m,n)$ over $\Sigma$.
The notation $P_{i,j}$ refers to the symbol from $\Sigma$
at the \emph{position} $P(i,j)$ of the picture $P$,
where $i \in [m]$ denotes the \emph{row} counted from top to bottom
and   $j \in [n]$      the \emph{column} counted from left to right:
%that is the intersection of the $i$-th row and $j$-th column in $P$,
%and is called a \emph{pixel}.
\begin{displaymath}
P =
\begin{array}{ccc}
P_{1,1} & \ldots & P_{1,n} \\
\vdots & \ddots & \vdots \\
P_{m,1} & \ldots & P_{m,n}
\end{array}.
\end{displaymath}
Any subset of the set of all pictures over $\Sigma$, denoted $\Sigma^{*,*}$, is called a \emph{picture language}.
Given a picture $P\in \Sigma^{m,n}$, its \emph{boundary picture}
%the remaining elements of $\hat{P}$ are $\#$.
$\hat{P}$ is the picture $P$ itself surrounded by the border symbol $\#$,
thus a picture over $\Sigma \cup \{\#\}$ with dimensions $(m+2,n+2)$
where the rows and columns range over $[m+1]_0$ and $[n+1]_0$, respectively.
Let $\hat{P}_{i,j} = P_{i,j}$,
for all $i \in [m]$ and $j \in [n]$,
and
$\hat{P}_{i,j} = \#$,
% for all $\{i,j\} \cap \{ 0, m+1, n+1\} \neq \emptyset$.
% this would put # in both rows n+1, m+1
for all $i \in [m+1]_0$ and $j \in [n+1]_0$, where $i \in \{0,n+1\}$ or $j\in\{0,m+1\}$.

\def\vshift{0.5}
\def\gapp{1.5}
\def\vertProdLength{2.5}
\def\gappB{2*\gapp + \vertProdLength}
\begin{center}
\parbox[c]{0.2\textwidth}{
\begin{tikzpicture}
\draw (0,0) -- (4,0) -- (4,3-\vshift) -- (0,3-\vshift) -- (0,0);
\draw (0,0.5) -- (4,0.5);
\draw (0,2.5-\vshift) -- (4,2.5-\vshift);
\draw (0.5,0) -- (0.5,3-\vshift);
\draw (3.5,0) -- (3.5,3-\vshift);
\node at (2,1.5-\vshift/2) {$P$};
\node at (0.25,0.25) {$\#$};
\node at (0.25,2.75-\vshift) {$\#$};
\node at (3.75,0.25) {$\#$};
\node at (3.75,2.75-\vshift) {$\#$};
\node at (0.25,0.75) {$\#$};
\node at (0.25,2.25-\vshift) {$\#$};
\node at (0.25,1.65-\vshift/2) {$\vdots$};
\node at (3.75,0.75) {$\#$};
\node at (3.75,2.25-\vshift) {$\#$};
\node at (3.75,1.6-\vshift/2) {$\vdots$};
\node at (0.75,0.25) {$\#$};
\node at (1.25,0.25) {$\#$};
\node at (2.75,0.25) {$\#$};
\node at (3.25,0.25) {$\#$};
\node at (2.0,0.25) {$\ldots$};
\node at (0.75,2.75-\vshift) {$\#$};
\node at (1.25,2.75-\vshift) {$\#$};
\node at (2.75,2.75-\vshift) {$\#$};
\node at (3.25,2.75-\vshift) {$\#$};
\node at (2.0,2.75-\vshift) {$\ldots$};
\end{tikzpicture}}
$\quad =
\quad
\hat P \quad =
\begin{array}{cccccc}
\hat P_{0,0} && \ldots && \hat P_{0,n+1} \\
\hat P_{1,0} & P_{1,1} & \ldots & P_{1,n} & \hat P_{1,n+1} \\
\vdots && \ddots && \vdots \\
\hat P_{m,0} & P_{m,1} & \ldots & P_{m,n} & \hat P_{m,n+1} \\
\hat P_{m+1,0} && \ldots && \hat P_{m+1,n+1}
\end{array}
$
\end{center}

\end{definition}

Both automata models defined below, $\RFA$ and $\BFA$,
were introduced by Fernau, Paramasivan, Schmid and Thomas~%
\cite{DBLP:journals/jcss/FernauPST18}.
%Both $\RFA$ and $\BFA$ 
They
work as a finite state machine
(as a $\DFA$ in the deterministic and an $\NFA$ in the nondeterministic case)
equipped with a scanning strategy
%; c.f. scanning strategy for two-dimensional restarting tiling automata~%
\cite{DBLP:journals/ijfcs/PrusaM13}
which enables the automaton models to operate on two-dimensional input.
%which can be viewed as a total order $\prec$
%an the elements of the two-dimensional input string.
The scanning strategy serializes a boundary picture $\hat P$ into
a string $\hat p$ over $\Sigmasharp$.
Afterward, the picture $P$ is accepted
if the corresponding $\FSA$ accepts this serialized picture $\hat p$.
Both models scan their input picture horizontally line by line,
according to their respective scanning strategy%.
,but
%However,
they differ in their scanning directions.
An $\RFA$ proceeds uniformly from left to right,
this corresponds to the reading of Western texts
where the eye has to `jump'
from the end of a line to the beginning of the next one.
A $\BFA$, on the other hand, proceeds like an ox plowing a field,
alternating its direction.

\begin{definition}[$\RFA$, $\BFA$]\label{def:rfabfa}
Both a (deterministic) \emph{returning finite automaton}      ($\RFA$)
and a  (deterministic) \emph{boustrophedon  finite automaton} ($\BFA$)
are given by a 6-tuple
$ A = ( Q , \Sigma, \delta , q_0, F , \# )$,
where
%%%% $Q$ is the finite set of states,
%%%%  $\Sigma$ is the input alphabet,
$\delta \subseteq Q \times  (\Sigma \cup \{\#\}) \times  Q$
is the transition relation (transition function
$\delta : Q \times  (\Sigma \cup \{\#\}) \to Q$).

Both automata models are equipped with a \emph{scanning strategy},
which defines a total order on the elements of a two-dimensional string.
Formally, for a two-dimensional string $S$ holds
$S(i_1,j_1) \prec S(i_2,j_2)$ if and only if, in case of an
\begin{itemize}
\item $\RFA:$
  $(i_1+j_1 < i_2+j_2) \lor (j_1 = j_2 \land i_1 < i_2)$
  and in case of a
%$(i < k) \lor (i = k \land j < l)$
\item $\BFA:$
$
  (i_1+j_1 < i_2+j_2) \lor
\bigl((i_1+j_1 \mod 2 = 0 \land j_1 > j_2) \lor
     (i_1+j_2 \mod 2 = 1 \land j_1 < j_2)\bigr)
$.
\end{itemize}
With
$M = ( Q , \Sigmasharp, \delta , q_0, F )$ being an $\NFA$ ($\DFA$),
the language $L$ accepted by $A$ is given as
\begin{displaymath}
\begin{array}{r@{\hskip 2pt}l@{\hskip 2pt}l}
L(A) = \{ & P \in \Sigma^{m,n}
\mid
%&
\text{ for all }
i \in [m+1]_0 \text{ and } j \in [n+1]_0
\\
&
s_1 s_2
\text{\dots}
s_k
s_{k+1}
\text{\dots}
s_{(m+2)(n+2)}
\in L(M)
\text{ with }
\hat P_{i_1j_1} \prec \hat P_{i_2j_2}
\text{ where }
s_k     = \hat P_{i_1j_1}
\text{ and }
s_{k+1} = \hat P_{i_2j_2}
&\}\\
\end{array}
\end{displaymath}
using the respective total order $\prec$ as specified above.
\end{definition}

We presented an alternative, yet equivalent, definition
of \cite{DBLP:journals/jcss/FernauPST18},
which serves as a characterization emphasizing both
the treatment of the boundary symbols as well as the scanning strategy.
It is easy to see that our modified definition is correct since it
does not change the expressive power of the automata models:
every state computed within the top or bottom boundary
could be computed within the first, resp. last row of the picture as well
and two consecutive boundary symbols $\#\#$ do not equip
an automaton with expressiveness since it can do the same in one step
while reading the squeezed single $\#$, see below.

Here, we defined scanning strategies on a boundary picture
with all four sides explicitly represented.
The original definition, in contrast,
delimits pictures only on the left and right and even then,
it does not reference all those boundary symbols.
% In \cite{DBLP:journals/jcss/FernauPST18}, both the returning automaton and boustrophedon automaton start from the top left corner of the input picture and ends after scanning the whole last row of the picture.  In our modified definition the boustrophedon automaton starts at the top-right corner of the boundary picture.
Using boundary pictures with the full boundary, on all four sides,
uniformly across all automaton models supports a coherent perspective
and facilitates their systematic comparison in Sect.~\ref{sec:comp}.
By limiting the boundary symbols to acting as a `fence' in the serialized
string, we mimic the behavior of the original definition easily:
\begin{displaymath}
{
\setlength{\arraycolsep}{2pt} % increase space between columns
\begin{array}{cllc}
L(A) = \{ & P \in \Sigma^{m,n}
\mid
%&
\text{ for all }
i \in [m+1] \text{ and } j \in [n+1]
\\
&
s_1 s_2
\text{\dots}
s_k
s_{k+1}
\text{\dots}
s_{m\cdot(n+1)-1}
\in L(M)
\text{ with }
\hat P_{i_1j_1} \prec \hat P_{i_2j_2}
\text{ where }
s_k     = \hat P_{i_1j_1}
\text{ and }
s_{k+1} = \hat P_{i_2j_2}
&\}\\
\end{array}
}
\end{displaymath}
Interpreting the strict order $\prec$ above as a partial order $\preceq$
by regarding consecutive positions labeled with $\#$ as equal,
like $P(i,n+1) = P(i+1,0)$, allows us to take canonical representatives
of the resulting equivalence classes. We take the boundary symbols
of the right border for both models as canonical representatives.
In this way, in above language-defining formula,
$j$ no longer ranges over zero, excluding the left boundary.
E.g., in case of an $\RFA$,
for every pair $P(i,n+1)$ and $P(i+1,0)$,
the canonical representative is $P(i,n+1)$ where $i\in[m+1]_0$.
While the corresponding formalism may seem intricate at first glance,
the underlying idea is entirely natural and aligns closely
with the structure imposed by the scanning strategy,
shown in Sect.~\ref{subsubsec:rfabfa}.
This partial order $\preceq$ mimics the original definition of
$\RFA$ and $\BFA$~\cite{DBLP:journals/jcss/FernauPST18}
where all vertices labeled $\#$ are present,
but, only the canonical representative is referenced in a run
which would be $P(i,n+1)$ for an $\RFA$.

By making the scanning strategy explicit, our characterization
reveals its effect on the family of languages recognized by the models.
As a result of the explicit scanning strategy,
Sect.~\ref{subsubsec:dia} demonstrates that parametrizing it
enables the recognition of a wider range of picture languages.

% \begin{theorem}[RFA $\equiv$ $\BFA$ \cite[Theorem 2]{DBLP:journals/jcss/FernauPST18}]
% TODO: cite with option does not work in theorem name :(
\begin{theorem}[$\RFA$ $\equiv$ $\BFA$ \cite{DBLP:journals/jcss/FernauPST18}]
\label{theorem:rfaeqbfa}
$\RFA$ and $\BFA$ recognize the same picture language family.
\end{theorem}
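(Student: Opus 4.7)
The plan is to prove both inclusions, showing that every picture language recognized by an $\RFA$ is recognized by some $\BFA$ and vice versa, through a direct automaton simulation. The key observation is that the two scanning strategies differ only in how they traverse even-indexed rows: odd rows are scanned left-to-right by both models, whereas even rows are read left-to-right by an $\RFA$ and right-to-left by a $\BFA$. Since the boundary symbols $\#$ serve as explicit row separators, the constructed automaton can use the corresponding border transitions to switch between two simulation modes.

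For the direction $\RFA \subseteq \BFA$, given an $\RFA$ $A = (Q, \Sigma, \delta, q_0, F, \#)$, I would construct a $\BFA$ $A'$ whose state set is $Q \cup (Q \times \mathcal R)$, where $\mathcal R = 2^{Q \times Q}$ is the set of binary relations on $Q$. On odd rows, $A'$ mimics $A$ transition by transition using $Q$-states only. Upon entering an even row from the right, $A'$ records the current state $q \in Q$ and initialises a relation $\rho$ to the identity $\{(p,p) : p \in Q\}$. While reading a symbol $\sigma$ right-to-left, the relation is updated as $\rho \mapsto \{(p, r) : \exists p' \in \delta(p, \sigma),\ (p', r) \in \rho\}$, so that $\rho$ always represents, for each hypothetical starting state $p$ at the left of the row, the states the $\RFA$ could reach after reading the portion of the row that $A'$ has already processed in reverse. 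At the left border of the row, $A'$ extracts $A$'s end-of-row state as any $r$ with $(q, r) \in \rho$, returning to a $Q$-state. The symmetric construction, in which the $\RFA$ simulates the $\BFA$'s right-to-left processing while itself reading left-to-right, handles the direction $\BFA \subseteq \RFA$.

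Correctness is then established by induction on the row index: the invariant is that, after consuming the border symbols separating two consecutive rows, the constructed automaton occupies a $Q$-state equal to the state the simulated automaton is in at the corresponding row boundary. The main obstacle I anticipate is the careful bookkeeping at the row boundaries, where the $\#$ symbols must be consumed in the precise order prescribed by each scanning strategy and must simultaneously trigger the mode switch and the extraction of the end-of-row state from $\rho$. Once the invariant is established through the cases corresponding to left-border, right-border, and corner transitions, preservation of acceptance at the final corner of the picture is immediate, yielding $L(A) = L(A')$ and hence $\RFA \equiv \BFA$.
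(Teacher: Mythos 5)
The paper does not actually prove this statement: Theorem~\ref{theorem:rfaeqbfa} is imported by citation from Fernau et al.~\cite{DBLP:journals/jcss/FernauPST18}, so there is no in-paper argument to compare yours against, and your sketch has to stand on its own. As such it is sound: it is the standard technique of simulating a reversed row scan by carrying the transition relation of the already-processed portion of the row (the same device used for closure of regular languages under reversal), and the update rule you give for $\rho$ is the correct one. Three points should be made explicit before the sketch counts as a complete proof. First, your extraction step (choose any $r$ with $(q,r)\in\rho$) makes the constructed automaton nondeterministic even when the given one is deterministic; for a statement about language \emph{families} this is harmless, but you should either invoke Theorem~\ref{theorem:bfa-deteqnondet} (determinism does not change the family) or note that for a deterministic source $\rho$ is a function and the extraction is itself deterministic. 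Second, in the direction $\BFA\subseteq\RFA$ the simulated $\BFA$ enters an even row from its right end, so the simulating $\RFA$ must carry that entry state across its return jump before it can perform the extraction at the right border; this works precisely because the entry state is already determined when the previous row has been finished, and it is worth saying so, since the availability of the entry state at extraction time is the crux of both directions. Third, the two scanning strategies consume the border symbols $\#$ at different times and positions, and acceptance is tested at different corners depending on the parity of the number of rows; since $\#$ is a fixed symbol, the simulated automaton's $\#$-transitions can be applied virtually at the appropriate moments, so the row-boundary invariant you state survives, but the final-corner case does need this remark rather than being literally ``immediate.''
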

 % #citeproof
\begin{theorem}[det. $\RFA$ $\equiv$ nondet. $\RFA$, det. $\BFA$ $\equiv$ nondet. $\BFA$ {\cite[Lemma 14]{DBLP:journals/jcss/FernauPST18}}]\label{theorem:bfa-deteqnondet}
$\RFA$ and $\BFA$ have the same expressive power in their deterministic
and in their nondeterministic variant.
\end{theorem}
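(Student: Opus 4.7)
The plan is to lift the classical subset construction for finite state automata from strings to the picture setting, exploiting the fact that Def.~\ref{def:rfabfa} factors each of $\RFA$ and $\BFA$ into two independent components: a scanning strategy, which serializes every boundary picture $\hat P$ into a fixed string $\hat p\in(\Sigma\cup\{\#\})^*$, and an underlying $\NFA$ (resp.\ $\DFA$) $M$ that decides membership of $\hat p$ in a string language. The direction ``deterministic $\Rightarrow$ nondeterministic'' is immediate because every transition function is a special case of a transition relation, so the same tuple serves as a nondeterministic $\RFA$ (resp.\ $\BFA$). The substantive direction is to show that any nondeterministic $A$ can be simulated by some deterministic $A'$ with the same scanning strategy.

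First, I would take a nondeterministic $\RFA$ $A=(Q,\Sigma,\delta,q_0,F,\#)$ and read off its underlying $\NFA$ $M=(Q,\Sigma\cup\{\#\},\delta,q_0,F)$ as in Def.~\ref{def:rfabfa}. Applying the standard powerset construction yields an equivalent $\DFA$ $M'=(2^Q,\Sigma\cup\{\#\},\delta',\{q_0\},F')$ with $L(M)=L(M')\subseteq(\Sigma\cup\{\#\})^*$. Then I would define the deterministic $\RFA$ $A'=(2^Q,\Sigma,\delta',\{q_0\},F',\#)$ using the very same scanning strategy $\prec$ as $A$. For any picture $P\in\Sigma^{m,n}$, the serialization of $\hat P$ under $\prec$ produces the same string $\hat p$ regardless of whether it is fed into $A$ or $A'$; hence
\[
P\in L(A)\iff \hat p\in L(M)\iff \hat p\in L(M')\iff P\in L(A').
\]
The same argument verbatim works for $\BFA$ by substituting the boustrophedon order for the returning order, since the construction never inspects the specific order $\prec$, only that the scanning strategy is a fixed total order determined solely by the dimensions of $P$.

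The only point requiring care is the handling of the border symbols $\#$: the powerset $\DFA$ $M'$ processes them deterministically just like any other symbol, so the deterministic $A'$ inherits the ``fence'' behavior on $\#$ automatically. I do not expect this step to pose a real obstacle, because both models treat $\#$ as an ordinary input letter at the level of the underlying $\FSA$. Consequently, the principal work of the proof is really just to make the factorization ``scanning strategy $+$ $\FSA$'' explicit and to observe that the subset construction commutes with this factorization; no picture-specific combinatorics are needed.
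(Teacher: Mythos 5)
The paper does not prove this statement itself; it imports it from Lemma~14 of \cite{DBLP:journals/jcss/FernauPST18}, and your argument is the standard proof of that lemma: since Definition~\ref{def:rfabfa} defines acceptance purely through the underlying $\FSA$ running on the serialization of $\hat P$, which is fixed by the scanning order and the dimensions alone, the powerset construction applied to that $\FSA$ (keeping the same scanning strategy and treating $\#$ as an ordinary letter) yields an equivalent deterministic $\RFA$/$\BFA$, and the converse inclusion is trivial. Your proposal is correct and takes essentially the same route as the cited source.
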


 % #citeproof
%\input{theorem/1dRFAeqL3}
%\marginpar{$(F\smallsetminus Q)$ and set of initial states $I$ -- check source of this copy paste def}
The computation of an online tessellation automaton on a picture $P \in \Sigmastarstar$ consists of associating a state $q_{i,j}
\in Q$ to each position $\hat P(i,j)$
depending only on the two states at the positions above and left from $\hat P(i,j)$.
%Such state is given by the transition relation $\delta$ and depends on the states already associated to positions $P(i-1,j)$ and $P(i,j-1)$ and on the symbol $P_{i,j}$.
%
\begin{definition}[$\twoOTA$, $\twoDOTA$
                   \cite{DBLP:reference/hfl/GiammarresiR97}]
\label{def:ota}
A nondeterministic (deterministic)
\emph{two-dimensional online tessellation automaton} $A$
referred to as $\twoOTA$ ($\twoDOTA$), is defined by
the five-tuple $A = (\Sigma, Q, q_0, F, \delta)$
where %:
%$\Sigma$ is an input alphabet,
%$Q$ is a finite set of states,
%$q_0 \in Q$ is an initial state,
%$F \subset Q$ is a set of final states,
$\delta:Q\times Q\times \Sigma \to 2^{Q}$
($\delta:Q\times Q\times \Sigma \to Q$) is a transition relation (function).
A run of $A$ on an input picture $P \in \Sigma^{*,*}$
is a function assigning a state $q\in Q$
to every position of its boundary picture $\hat P(i,j)$.
The state of a run $\rho$ at position $\hat P(i,j)$
is referenced as $\rho(\hat P(i,j)) = q_{i,j}$.
A run $\rho$ of $A$ on a picture $P$
assigns the initial state $q_0$ to all positions
in the first column and the first row of its boundary picture $\hat{P}$.
That is, $q_{0,j} = q_{i,0} = q_0$,
for all $i \in [m+1]_0$ and $j \in [n + 1]_0$.
If $A$ is deterministic,
$\rho$ assigns a state $q_{i,j}$ to each position $P(i,j)$,
for $i \in [m]$ and $j \in [n]$ with
$q_{i,j} = \delta(q_{i-1,j}, q_{i,j-1}, P_{i,j})$;
if $A$ is nondeterministic, $\rho$ guesses all states such that
$q_{i,j} \in  \delta(q_{i-1,j}, q_{i,j-1}, P_{i,j})$.
It accepts $P$, if there exists a run $\rho$ such that $q_{m,n} \in F$.
\end{definition}

Here, we are interested in DAG automata. Blum and Drewes \cite{journals/iandc/BlumDrewes2019} showed that DAG automata have many desirable properties. The class of languages accepted by DAG automata is closed under union and intersection, but not under complementation. Emptiness and finiteness for languages accepted by DAG automata are decidable in polynomial time. Deterministic DAG automata can be minimized and tested for equivalence in polynomial time.

A deterministic DAG automaton ($\DDA$) for strings encoded as strings DAGs corresponds exactly to an  $\NFA$ that can only guess that it is at the end of the string. A DAG automaton is aware of the termination of the input earlier than a string automaton because it can infer that a symbol is the last one when it detects no outgoing edges, unlike a string automaton, which only realizes the word has ended after the last symbol has already been consumed. Thus, apart from this nondeterministic guessing of the end of a string, a $\DDA$ acts on a string DAG like a $\DFA$ on a string. However, we will add start and end markers to simplify the proof. Then, a one-dimensional $\DDA$ works precisely the same way as a $\DFA$.
\begin{theorem}[\cite{DBLP:journals/jcss/FernauPST18,DBLP:reference/hfl/GiammarresiR97}]
\label{theorem:1dpicAeqFSA}
The picture automata $\RFA$, $\BFA$, $\twoDOTA$, and $\twoOTA$
run on one-dimensional pictures $P\in \Sigma^{m,1}$, thus on strings,
recognize exactly the regular string languages.
\end{theorem}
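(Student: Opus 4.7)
The plan is to prove both inclusions for each of the four models by exhibiting mutual simulations with finite state automata. In one direction, for every regular language $L \subseteq \Sigma^*$, I would take a $\DFA$ (or $\NFA$) accepting $L$ and construct a picture automaton of each kind that accepts exactly $\{P\in\Sigma^{m,1} : P_{1,1}\cdots P_{m,1} \in L\}$. In the other direction, I would show that each of $\RFA$, $\BFA$, $\twoDOTA$, $\twoOTA$ when restricted to inputs in $\Sigma^{m,1}$ can be simulated by an $\FSA$ reading some canonical string over $\Sigmasharp$ derived from the one-column boundary picture.

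For $\RFA$ and $\BFA$, the argument is almost immediate from the definitions. When $n=1$, both scanning strategies $\prec$ in Def.~\ref{def:rfabfa} reduce to the single total order on $\hat P(i,j)$ with $i\in[m+1]_0$, $j\in\{0,1,2\}$, traversed row by row; the alternation of direction in the $\BFA$ case is vacuous because each row has only one interior cell. Hence the serialization $s_1 s_2 \cdots$ is a fixed string $\# \#\# P_{1,1}\# \# P_{2,1}\# \cdots \# P_{m,1}\# \#\#$ (or its canonical-representative variant), and $A$ accepts $P$ iff its underlying $\NFA$ $M$ over $\Sigmasharp$ accepts this string. Since the extra $\#$-symbols are input-independent, a standard product construction with a fixed regular ``padding language'' over $\Sigmasharp$ shows that the class of accepted one-column languages is exactly the regular languages over $\Sigma$; conversely, any $\DFA$/$\NFA$ for a regular $L$ is easily extended to consume the fixed $\#$-padding and used as the $M$ inside a deterministic resp.\ nondeterministic $\RFA$ or $\BFA$.

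For $\twoOTA$ and $\twoDOTA$, the key observation is that on a single-column input the left-neighbor state at every interior position is $q_0$, because the entire first column of $\hat P$ is assigned $q_0$ by definition. Hence $\delta(q_{i-1,1}, q_{i,0}, P_{i,1}) = \delta(q_{i-1,1}, q_0, P_{i,1})$ depends only on the state immediately above and on the current symbol. Defining $\delta'(q,a) = \delta(q,q_0,a)$ (nondeterministic or deterministic accordingly) and keeping the same initial and final states yields an $\NFA$ (resp.\ $\DFA$) whose computation on $P_{1,1}\cdots P_{m,1}$ coincides step-by-step with the $\twoOTA$'s (resp.\ $\twoDOTA$'s) column assignment, with acceptance determined by $q_{m,1}\in F$; the converse direction, simulating a given $\FSA$ by a one-column $\twoOTA$/$\twoDOTA$, uses the same identification.

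None of these simulations presents a real obstacle; the only technical care needed is with the boundary treatment in Def.~\ref{def:rfabfa}, where the scanning strategy and the corner $\#$-cells must be handled consistently, and with showing that the $\FSA$ $M$ can be chosen so that the fixed padding of $\#$-symbols does not restrict the language recognized over $\Sigma$. Both are routine: for the first, one verifies that the canonical-representative refinement at the end of Sect.~\ref{subsec:picautomata} collapses consecutive $\#$-positions into a single transition step; for the second, one builds $M$ so that it begins by consuming the required prefix of $\#$-symbols into a fresh initial state and ends by consuming the required suffix before entering $F$.
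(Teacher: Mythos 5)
Theorem~\ref{theorem:1dpicAeqFSA} is stated in the paper as a known result imported from the cited references; the paper itself contains no proof of it, so there is nothing to compare your argument against internally. On its own merits, your proof is correct and is the standard argument: for $\RFA$/$\BFA$ on one-column inputs the serialization is the input string interleaved with input-independent $\#$-padding, so regularity is preserved in both directions, and for $\twoOTA$/$\twoDOTA$ the left neighbor of every interior position lies in column $0$ of $\hat P$ and thus always carries $q_0$, so $\delta'(q,a)=\delta(q,q_0,a)$ turns the column computation into an ordinary $\NFA$/$\DFA$ run, with the converse simulation obtained by ignoring the left argument. The only imprecision is the claim that the $\BFA$'s alternation is ``vacuous'': the serialized strings for $\RFA$ and $\BFA$ differ in how the $\#$-cells of a row are ordered around the single interior symbol, but since the interior symbols still appear in fixed top-to-bottom order with input-independent padding, your padding/product argument goes through unchanged.
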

 % #citeproof

\subsection{Graph Automata}
\label{subsec:dagautomata}
A graph is a set with a 2-ary relation on it.
Consequently, we introduce DAG-automata for parsing two-dimensional strings.
We limit our focus to DAGs due to the inherent orientation of binary tuples, and the acyclic restriction offers algorithmic benefits.
We choose the most limited model of a DAG automaton defined in the literature
\cite{journals/iandc/BlumDrewes2019}
to ensure that the model is not overly powerful.
It operates on multigraphs with ordered edges.
%This model operates on multigraphs with ordered edges.

\begin{definition}[DAG]\label{def:dag}
A \emph{directed acyclic graph over $\Sigma$}, abbreviated as \emph{DAG},
is a tuple $\dagDef$
with $\,\Sigma$, $V\!$, and $E$ being disjoint finite sets,
\emph{alphabet of vertex labels}, the sets of \emph{vertices} and \emph{edges},
respectively.
The vertices are labeled by $\lab \colon V \rightarrow \Sigma$.
An \emph{edge} $e \in E$ joins two distinct vertices $v, w \in V\!$.
The \emph{source} $v$ is referenced by $\src(e)$ and
the \emph{target} $w$ by $\tar(e)$\footnote{Note that $E$ can contain multiple edges connecting the same pair of nodes. Such distinct edges with same source and same target can even cross since the sources and targets are ordered.}.
By $\IN, \OUT \colon V \rightarrow E^*$, we assign to each vertex $v \in V$
its \emph{incoming} and \emph{outgoing} edges
such that
$\src(e) = v \Leftrightarrow e \in [\OUT(v)]$
and
$\tar(e) = v \Leftrightarrow e \in [\IN(v)]$.
%Frantishek:
%Here, $[s]$, for a string $s$, denotes the set of elements of $s$.
%outcommented by yvo since already in notations
These edges are ordered as specified
by the strings $\IN(v)$ and $\OUT(v)$.
For the graphical representation of a graph,
each vertex is drawn as a circle.
The ingoing edges are placed along the upper half of the circle and
the outgoing edges along the lower half, arranged
from left to right according to the strings $\IN(v)$ and $\OUT(v)$.%
\footnote{Hence, the ingoing edges appear clockwise and
the outgoing edges counterclockwise around the circle.}
For every sequence $e_1 e_2 \cdots e_n$
of edges $e_1,\dots,e_n\in E$ in a DAG
with $n\in\mathbb N$, it holds that $v_0 \neq v_n$ in
$\{\src(e_i),$ $\tar(e_i)\}= \{v_{i-1}, v_i\}$ for all $i\in[n]$,
meaning that the DAG is \emph{acyclic}.
The \emph{empty graph} $\emptygraph$ is the graph with $V=\emptyset$.
A vertex is called a \emph{root}
if $\IN(v) = \lambda$
and a \emph{leaf}
if $\OUT(v) = \lambda$.
% moved and finegrained to degenc, no cannot move since needed before sec 4
A DAG $S = (\{v_0,\dots,v_n\}, \{e_1,\dots,e_n\}, \lab, \IN, \OUT)$
\emph{encodes} a string $\sigma_0 \sigma_1 \dots \sigma_n$
as a \emph{string DAG} if
$\ell(v_i) = \sigma_i$ for $i\in[n]_0$,
$\IN(v_i) = e_{i}$ for $i\in[n]$ and
$\OUT(v_{i-1}) = e_i$ for $i\in[n]$.
%$|\IN(v)| \leq 1$ and $|\OUT(v)| \leq 1$ for all vertices $v \in V$.
\end{definition}

\begin{definition}[DAG Automaton]
\label{def:dagautomaton}%
%TODO: add empty graph to language of DAG automaton
%in order to be in line / consistent with regular string languages

A \emph{DAG automaton} is a triple $A = (Q, \Sigma, R)$,
% outcommented alphabet and states since defined in notations ∀ automata:
% where $Q$ is a finite set of states, $\Sigma$
% is a finite alphabet and
where
$R$ is a finite set of \emph{rules}.
A state $q\in Q$ may also be called an \emph{edge label}.
A rule $r \in R$ is either of the form $\drule$,
where $\sigma \in \Sigma,$ or of the form $\emptygraph$\footnote{%
Adding the empty graph to the rule set allows consistency
with the string languages
but differs from the definition of DAG automata in~%
\cite{journals/iandc/BlumDrewes2019}.}
and the \emph{head} $\alpha$ and the \emph{tail} $\beta$ are elements of $Q^*$.
% the following addition is wrong and must be omitted
%    such that in $\alpha$ and $\beta$ any state can occur at most once.
% TODO yvo check this
A \emph{run} of $A$ on a DAG $G = (V, E, \lab, \IN, \OUT)$
is a mapping $\rho\colon E \rightarrow Q$,
extended to strings $\rho\colon E^* \rightarrow Q^*$ by
applying $\rho$ component-wise to every edge $e \in E$,
such that, for every vertex $v \in V$, $\left(\rho(\IN(v))\rarrow{\lab(\mbox{$v$})}\rho(\OUT(v))\right) \in R$.
%\footnotetext{Recall that we extend functions such as $\rho$ in the canonical way to sequences and sets.}
$A$ \emph{accepts} a nonempty DAG $G$ if such a run exists, and $A$ accepts the empty DAG $\emptygraph$ if $\emptygraph$ is in $R$.
The \emph{DAG language recognized by $A$} is
$L(A)=\{\,G\mid A\text{ accepts } G \text{ and } G \text{ is connected}\}$.

The DAG automaton $A$ is called
\emph{top-down deterministic} if
%\emph{top-down (bottom-up) deterministic} if
for every fixed combination of $\sigma \in \Sigma$, $\alpha \in Q^*$,
and $b \in \mathbb{N}_0$ there exists at most one
$\beta\in Q^b$ such that $\drule \in R$.\footnote{Intuitively,
this means that for a vertex
with fixed vertex label, fixed degree, and known ingoing edge labels
only one labeling of the outgoing edges is allowed by $R$.
In such a case, an input DAG permits at most one run, and this run can be constructed deterministically from the roots downward (the leaves upward).
}
A \emph{rule cycle}\footnote{%
See \cite{AdMeeres2024} for a more formal definition.}
is a ring of rules
(a sequence where the first rule is adjacent to the last)
where in each pair of adjacent rules a state $q\in Q$ is connected
to the same state $q$, once in a head and once in a tail.
%Stronger, it is called \emph{vertex deterministic} if,
%for every vertex label $\sigma\in\Sigma$
%and integers $a,b \in \mathbb{N}_0$, there exist
%at most one pair $(\alpha,\beta)$, such that $\alpha \in Q^a$,
%$\beta \in Q^b$ and $\drule \in R$.\footnote{Note,
%that this means top-down and bottom-up determinism.}
%
%The DAG language accepted by $A$ is then said to
%be top-down / bottom-up / vertex deterministic.
%be top-down / bottom-up / vertex deterministic.
We abbreviate a top-down deterministic DAG automaton as $\DDA$
and a nondeterministic one as $\NDA$.

\end{definition}

For a DAG automaton $A$, let $L_{\mathrm{onerow}}(A)$ denote the set of string DAGs accepted by the DAG automaton $A$, and $\mathcal{L}_{\mathrm{onerow}}(DAG) = \{ L_{\mathrm{onerow}}(A) \mid A \mbox { is a DAG automaton} \}$ is the class of string languages that encoded as strings DAGs are accepted by DAG automata.
%\begin{theorem}\label{theorem:1dDDAeqL3}
%\input{theorem/1dDDAeqL3} % #moved2appendix
%\end{theorem}
\begin{restatable}{theorem}{theoremdDDAeqL}
\label{theorem:1dDDAeqL3}%

%\begin{theorem}[for string DAGs: $\DDA\equiv\DFA$]\label{theorem:1dDDAeqL3}
%\end{theorem}
%\begin{proof}
%\newcommand\AW{A_{DFA}}
%\newcommand\AN{A_{NFA}}
%\newcommand\AD{A_{DDA}}
%% TODO compl result:Then without exponential blowup construct the set of $F$ for a $\DFA$???.
%Given a regular language $L_3 \in \mathcal L_3$.
%By assumtion,
%for every regular string language
%we have both a $\DFA$ and a $\DDA$ recognizing it.
%Due to closedness under quotient operation,
%we can add start and end markers to a regular language,
%thus $L=\{\#w\# \mid w \in L_3\}$.
%A $\DFA$ recognizes a regular language thus given a $\DFA$
%$\DFA$ $\AW = \dfadefNFA$ with $L(\AW)=L$,
%we show the one-to-one correspondence to a $\DDA$
%$A_D = \dagadefQ$:
%\begin{itemize}
%%\item The alphabet $\Sigma$ is the same for both automata.
%%$Q_W = Q_D\cup\{\lambda_\alpha\}\cup\{\lambda_\beta\}$ % two \cup instead of comma in case $\lambda_\alpha=\lambda_\beta$
%\item
%$Q_D = Q_W \setminus\{\lambda_\alpha,\lambda_\beta\}$
%\item $(\alpha,\sigma,\beta) \in \delta \iff \drule \in R$
%		for $\alpha,\beta\notin
%		\{\lambda,\lambda_\alpha,\lambda_\beta\}$
%\item $(\lambda_\alpha,\#,\beta) \in \delta \iff (\lambda\sharparrow\beta) \in R$
%\item $(\alpha,\#,\lambda_\beta) \in \delta \iff (\alpha\sharparrow\lambda) \in R$
%%\item $q_0=\lambda_\alpha$
%%\item $F=\{\lambda_\beta\}$.
%\end{itemize}
%
%\end{proof}
%
%\hr
%\begin{theorem}\label{theorem:1dDDAeqL3}
$\mathcal{L}_{\mathrm{onerow}}(DAG) = Reg$, where $Reg$ is the class of regular languages.
%\end{theorem}
 % #moved2appendix
\end{restatable}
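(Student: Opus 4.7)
The plan is to prove both inclusions of the equality $\mathcal{L}_{\mathrm{onerow}}(DAG)=Reg$.

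For $Reg\subseteq\mathcal{L}_{\mathrm{onerow}}(DAG)$, I take any NFA $N=(Q,\Sigma,\delta,q_0,F)$ for the target regular language and construct a DAG automaton $A=(Q,\Sigma,R)$ whose edge labels coincide with the states of $N$. The intended invariant is that the edge $e_i$ between $v_{i-1}$ and $v_i$ in the string DAG of $\sigma_0\cdots\sigma_n$ carries the state $N$ has reached after reading $\sigma_0\cdots\sigma_{i-1}$. This dictates three families of rules, one per vertex shape occurring in a string DAG: root rules $(\lambda\sigmaarrow q)\in R$ whenever $(q_0,\sigma,q)\in\delta$; internal rules $(q\sigmaarrow q')\in R$ whenever $(q,\sigma,q')\in\delta$; and leaf rules $(q\sigmaarrow\lambda)\in R$ whenever $(q,\sigma,q')\in\delta$ for some $q'\in F$. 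The boundary cases are handled by adding $(\lambda\sigmaarrow\lambda)$ for each single-symbol word accepted by $N$ (needed for the isolated-vertex DAG) and by including $\emptygraph$ in $R$ exactly when $q_0\in F$. A straightforward induction on string length then gives a bijection between accepting runs of $N$ on $w$ and runs of $A$ on the string DAG of $w$.

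For $\mathcal{L}_{\mathrm{onerow}}(DAG)\subseteq Reg$, I start from a DAG automaton $A=(Q,\Sigma,R)$ and build an NFA $N$ simulating its behaviour on string DAGs. Because every vertex of a string DAG has in- and out-degree at most one, the only rules of $R$ that can ever fire are those with head and tail of length at most one; all rules of higher arity can simply be discarded. The states of $N$ are $Q$ extended with a fresh initial state $q_\star$ and a designated accepting state $q_\phi$. Transitions are read off the surviving rules: $(q_\star,\sigma,q)$ for every root rule $(\lambda\sigmaarrow q)$, $(q,\sigma,q')$ for every internal rule $(q\sigmaarrow q')$, and $(q,\sigma,q_\phi)$ for every leaf rule $(q\sigmaarrow\lambda)$. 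The isolated-vertex DAGs and the empty graph are treated as before: a rule $(\lambda\sigmaarrow\lambda)\in R$ contributes the transition $(q_\star,\sigma,q_\phi)$, and $\emptygraph\in R$ makes $q_\star$ accepting. Again an induction on the number of vertices shows that runs of $A$ on a string DAG encoding $w$ correspond exactly, edge by edge, to accepting runs of $N$ on $w$.

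The main obstacle is primarily the clean treatment of the three boundary situations: the empty word (handled via $\emptygraph\in R$), single-symbol words (handled via $(\lambda\sigmaarrow\lambda)$), and distinguishing those rules of $R$ that act like string transitions from those that are inert on string DAGs for arity reasons. The simplification announced in the excerpt, namely augmenting string DAGs with explicit start- and end-marker vertices, removes the isolated-vertex corner case and makes every string DAG uniform, so that the translation becomes a literal rule-for-transition correspondence in which a $\DDA$ matches a $\DFA$ exactly. Once this bijection between runs is in place, both inclusions follow immediately.
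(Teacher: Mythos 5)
Your proposal is correct and follows essentially the same route as the paper's own proof: a literal rule-for-transition correspondence, turning a DAG automaton into an $\NFA$ with fresh initial and final states for the one inclusion, and reading DAG rules off the transitions of a finite automaton (the paper uses a $\DFA$, you an $\NFA$ — immaterial) for the other. The only difference is cosmetic: you additionally spell out the single-vertex and empty-graph corner cases, which the paper's appendix proof glosses over.
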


\section{Encoding Strings and Pictures as DAGs}
\label{sec:pictureencoding}

How to encode a picture as a graph?  The obvious idea is
to represent all positions in a picture as vertices of a DAG.
We call this a picture DAG.
However, this gives us an arbitrary, even possibly disconnected DAG.

\begin{definition}[Picture DAG]\label{def:picdag}
A \emph{picture DAG} $G = (V, E, \ell, \IN, \OUT)$
encodes a two-dimensional string $S$
with positions $S(i,j)$ and $i,j\in\mathbb N$
by
interpreting the positions within $S$
as the uniquely referenced vertices of $G$ with
$V = \{ S(i,j) \mid i \in [m] \text{ and } j \in [n] \}$,
labeled as $\ell(S(i,j)) = S_{i,j}$.
The set of edges is arbitrary
(i.e., neither $\IN$, $\OUT$ nor $E$, is specified)
unless specified by a picture-to-DAG encoding.
The set of all picture DAGs over an alphabet $\Sigma$
given by the context
is denoted as $\picdags = \{G \mid G \mbox{ encodes a picture from } P\in\Sigmastarstar\}$;
the set of the boundary pictures is
$\hat\picdags = \{G \mid G \mbox{ encodes a picture }  \hat P, \mbox{ where } P\in\Sigmastarstar\}$.
\end{definition}

This definition implies that a picture DAG $G$
encodes a picture $P$ with dimensions $(m,n)$
as a DAG with the set of vertices
$V = \{ P(i,j) \mid i \in [m] \text{ and } j \in [n] \}$,
and its boundary picture $\hat P$ through
$V = \{ \hat P(i,j) \mid i \in [m+1]_0 \text{ and } j \in [n+1]_0 \}$.

A DAG encoding specifies the set of edges for a picture DAG.
One possible DAG encoding of a picture is illustrated in Figure~\ref{fig:enc}.
A DAG encoding determines the dependencies the automata can use while processing a picture.

\begin{definition}[Picture-to-DAG Encoding]
\label{def:encdag}
A \emph{picture-to-DAG encoding} specifies the edges for a picture DAG.
It is a function
  $\aDAG:
(\Sigmasharp)^{*,*} \cup \allpicdags
  \to \allpicdags$
%it is obtained from a picture by one of the DAG encodings defined below.
that maps its input to a picture DAG $\dagdef$
by specifying the set of edges
$E$ and the mappings $\IN$ and $\OUT$ for $G$.
The argument of the function $\aDAG$ can be a (boundary) picture (the resulting DAG will have nodes corresponding to its positions) or a picture DAG (its original set of edges will be replaced with a new set of edges).

The picture-to-DAG encoding specifies the edges either uniquely, resulting in one specific DAG $G$,
or partially, which allows for several distinct DAGs,
but the encoding maps the picture (DAG)
to an arbitrary but fixed DAG $G$ in this case.
The symbol $\aDAG$ denotes the abstract DAG encoding
without edges specified.
It stands for a specific DAG encoding defined below.
\end{definition}

Note, this definition means that the picture-to-DAG encoding
does not affect the presence of a boundary.
A picture $P$ is mapped to a picture DAG $G\in\picdags$ without boundary,
whereas a boundary picture $\hat P$ is mapped
to a picture DAG $\hat G\in\hatpicdags$ encoding additionally the boundary.
The same holds for picture DAG input: the encoding does preserve the membership in $\picdags$ or $\hatpicdags$.

Picture automata use different scanning strategies,
like row-by-row or diagonal-by-diagonal.
These scanning strategies
correspond roughly to the `wiring' of the vertices in a DAG.
However, not exactly, since scanning strategies impose
a total order on the pixels of a picture,
whereas a DAG imposes only a partial order on its vertices.
All of the following picture-to-DAG encodings specify the sets of edges that depend only on the dimensions of input pictures; therefore, we call them \emph{input-agnostic}.

\begin{definition}[Input-agnostic picture-to-DAG encodings]\label{def:blindedgesets}
The \emph{input-agnostic DAG encoding} maps its input to a unique DAG
$\dagdef$.
In case the input is a picture DAG  $G' = (V, E', \ell, \IN', \OUT')$,
the edges $E'$, as well as their ordering and wiring
specified by $\IN'$ and $\OUT'$ are replaced by the DAG encoding.
This allows for reencoding with another DAG encoding.
The following sets of \emph{input-agnostic edges} depend only
on the dimensions $(m, n)$ of the picture $P$ to be encoded,
but not on its symbols.
\\
For
$i \in [m]$,
$j \in [n]$,
$i_0 \in [m]_0$,
$j_0 \in [n]_0$,
$\hat i,\hat k \in [m+1]_0$
and
$\hat j,\hat l \in [n+1]_0$, we define a set of edges.
%as follows:
    \begin{itemize}
    \item
      $\edownleft =
      \left\{ e \mid \src(e) = \hat P(i_0,0), \tar(e) = \hat P(i_0+1,0)\right\}$
      are downward along the left border.
    \item
      $\edownright =
      \left\{ e \mid \src(e) = \hat P(i_0,n+1), \tar(e) = \hat P(i_0+1,n+1)\right\}$
      are downward the right border.
    \item
      $\hat E_\downdownarrows =
      \left\{ e \mid \src(e) = \hat P(i_0,\hat j),
                     \tar(e) = \hat P(i_0+1,\hat j)
      \right\}$
      are the vertical edges oriented downwards.
    \item
      $\ErightrightArrows =
      \left\{ e \mid \src(e) = \hat P(\hat i,j_0), \tar(e) = \hat P(\hat i,j_0+1)\right\}$
      are horizontal oriented to the right.
    \item
      $\Erfa =
      \left\{ e \mid \src(e) = \hat P(i_0,n+1), \tar(e) = \hat P(i_0+1,0)\right\}$
      are the `return' edges for the $\RFA$.
    \item
      The horizontal edges oriented alternatingly
      to the left and to the right for the $\BFA$ are
      \\
      $\begin{aligned}
      \ErightleftArrows =
      &\left\{ e \mid \src(e) = \hat P(\hat i,j_0), \tar(e) = \hat P(\hat i,j_0+1),
      i \text{ is odd } \right\}\\% right to left
      \cup %TODO spacing
      &\left\{ e \mid \tar(e) = \hat P(\hat i,j_0), \src(e) = \hat P(\hat i,j_0+1),
      i \text{ is even } \right\} % left to right
      \end{aligned}$
      \\
    %\item
      $\begin{aligned}
      \Ebfa =
      &\left\{ e \mid \src(e) = \hat P(i_0,0), \tar(e) = \hat P(i_0+1,0),
      i \text{ is odd } \right\}\\
      \cup
      &\left\{ e \mid \src(e) = \hat P(i_0,n+1), \tar(e) = \hat P(i_0+1,n+1),
      i \text{ is even} \right\}
      \text{ are vertical $\BFA$ edges.}
      \end{aligned}$% right to left
    \item
      $\hat E_{\sesearrow}=
      \left\{ e \mid \src(e) = \hat P(i_0,j_0), \tar(e) = \hat P(i_0+1,j_0+1)\right\}$
      are the diagonals to the south east.%, thus right below
      %\left\{ e \mid \src(e) = \hat P(\hat i,\hat j), \tar(e) = \hat P(\hat k,\hat l) \text{ with } \{\hat i,\hat j, \hat k,\hat l\} \cap \{0,m+1,n+1\}\neq\emptyset\right\}$
%%%    \item
%%%      $\erightrightArrows =
%%%      \left\{ e \mid \src(e) = P(i,j), \tar(e) = P(i,j+1)\right\}$
%%%      are the horizontal edges oriented to the right.
    %\item
    %  $E_{\swswarrow}=
    %  \{ e \mid \src(e) = \hat P(i,j), \tar(e) = \hat P(i+1,j-1)\}$
    %  are the diagonal neighbours to the south west.%, thus left below
    \end{itemize}
Now we define several picture-to-DAG encodings by specifying their set of edges.
We assume the corresponding $\IN$ and $\OUT$ functions are defined in a natural way
such that the edges do not cross.
Furthermore, we define
$\Eall = \left\{ e \mid \src(e) = u, \tar(e) = v, u \neq v \text{ for all } u,v\in V\right\}$
as the set of all possible edges for a picture,
in order to get rid of the edges connected to the boundary in case
the picture-to-DAG encoding maps a picture without a boundary to a DAG.
\begin{itemize}
\item
$\linlDAG$ defines
  $
  E =
  \edownleft
  \cup
  \ErightrightArrows
  \cap \Eall
  $
   joining the leftmost vertices %downwards
  and all vertices in each row.%from left to right.
\item
$\linrDAG$ defines
  $
  E =
  \ErightrightArrows
  \cup
  \edownright
  \cap \Eall
  $
   joining the rightmost vertices %downwards
  and all vertices in each row.%from left to right.
\item
$\linDAG$ defines
  $ E = \edownleft \cup \ErightrightArrows \cup \edownright \cap \Eall$
  %  joining both the leftmost and rightmost vertices downwards
  % and all vertices in a row from left to right.
   joining the left and the right border and the rows.
\item
$\rfaDAG$ defines
  $
  E =
  \ErightrightArrows
  \cup
  \hat E_{\text{rfa}\swarrow}
  \cap \Eall
  $
  as the $\RFA$ scanning strategy.
\item
$\bfaDAG$ defines
  $
  E =
  \ErightleftArrows
  \cup
  \hat E_{\downarrow\text{bfa}\downarrow}
  \cap \Eall
  $
  as the $\BFA$ scanning strategy.
%\item
%$\firDAG$ defines
%  $
%  E =
%  E_\downarrow
%  \cup
%  E_\rightleftarrows
%  $
%   joining the leftmost vertices downwards
%  and all vertices in a line with alternating orientation.
%  %from left to right.
\item
$\cooDAG$ defines
  $
  E =
  \ErightrightArrows
  \cup
  \hat E_\downdownarrows
  \cap \Eall
  $
  as a vertex pointing to two of its neighbours: below and right.
\item
$\diaDAG$ defines
  $
  E =
  %TODO
  %\cup
  \hat E_{\sesearrow}
  \cap \Eall
  $ as
  every vertex points to
  one neighbour, the one right below.
%\item
%$\diaDAG$ defines
%  $
%  E =
%  %TODO
%  %\cup
%  E_{\swswarrow}
%  $,
%  every vertex points to
%  only one neighbour, namely to the one left below.
\end{itemize}
%- TODO: We extend the mappings to picture languages and picture language families.
%- TODO: STRIPPING OFF edges
%The edges are ordered by $\IN$ and $\OUT$
%in the natural way such that they do not cross. (TODO more formally)
\end{definition}

\noindent
Instead of using the total order of a scanning strategy
or partial orders to define input-agnostic edges, we introduce
edges depending on the input symbol at the picture's positions,
called \emph{input-driven}.

% idea Frantishek: use superosition as notation for blind + driven
\begin{definition}[Input-Driven picture-to-DAG encoding]\label{def:inputdriven}
Let $G' = (V, E', \ell, \IN', \OUT')$ be a picture DAG over $\Sigma$,
either encoding the input picture $P$ or $\hat P$, with
$E' = \emptyset$ and $\IN' = \OUT' = \lambda$,
or given as input itself.
Then, the \emph{input-driven DAG encoding} maps
$G'$ to an arbitrary but fixed DAG
$\dagdefi$ \emph{over the doubly ranked alphabet} $(\Sigma$, r).
The edge set $\einputdriven$ denotes the set of \emph{input-driven edges}.
The ranks $r(\ell(v)) = (r(\ell(v))_1,r(\ell(v))_2)$
of the labels' vertices, with the vertices $v \in V$,
determine the input-driven edges:
% Here, $r(\sigma)_1$ determines the number of ingoing input-driven edges,
% and $r(\sigma)_2$ the number of outgoing input-driven edges,
% with $\sigma \in \Sigma$,
%can be constructed as follows.
%If the input-agnostic DAG encoding determined the input-agnostic edges
\begin{itemize}
\item
  $\IN(v) = \IN'(v) \cdot
  e$ with %distinct edges
  $e\in \einputdriven^*$ and $|e|=r(\sigma)_1$
  and all edges in $e$ are pairwise distinct
  and
\item
  $\OUT(v) = \OUT'(v) \cdot
  e$ with %distinct edges
  $e\in \einputdriven^*$ and $|e|=r(\sigma)_2$
  and all edges in $e$ are pairwise distinct.
  %and every edge in $e$ is distinct
  %e_1 \cdot e_2 \cdot$ \dots $\cdot e_o$
  %where $o=r(\sigma)_2$ and
  %$e_1,e_2,\dots, e_i \in \einputdriven$
\end{itemize}
The input-driven picture-to-DAG encoding is undefined
if the above construction does not yield a DAG.%
\footnote{
It is thus defined if and only if
the in- and outgoing edges are balanced:
$\sum_{v \in V} r(\ell(v))_1 = \sum_{v \in V} r(\ell(v))_2$
and the newly added input derived edges do not introduce a cycle.%
}
%$\einputdriven = \{ e \mid \src(e), \tar(e), e$
%does not introduce a cycle $\}$
%determines the in and out degree of a vertex labeld by $\ell$
%without specifying the wiring thus which edge is used.
\end{definition}
%The input-driven DAG encoding does not provide a unique DAG for a picture but a set of DAGs.
Each edge in $\einputdriven$ lacks explicit designation
of its source and target vertices,
allowing for all combinations of joining the vertices
according to the input-driven in- and outgoing specifications,
as long as the combination yields a valid DAG.
As such, this DAG encoding accommodates a variety of wiring combinations.

%%%%%     To be compatible with automata working on pictures, for an input picture $P \in \Sigma^{m,n}$, we will encode the boundary picture $\widehat{P}$ as a DAG %denoted as $\Dag(P) = (V,E)$
%%%%%     with the set of nodes $V =\{n(i,j) \mid 0 \le i \le m+1, 0 \le j \le n+1\}$. Each node $n(i,j)$ corresponds to the symbol $\widehat{P}(i,j)$ of the boundary picture $\widehat{P}$.
%%%%%     The graph
%%%%%     %$\Dag(P)$
%%%%%     has maximum in- and outdegree two,
%%%%%     its only root corresponds the upper left corner of $\widehat{P}$ and its only leaf corresponds to the bottom right corner of $\widehat{P}$. The set of edges of $\Dag(P)$ consists of edges $(n(i,j), n(i,j+1)$ for each $1 \le i \le m+2$, $0 \le j < n+1$, and $(n(i,j), n(i+1,j))$, for each $0 \le i < m+1$, $0 \le j \le n+1$.
%%%%%     For a sample DAG encoding a picture see Figure \ref{pic:enc-arrow}.

A swap operation is a fundamental operation on DAGs \cite{journals/iandc/BlumDrewes2019}. Let $G =(V, E, \ell, \IN, \OUT)$ be a DAG. Two edges $e_0,e_1 \in E$ are independent if there is no directed path from $\tar(e_i)$ to $\src(e_{1-i})$ for $i \in \{0, 1\}$. Then, the \emph{edge swap} of $e_0$ and $e_1$ yields the DAG $G' =(V, E, \ell, h \circ \IN, out)$, where $h$ is a bijection defined as follows:
$$h(e) = \left\{\begin{array}{ll}
    e_{1-i} & \mbox{if } e = e_i \mbox{ for some } i \in \{0, 1\},\\
    e       & \mbox{otherwise.}
\end{array}
\right.$$

A DAG automaton cannot be restricted to picture DAGs by its rule set.
The rules allow accepting DAGs that do not represent pictures and are obtained from DAGs encoding pictures by the swap operation.
For using DAG automata as a device recognizing picture languages,
it is necessary to always have an intersection with the picture languages $\Sigmastarstar$.

% the following definition is hard to read
%\begin{definition}[Equivalence $\eqdag$]\label{def:eqpic}
%Two automata $A$ and $A'$ are \emph{picture equivalent}
%with respect to the DAG encoding $\aDAG$, denoted as $A \eqdag A'$,
%if they recognize the same language of pictures,
%where a picture $P$ is encoded as $\aDAG(\hat P)$
%if the automaton is a DAG automaton.
%The equivalence %relation
%is lifted to language families.%
%\footnote{%
%For a picture automaton $A$ and a DAG automaton $A'$,
%this would mean $A \eqdag A'$ $\iff$
%$\aDAG(L(A)) = L(A') \cap\aDAG(\Sigmastarstar)$.}
%$$A \eqdag A' \iff
%\aDAG(L(A))  \cap\aDAG(\Sigmastarstar)
%=
%\aDAG(L(A')) \cap\aDAG(\Sigmastarstar)
%.$$
%\end{definition}

\begin{definition}[Equivalence $\eqdag$]\label{def:eqpic}
Let $A$ be a DAG automaton and $A'$ be an automaton accepting a picture language $L(A)$. The automata $A$ and $A'$ are \emph{picture equivalent} with respect to a picture-to-DAG encoding $\aDAG$, denoted as $A \eqdag A'$,
if $L(A) \cap \aDAG(\Sigmastarstar) = \aDAG(L(A'))$.
If $\mathcal{A}$ is a family of DAG automata and $\mathcal{A}'$ is a family of picture automata, the families are equivalent with respect to $\aDAG$, denoted as $\mathcal{A} \eqdag \mathcal{A}'$ if,
for each DAG language $L$, it holds
$L \in \mathcal{L}(\mathcal{A}) \cap \aDAG(\Sigmastarstar) \Leftrightarrow L \in  \aDAG(\mathcal{L}(\mathcal{A}'))$.
\end{definition}

\section{Comparison}
\label{sec:comp}

% picture automata (c.f. Sect.~\ref{subsec:piccomp}) and
% string automata (see Sect.~\ref{subsec:stringcomp})
% to DAG automata.

This section provides a comparative analysis of
string automata, designed for one-dimensional input (see Sect.~\ref{subsec:stringcomp}), in relation to their capabilities vis-à-vis DAG automata%
,
and
picture automata, which operate in two dimensions (c.f. Sect.~\ref{subsec:piccomp})%
.

\subsection{1D: Comparison of String Automata with DAG Automata}
\label{subsec:stringcomp}

% - pixel! picture symbol (not boundary) or nontransparent picture symbol??? for sublinear space
% - RESTRICT INPUT TO AUTOMATA: DAG automata run on DAGs, implications of this, restrict to other input e.g. picture dags
%Back to the one-dimensional case: %how can we pimp up the string case?
Just as DAG automata recognize the regular string languages
if the strings are encoded as string DAGs (Theorem \ref{theorem:1dDDAeqL3}),
so do picture automata $\RFA$, $\BFA$, $\twoOTA$, and $\twoDOTA$ when operating
in one dimension only (Theorem \ref{theorem:1dpicAeqFSA}).
String DAG (Def.~\ref{def:dag}) joins vertices corresponding to the letters of a string in a simple path.
But obviously, a node in a graph can have more than one incoming edge and one outgoing edge.

String DAGs and picture-to-DAG encodings from Def.~\ref{def:blindedgesets} define edges connecting nodes representing symbols in a string or in a picture in an input-agnostic fashion.
However, based on Def. \ref{def:inputdriven}, we can also add edges connecting symbols in an input-driven way. Since we use DAG automata for one- and two-dimensional strings, extra edges need to be added.

Edges are added either in an \emph{input-agnostic} fashion
%- We could use a TEMPL/2SA automaton for the input-agnostic edges
\cite{learnpicautomata}
or an \emph{input-driven} way~\cite{DBLP:journals/tcs/KutribMW21},
meaning that the edges depend either only on the dimensions of the input
or also on the input itself.
DAG encodings as defined in Def.~\ref{def:blindedgesets} are input-agnostic.

%\subsubsection{Input blind}
Adding to a string DAG input-agnostic edges computed
by $\FSA$s does not seem to add power to DAG automata.
An input-agnostic edge $e$ corresponds to the length $n$ of the substring
$v_1v_2\dots v_n$ with
$\src(e) = v_1$ and
$\tar(e) = v_n$, because if an $\FSA$ is input-agnostic, it sees a unary language
where the strings can only differ in their length.
But a $\DDA$ on string DAGs accepts the same strings as a $\FSA$.
Every additional input-agnostic edge requires an $\FSA$ for a unary language.
Since the DAG automata recognize graphs of bounded degree,
only a constant number of input-agnostic edges are allowed,
which in turn means that in each vertex, only a constant number of $\FSA$s
can start the computation of the spanning length of the input-agnostic edge.
A vertex with both in and outgoing input-driven edges
would use the final state for the unary $\FSA$ as starting state for the new unary $\FSA$.
This indicates that an $\FSA$ can,
by power set construction, keep track of all $\FSA$s computing
the input-agnostic edges.

Thus,
only a device more powerful than a $\FSA$ can exceed the regular languages
by computing input-agnostic edges for the string.
One example would be to add edges spanning $n$ vertices where $n$ is
a power of two, a square, or a prime, requiring a stack, a linear bounded tape, or an infinite tape, respectively.
%and nevertheless the DAG automata recognizing strings
A DAG automaton processing strings encoded with input-agnostic edges
computed by machines more powerful than a mere $\FSA$
exceeds thus the power of the regular languages.
Because input-agnostic edges require this trick of preprocessing
with a more powerful device,
we take a look at input-driven edges in the hope of
avoiding devices more powerful than $\FSA$s.
%TODO:
%
%- input-agnostic corresponds to unary plus dimension
%- input-agnostic grid encoding $\cooDAG$ is not $\FSA$ constructible
%
%- allowing all possible k-path DAGs for encoding string languages,
%  does encode language properties into the DAG,
%  hiding it from the membership complexity
%
%- additional edges which are input-agnostic and \textbf{computed with a $\DFA$}
%  do not change anything, still regular string languages recognized
%

%\subsubsection{Input driven}
% \input{theorem/lindagstringdag} this is not new, I do not understand why I wrote this here?!?
Input-driven edges do not require edges to be computed by a more powerful device than an FSA.
Nevertheless, the power of DAG automata recognizing strings
encoded with input-driven edges exceeds the class of regular languages. Such automata can accept, e.g., languages $L_1 = \{a^n b^n \mid n \in \mathbb{N} \}$ and $L_2= \{ a^n b^n c^n \mid n \in \mathbb{N} \}$ when input-driven edges are added so that the double rank $r$ of the symbols is $r(a)=(0,1)$, $r(b)=(1,0)$, in case of $L_1$, and $r(b)=(1,1)$, in case of $L_2$, and $r(c)=(1,0)$.

%\begin{theorem}%[$\Pi(a^nb^n) \in TODO$]
\begin{restatable}{theorem}{theoremanbn}
\label{theorem:anbn}
%\input{fig/anbn}
%\begin{theorem}%[$\Pi(a^nb^n) \in TODO$]
DAG automata can recognize both certain context-free
%\marginpar{TODO express existence: some / certain}
as well as context-sensitive languages
if strings in the string languages are encoded as
a string DAG with additional input-driven edges
determined by a ranked alphabet.
%$L = \{ w \mid \pathfirstedge(w) = a^nb^n\}$
%$a^nb^n$ string DAG plus extra magic edges recognizable
%\end{theorem}
 % #moved2appendix
\end{restatable}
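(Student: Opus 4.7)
The plan is to prove the theorem constructively by exhibiting a DAG automaton for each of the witness languages suggested in the discussion preceding the statement: $L_1 = \{a^n b^n \mid n \in \mathbb{N}\}$ (with ranks $r(a)=(0,1)$, $r(b)=(1,0)$) for the context-free claim, and $L_2 = \{a^n b^n c^n \mid n \in \mathbb{N}\}$ (with ranks $r(a)=(0,1)$, $r(b)=(1,1)$, $r(c)=(1,0)$) for the context-sensitive one. In each case correctness splits into showing (i) that every word in the language admits an input-driven encoding accepted by the automaton and (ii) that every accepted input-driven encoding arises from such a word.

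For $L_1$ I would take the state set $Q = \{s, t, m\}$, where $s$ labels path edges inside the $a$-block (including the transition to the first $b$), $t$ labels path edges inside the $b$-block, and $m$ labels the extra matching edges contributed by the input-driven encoding. The rule set is
\begin{gather*}
(\lambda \rarrow{a} sm), \quad (s \rarrow{a} sm),\\
(sm \rarrow{b} t), \quad (tm \rarrow{b} t), \quad (sm \rarrow{b} \lambda), \quad (tm \rarrow{b} \lambda).
\end{gather*}
For (i), given $w = a^n b^n$ and any bijection from the $n$ $a$-positions to the $n$ $b$-positions, the resulting graph is acyclic because every extra edge runs forward in the path order, and the labelling above satisfies all six rules at every vertex. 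For (ii), note that only the first $a$-rule has an empty head, so the unique source of the encoded DAG must be an $a$-vertex; the label $t$ appears in a rule head only for $b$-vertices, so once a path edge carries $t$ every subsequent vertex on the path is labelled $b$; and the rank equalizes the number of $a$- and $b$-vertices because every $m$-edge is produced at an $a$-vertex and consumed at a $b$-vertex. Hence the word underlying any accepted encoding has the shape $a^n b^n$.

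For $L_2$ I would extend the construction by a fourth path label $u$ for the $c$-block and a second matching label $m'$ for the $b$-to-$c$ pairs mandated by $r(b)=(1,1)$. The $a$-rules are unchanged; the $b$-rules have heads $sm$ or $tm$ and tails $tm'$; the $c$-rules have heads $tm'$ or $um'$ and tails $u$ or $\lambda$. Direction (i) is proved as before, with two independent matchings (each forward in path order, hence jointly acyclic); direction (ii) follows from the same deterministic label propagation along the path, combined with the two rank equalities $|w|_a = |w|_b$ and $|w|_b = |w|_c$ enforced by the two matchings.

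The principal obstacle in both cases is direction (ii): since the input-driven encoding leaves the matching entirely unconstrained, the automaton has to detect the block structure from the path labels alone. This works because the rule set admits no ``backward'' transition in the path-label sequence $s \to t \to u$, so acyclicity together with connectedness rules out every word shape other than the one required. Formalising this lemma—that the states carried by the path edges of any accepting run change at most once per block, and only at the intended boundaries—is the main technical step underlying the soundness argument.
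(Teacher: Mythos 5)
Your proposal is correct and follows essentially the same route as the paper's own proof: the same witness languages $\{a^nb^n\}$ and $\{a^nb^nc^n\}$ with the same ranks, and rule sets that are the paper's up to renaming ($s,t,u,m,m'$ for $a,b,c,p,q$), with path labels enforcing the block structure and the typed matching edges enforcing the count equalities. Your version is in fact slightly more careful on the converse direction and on the boundary rules (e.g.\ the rule with head $tm'$ and empty tail needed for $n=1$, which the paper's listed rule set omits), but this is a refinement of the same argument rather than a different approach.
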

%\input{proof/anbn}

%\input{theorem/anan}

%- regular languages
%  $\subsetneq$
%  input-driven
%  $\subsetneq$
%  context-free languages
%
%\input{theorem/idcounting}

\subsection{2D: Comparison of Picture Automata with DAG Automata}
\label{subsec:piccomp}
The picture automata $\RFA$ and $\BFA$ are compared to DAG automata
in Sect.~\ref{subsubsec:rfabfa},
and based on that comparison, these two models are modified
in Sect.~\ref{subsubsec:dia}.
This modification in turn leads to comparing
online tessellation automata to DAG automata in Sect.~\ref{subsubsec:ota}.

\subsubsection{Comparison of RFA and BFA to DAG automata }
\label{subsubsec:rfabfa}

Returning Finite Automata ($\RFA$) and Boustrophedon Automata (BFA)
are the most limited picture automata in literature
and recognize the same family of picture languages
(Theorem~\ref{theorem:rfaeqbfa}).

% As quite restricted automata models designed
% for processing two-dimensional inputs,
% they possess algorithmically advantageous properties
% for applications in picture recognition.
How can a DAG automaton model these two picture automata?
Scanning strategies of $\RFA$ and $\BFA$ are total orders on the positions of the picture.
Both scanning strategies are drawn
as oriented graphs in Figure~\ref{pic:stripes-rfabfa-serialized}.
% TODO: $\BFA$ intuition ox or snake or fir
% TODO: explain serialization versus $\FID$-DAG
% Let us look at the examples of stripes.
% The language $L_s$ is obviously recognized by $\RFA$ $\BFA$.
% What if we encode the pictures as DAGs?
For the $\RFA$ the big jump back from right to left was viewed critically.
Therefore, the $\BFA$ uses the alternating directions.
Let us look at an example of how DAG automata process pictures.

\begin{example}\label{ex:stripes}

Let $\Lstripes$ be the picture language of
horizontal stripes where adjacent rows are of different colors.
Formally, over a binary alphabet $\Sigma = \{\stateW,\stateB\}$, this would be
\begin{displaymath}
\Lstripes =
\left\{
\begin{array}{c}
a^n  \\
b^n  \\
a^n  \\
b^n  \\
\vdots \\
\end{array}
\ \bigg| \
\{a,b\}
=
\{\stateW,\stateB\}
\text{ and }
n\in\mathbb{N}
\ \right\}.
\qquad
\end{displaymath}

Both $\RFA$ and $\BFA$ scan the input horizontally,
to recognize horizontal lines.
Also, they can recognize stripes,
as well as the above-defined stripe language $\Lstripes$,
illustrated in Fig.~\ref{pic:stripes-rfabfa-serialized}.
And so can the DAG automaton, which simulates those two models
by applying the encodings $\rfaDAG$ and $\bfaDAG$.
This can be accomplished deterministically,
since the scanning strategy is a serialization.
The serialization of a picture yields a string,
or, in the case of a DAG automaton, a string DAG,
and apart from the scanning strategy,
both an $\RFA$ and a $\BFA$ act like an $\FSA$
for which the nondeterminism adds no power,
see Theorem~\ref{theorem:bfa-deteqnondet}.

Abandoning the total order,
we now consider encodings based on partial orders.
We examine the encodings $\linlDAG(\hat P)$ and $\linrDAG(\hat P)$,
which provide structured representations of an input picture $P$
under relaxed ordering. We begin with $\linlDAG(\hat P)$.
Consider a DAG automaton for the language $\Lstripes$
which uses the states $\stateBW$ and $\stateWB$ for encoding the alternation
between two row colors in the edge labels.
A rule cycle for the vertical edges
  ensures the alternating colors:

\begin{center}
~\\
               $
	       \dots
               \anchor{b4}{\stateBW}
	       \dots
               \sharparrow
	       \dots
               \anchor{b1}{\stateWB}
	       \dots
           \qquad
	       \dots
               \anchor{b2}{\stateWB}
	       \dots
               \sharparrow
	       \dots
               \anchor{b3}{\stateBW}
	       \dots
               $
               \begin{tikzpicture}[remember picture,overlay]
                       \path (b1) edge[-latex,out=40,in=140] (b2)
                             (b3) edge[-latex,out=150,in=30] (b4);
               \end{tikzpicture}

\end{center}

\newcommand\Ralternate{R_\squarebotblack}
\newcommand\rulesalternate{ 
  \Ralternate=\{
  (\stateWB \sharparrow \stateBW\stateW),
  (\stateBW \sharparrow \stateWB\stateB)\}
}
\newcommand\rulesalternateminimal{ 
  R_\squaretopblack=\{
  (\stateW \sharparrow \stateB\stateW),
  (\stateB \sharparrow \stateW\stateB)\}
}
\newcommand\Rstripes{R_\Equiv}
\newcommand\rulesstripes{ 
  \Rstripes = \{
  (\stateB \Barrow \stateB),
  (\stateW \Warrow \stateW),
  (\stateB \Barrow \lambda),
  (\stateW \Warrow \lambda)\}
%<  (\stateB \sharparrow \#),
%<  (\stateW \sharparrow \#),
}
\newcommand\Rborder{R_\#}
\newcommand\rulesborder{ 
  \Rborder = \{
  (\stateWB \sharparrow \#),
  (\stateBW \sharparrow \#),
  (\# \sharparrow \#),
  (\# \sharparrow \lambda)\}
}
\newcommand\Rroot{R_{root}}
\newcommand\rulesroot{ 
  R_{root} = \{
  (\lambda \sharparrow \stateBW\#),
  (\lambda \sharparrow \stateWB\#)\}
}
\newcommand\rulesrootmin{ 
  R_{root} = \{
  (\lambda \sharparrow \stateB\#),
  (\lambda \sharparrow \stateW\#)\}
}

A DAG automaton
$(
\{\stateBW,\stateWB,\stateB,\stateW,\#\},
\{\stateB,\stateW\},
\Rroot\!\cup\Ralternate\cup\Rstripes\cup\Rborder
)$
for the encoding $\linlDAG$ shown in Fig.~\ref{pic:stripes-grass-left}
could label the edge path on the left border
with the states $\stateWB$ and $\stateBW$
by a rule cycle alternating the rules $\rulesalternate$.
In this way, it can alternate between the colors on the left border
and accept the lines in case they comply with the pattern memorized
with the help of the states, vertical edge labels, at the left border.
Additionally, it needs the rules $\rulesstripes$
for accepting the colored lines themselves,
and the rules $\rulesborder$ for $\hat P$'s lower border.
The rules specified so far are top-down deterministic, but lack
a specification for a root.
The rule set of a nondeterministic DAG automaton may contain both
rules in $\rulesroot$,
which allows the automaton to recognize $\Lstripes$.
In contrast, a $\DDA$ lacks the ability to guess the initial color.
Since, due to its determinism, a $\DDA$ is restricted to choosing
one of the two root rules in $\Rroot$ for its rule set.
It cannot include both of them.
But this hardcodes the color.
Consequently, a deterministic DAG automaton using encoding $\linlDAG$ can only recognize
  the stripe language starting with a fixed color.
This is due to the left border being blind to the colors of the stripes
when processing the picture top-down deterministically.
The stripes depend on the left border but not vice versa.
 A $\DDA$ cannot recognize $\Lstripes$ encoded as $\linlDAG$.

However, with the right border vertically connected as in
Fig.~\ref{pic:stripes-grass-right},
the vertical edges depend on the colors of the lines
in a top-down deterministic run.
Consequently, a $\DDA$ can recognize all pictures
$\linrDAG(\hat P) \in \Lstripes$ rather than only those pictures $P$
with a fixed initial color.
The rules
    $(\stateB\stateWB \sharparrow \stateBW)$
       and
    $(\stateW\stateBW \sharparrow \stateWB)$
assign the edge labels deterministically downward,
thereby propagating the color information.

\end{example}

Let us now formalize and generalize the observations
that we have made on $\Lstripes$.

% string / total order / no poset
%%%%\input{theorem/stringdag2string}
%%%%\input{theorem/fideqfd4string}
\begin{restatable}[$\DDA\eqrfa\RFA$, $\DDA\eqbfa\BFA$] {theorem}{theoremstringdagrfabfa}
%\begin{theorem}[$\DDA\eqrfa\RFA$, $\DDA\eqbfa\BFA$]
\label{theorem:stringdagrfabfa}
  $\RFA$s and $\BFA$s recognize the same picture languages as $\DDA$s
run on pictures encoded by the scanning strategy of $\RFA$ and $\BFA$,
$\rfaDAG$ and $\bfaDAG$,
respectively.
% Thus:
% $\DDA\eqrfa\RFA$
% and
% $\DDA\eqbfa\BFA$.

\end{restatable}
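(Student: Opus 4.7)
The plan is to exploit the fact that for every boundary picture $\hat P$, both $\rfaDAG(\hat P)$ and $\bfaDAG(\hat P)$ are in fact \emph{string DAGs} whose underlying linear order coincides with the scanning strategy of the $\RFA$ and $\BFA$ respectively. This collapses the two-dimensional recognition problem to one dimension, where the equivalence with regular string languages is already established (Theorem~\ref{theorem:1dDDAeqL3}).

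First I would verify the structural claim explicitly. For $\rfaDAG$, the horizontal edges $\ErightrightArrows$ together with the return edges $\Erfa$ chain the positions of $\hat P$ into a single directed path whose vertex order is exactly the total order $\prec$ of Definition~\ref{def:rfabfa} for $\RFA$. Analogously, $\ErightleftArrows$ combined with $\Ebfa$ yields a single directed path realizing the alternating $\BFA$ order. In both cases the resulting graph matches Definition~\ref{def:dag} of a string DAG, and conversely every string DAG whose label word has the $\#$-pattern of a boundary serialization is of the form $\rfaDAG(\hat P)$ (resp.\ $\bfaDAG(\hat P)$) for a unique $P$.

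For the inclusion $\aDAG(L(A'))\subseteq L(A)\cap \aDAG(\Sigmastarstar)$, I start from an $\RFA$ (or $\BFA$) $A'$, which by Theorem~\ref{theorem:bfa-deteqnondet} may be taken deterministic. Its underlying $\DFA$ $M$ operates on the serialization of $\hat P$, i.e.\ exactly the word carried by $\rfaDAG(\hat P)$ (resp.\ $\bfaDAG(\hat P)$). Applying Theorem~\ref{theorem:1dDDAeqL3}, I convert $M$ into a $\DDA$ $A$ on string DAGs; by the structural claim this $A$ accepts the encoded picture iff $M$ accepts the serialization iff $A'$ accepts $P$. For the converse inclusion, I start from a $\DDA$ $A$, restrict attention to inputs in the image of the encoding, and invoke Theorem~\ref{theorem:1dDDAeqL3} in the opposite direction to extract a $\DFA$ $M$; wrapping $M$ with the $\RFA$ (resp.\ $\BFA$) scanning strategy produces a deterministic picture automaton $A'$ with $\aDAG(L(A'))=L(A)\cap\aDAG(\Sigmastarstar)$, as required by Definition~\ref{def:eqpic}.

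The main obstacle is the bookkeeping around the boundary. Definition~\ref{def:rfabfa} permits ``squeezing'' of consecutive $\#$ symbols at row transitions, whereas the encodings $\rfaDAG$ and $\bfaDAG$ make every boundary vertex explicit and connect them via $\Erfa$ or $\Ebfa$; I would argue that this mismatch is inessential by invoking the remark after Definition~\ref{def:rfabfa} that consecutive $\#$ add no expressive power, so that rules for $\#$-labeled vertices in the $\DDA$ can be synthesized mechanically. A second minor subtlety, but in fact a convenience, is that Definition~\ref{def:eqpic} only demands agreement on $\aDAG(\Sigmastarstar)$: the $\DDA$'s behavior on DAGs outside this image (for instance those produced by swap operations on a picture DAG) is irrelevant, which is exactly what makes the two directions line up without further restrictions on the rule set.
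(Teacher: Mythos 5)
Your proposal is correct and follows essentially the same route as the paper: observe that $\rfaDAG(\hat P)$ and $\bfaDAG(\hat P)$ are string DAGs whose linear order is exactly the respective scanning strategy, then reduce to the one-dimensional equivalence of Theorem~\ref{theorem:1dDDAeqL3} (together with Theorem~\ref{theorem:bfa-deteqnondet} for determinism). Your version merely spells out the two inclusions and the boundary-symbol bookkeeping more explicitly than the paper does.
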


Contrary to the total order on the pictures' pixels
(resulting in ordinary string languages),
when we use DAG encodings inducing a proper partial order
(\emph{proper} meaning no total order),
nondeterminism can equip a DAG automaton with more power.

% outcommented temporarilly for NCMA due to space and time constraint

\begin{restatable}[$\ddalinr$ simulates $\ddalinl$] {lemma}{lemmaddalinrsimddalinl}
%\begin{lemma}[$\ddalinr$ simulates $\ddalinl$]
\label{theorem:ddalinrsimddalinl}
  %\begin{lemma}[$\ddalinr$ simulates $\ddalinl$]
A deterministic DAG automaton can simulate using the DAG-encoding 
$\linlDAG$ for a picture $P$ 
by encoding it as $\linrDAG(\hat P)$, thus
$\linlDAGL(\DDA)\subseteq\linrDAGL(\DDA)$.
%$\DDA_\symbollinr$
%can simulate $\DDA_\linlDAG$
%\end{lemma}
 % #moved2appendix
\end{restatable}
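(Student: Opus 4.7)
The plan is to construct, given any DDA $A = (Q, \Sigma, R)$ on the encoding $\linlDAG$, a DDA $A'$ on the encoding $\linrDAG$ that recognizes the same picture language. The key observation, already visible in Example~\ref{ex:stripes}, is that on $\linlDAG(\hat P)$ the left-border chain consists of $\#$-labeled vertices with fixed in- and out-arities, so top-down determinism forces $A$'s rules there to produce sequences $(q'_i)_{i\in[m+1]}$ of left-border vertical states and $(h_i)_{i\in[m+1]_0}$ of horizontal-outgoing states that depend only on $A$ and not on $P$. Consequently, each row $i$ of $\hat P$ is processed independently as a DFA started at $h_i$ under $A$'s horizontal transition, with accepting condition $F_A := \{q \mid (q, \#) \to \lambda \in R\}$. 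Thus $A$ accepts $\linlDAG(\hat P)$ iff every row of $\hat P$ is accepted by its own row-DFA.

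Turning to $\linrDAG(\hat P)$, rows remain horizontally independent, but the left-border vertices $\hat P(i,0)$ are now roots; in any DDA on this encoding, all these roots must fire the same rule and therefore emit identical horizontal starting states. On the other hand, the right border becomes a vertical chain that receives the end-of-row horizontal state from every row, providing a channel through which $A'$ can track the row index and enforce row-specific acceptance conditions.

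The construction of $A'$ uses the standard transition-monoid trick. Its horizontal state space is $Q^Q$, with every row starting in the identity function (emitted by the unique root rule $\lambda \sharparrow \mathrm{id}$ at each $\hat P(i,0)$). At a horizontal vertex labeled $\sigma$, the rule sends $f \in Q^Q$ to $f'$ with $f'(q) = \delta_A(f(q), \sigma)$, where $\delta_A$ denotes $A$'s horizontal transition. After row $i$, the state $f_i \in Q^Q$ captures $A$'s row-$i$ computation for every possible starting state. On the right border, $A'$ carries $A$'s left-border vertical state $q'_i$ as its own vertical state; at $\hat P(i, n+1)$ the rule reads the ingoing vertical $q'_i$ and horizontal $f_i$, consults $A$'s left-border rule $(q'_i, \#) \to (h_i, q'_{i+1})$ to recover $h_i$ and $q'_{i+1}$, is admitted only when $f_i(h_i) \in F_A$, and emits $q'_{i+1}$ downward. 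The corner vertices $\hat P(0, n+1)$ and $\hat P(m+1, n+1)$, as well as the all-$\#$ top and bottom border rows, are handled by analogous rules derived from $A$'s root, bottom-left-corner, and border-row rules. Since each combination of ingoing states, label, and out-arity determines a unique tail, $A'$ is top-down deterministic by construction.

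A straightforward induction on the row index then verifies that $A'$ accepts $\linrDAG(\hat P)$ iff $A$ accepts $\linlDAG(\hat P)$, yielding $\linlDAGL(\DDA) \subseteq \linrDAGL(\DDA)$. The main obstacle I anticipate is clean bookkeeping of the corner and border-row rules in $R'$: because these vertices have in- and out-arities different from those of interior row-end vertices, the special-case rules must be set up to match the arities imposed by $\linrDAG$ while simultaneously preserving top-down determinism and the row-by-row correspondence with $A$.
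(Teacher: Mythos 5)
Your proposal is correct and follows essentially the same route as the paper: both exploit that on $\linlDAG(\hat P)$ top-down determinism fixes the left-border chain and hence a per-row start state, and then, on $\linrDAG(\hat P)$, start every row uniformly, defer resolution of the row's actual start state to the row's end, and let the right-border vertical chain carry the left-border state sequence downward to perform the check. Your explicit $Q^Q$ (transition-function) bookkeeping is just a careful implementation of what the paper loosely invokes as the ``classical power set construction,'' so the two arguments coincide in substance.
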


\begin{restatable}[$\linlDAGL(\DDA)\subsetneq\linrDAGL(\DDA)$] {theorem}{theoremlinlDAGddasubsetlinrDAGdda}
%\begin{theorem}[$\linlDAGL(\DDA)\subsetneq\linrDAGL(\DDA)$]
\label{theorem:linlDAGdda-subset-linrDAGdda}
  %\begin{theorem}[$\linlDAGL(\DDA)\subsetneq\linrDAGL(\DDA)$]
The class of picture languages recognized by deterministic DAG automata
with input pictures $P$ encoded as $\linlDAG(\hat P)$
is a proper subclass of those encoded as $\linrDAG(\hat P)$.
%, thus\\
%$\linlDAGL(\DDA)\subset\linrDAGL(\DDA)$.
%\end{theorem}

\end{restatable}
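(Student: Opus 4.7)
The plan is to use Lemma \ref{theorem:ddalinrsimddalinl} for the inclusion and to show strictness by exhibiting the stripe language $\Lstripes$ from Example \ref{ex:stripes} as a witness: $\Lstripes \in \linrDAGL(\DDA)\setminus\linlDAGL(\DDA)$. Membership $\Lstripes \in \linrDAGL(\DDA)$ is already sketched in Example \ref{ex:stripes}; in particular, the rules $(\stateB\stateWB \sharparrow \stateBW)$ and $(\stateW\stateBW \sharparrow \stateWB)$ propagate the color read in each row down the vertical right border, and a routine check that no two rules share the same combination of symbol, head and tail length confirms top-down determinism.

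The main work is the non-membership $\Lstripes \notin \linlDAGL(\DDA)$. The key structural observation is that $\linlDAG(\hat P)$ contains no vertical edges in the picture interior: vertical edges occur only along the left boundary $\edownleft$, while all remaining edges point horizontally right. Hence each row is a self-contained directed path from its left-border vertex to its right-border leaf, and distinct rows can communicate only through the left-border column. A straightforward top-down induction along this column shows that for any $\DDA$ $A$, the state $s_i$ that $A$ assigns to the first horizontal edge of row $i$ depends only on $i$ and on $A$, never on the contents of $P$. Each row is therefore processed like a DFA-run starting from this picture-independent state $s_i$, and is accepted iff the state arriving at the rightmost $\#$-vertex admits a leaf rule $(q\sharparrow\lambda)\in R$.

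The contradiction then follows by a cut-and-paste argument. Suppose some $\DDA$ $A$ satisfies $L(A)\cap\linlDAG(\Sigmastarstar) = \linlDAG(\Lstripes)$, fix $m=2$ and any $n\geq 1$, and let $P_W, P_B\in\Lstripes$ be the two stripe pictures of dimensions $(2,n)$ starting with $\stateW$ and $\stateB$, respectively. The unique $A$-run on $\linlDAG(\hat P_W)$ labels row $1$ by the DFA-computation from $s_1$ on $\stateW^n$, ending in a state that admits a leaf rule; the unique run on $\linlDAG(\hat P_B)$ labels row $2$ by the DFA-computation from $s_2$ on $\stateW^n$, again ending in a leaf-accepting state. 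Let $P'$ be the picture whose rows are both all-$\stateW$. By top-down determinism, the $A$-run on $\linlDAG(\hat P')$ coincides with $P_W$'s run on row $1$ and with $P_B$'s run on row $2$, while the left-, top- and bottom-border labelings are picture-content-independent; hence $A$ accepts $P'\notin\Lstripes$, contradicting $L(A)\cap\linlDAG(\Sigmastarstar)=\linlDAG(\Lstripes)$.

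The sole technical obstacle is making the gluing in the cut-and-paste step precise. The structural observation is what powers it: because the two row-runs share no vertex, the right-border leaf condition depends only on the last horizontal state of its own row, and the left-border labeling is determined by $A$ alone, so the partial runs borrowed from $P_W$ and $P_B$ glue into a single legal $A$-run on $P'$ with no boundary conditions left to reconcile. Verifying this gluing formally and noting that the same argument extends to arbitrary $m\geq 2$ (so that no extra rows can ``cheat'' the stripe constraint) is the only delicate point.
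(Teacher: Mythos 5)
Your proposal is correct and follows essentially the same route as the paper: the inclusion comes from Lemma~\ref{theorem:ddalinrsimddalinl}, and strictness is witnessed by $\Lstripes$, exploiting that under top-down determinism the left-border labeling of $\linlDAG(\hat P)$ (and hence the state fed into each row) is independent of the picture's content. Your explicit cut-and-paste gluing of the runs on $P_W$ and $P_B$ is a more detailed rendering of the paper's ``hardcoded first-row color'' argument, but it is the same underlying idea.
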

\begin{corollary}\label{cor:noclosurelinldet}
$\DDA$ for the DAG-encoding $\linlDAG$ is not closed under union.
\end{corollary}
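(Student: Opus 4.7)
The plan is to exhibit a witness language that can be decomposed into two pieces, each of which lies in $\linlDAGL(\DDA)$, while their union does not. The natural candidate is the stripe language $\Lstripes$ already used throughout Example~\ref{ex:stripes} and in the proof of Theorem~\ref{theorem:linlDAGdda-subset-linrDAGdda}. I would split it as $\Lstripes = L_1 \cup L_2$, where $L_1$ contains those stripe pictures whose top row has color $\stateW$ and $L_2$ those whose top row has color $\stateB$.

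First, I would verify $L_1, L_2 \in \linlDAGL(\DDA)$. This is already implicit in Example~\ref{ex:stripes}: a deterministic DAG automaton using encoding $\linlDAG$ can recognize the stripe language whenever the initial color is hardcoded, because then only one of the two root rules in $\Rroot$ is needed and the alternation on the left border propagates deterministically via the cycle $\Ralternate$, together with the line rules $\Rstripes$ and the border rules $\Rborder$. Specialising the rule set once for root rule $(\lambda\sharparrow\stateWB\#)$ and once for $(\lambda\sharparrow\stateBW\#)$ yields two $\DDA$s recognizing $L_1$ and $L_2$ respectively.

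Second, I would invoke Theorem~\ref{theorem:linlDAGdda-subset-linrDAGdda}: the full language $\Lstripes$ does not lie in $\linlDAGL(\DDA)$. The intuition, already argued in Example~\ref{ex:stripes}, is that with the $\linlDAG$ encoding the vertical edge path sits on the left border and is therefore blind to the colors inside the rows in any top-down deterministic run; determinism forces a single choice of root rule, which hardcodes the starting color. Hence $\Lstripes \notin \linlDAGL(\DDA)$.

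Combining these two facts finishes the argument: if $\linlDAGL(\DDA)$ were closed under union, then from $L_1, L_2 \in \linlDAGL(\DDA)$ we would conclude $\Lstripes = L_1 \cup L_2 \in \linlDAGL(\DDA)$, contradicting the previous step. The main obstacle is not in this combinatorial step but in the non-inclusion $\Lstripes \notin \linlDAGL(\DDA)$; fortunately, that is precisely the content of Theorem~\ref{theorem:linlDAGdda-subset-linrDAGdda}, so the corollary reduces to packaging the witness and the decomposition.
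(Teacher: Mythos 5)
Your proposal is correct and is essentially the paper's own argument: the corollary is meant to follow from the proof of Theorem~\ref{theorem:linlDAGdda-subset-linrDAGdda}, which decomposes $\Lstripes$ into the two fixed-first-color stripe languages, each recognizable by a $\DDA$ under $\linlDAG$ via one hardcoded root rule, while their union $\Lstripes$ is not. One small point of precision: the non-membership $\Lstripes \notin \linlDAGL(\DDA)$ is established in the proof of that theorem (and Example~\ref{ex:stripes}), not by its bare statement of proper inclusion, but this is exactly the fact the paper itself relies on.
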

%%%%% \input{theorem/determinizematrixgrammar}

%%%%% In the nondeterministic version the DAG automata
%%%%%  are invariant to a variety of DAG encodings.
%%%%% This invariance holds if the DAG-encodings
%%%%% connect all their vertices in a row
%%%%%  and every row points via an edge to its row below.
%%%%%
%%%%% \input{theorem/eqlinlDAGnda2rfa}
%%%%% %\input{theorem/eqlinrDAGnda2rfa}
%%%%% \input{theorem/eqfirdag2bfa}
%%%%% \input{theorem/eqrowsrandom1down2rfa}
%%%%% % \input{theorem/eqtotalorderrfabfabyfid}
%%%%% Does it help to have both the right and the left border vertically connected?
%%%%% \input{theorem/eqlindag2rfa}

The scanning strategy seems to determine,
which languages an $\RFA$ or, equivalently, a $\BFA$ can recognize,
which would mean that the language is partly
encoded within the scanning strategy itself.
This leads us to the discussion on the modification of $\RFA$s and $\BFA$s,
which allows for greater flexibility and capability
in recognizing a wider range of languages.

\begin{figure*}[]
\hfill
\begin{subfigure}{0.3\textwidth}\center
  \input{pic/rfabfa-stringdag}
% The serialization of a picture as a string DAG,
% for the $\RFA$ (green) and $\BFA$ (brown) scanning strategy,
% enables a deterministic DAG automaton
% to recognize alternating stripes.
  \caption{The serialization of a picture as a string DAG
for the $\RFA$ (green) and $\BFA$ (brown)
enables deterministic recognition of the language $\Lstripes$.\\
  }
  \label{pic:stripes-rfabfa-serialized}
\end{subfigure}
\hfill
\begin{subfigure}{0.3\textwidth}\center
    \input{pic/grass-left}
    \caption{For $\linlDAG$, with vertical edges only at the left border,
          deterministically cannot be detected
	  which stripe color occurs first; the first color is hard-coded \dots}
    \label{pic:stripes-grass-left}
\end{subfigure}
\hfill
\begin{subfigure}{0.3\textwidth}\center
    \input{pic/grass-right}
    \caption{\dots{} however this is possible
          with the right border connected, instead.
	  For $\linrDAG$, all stripe DAGs are accepted,
	  not only those with the fixed first color.}
    \label{pic:stripes-grass-right}
\end{subfigure}
\captionsetup{subrefformat=parens}
\caption{
The DAG encodings
$\bfaDAG$,
$\rfaDAG$,
$\linlDAG$ and
$\linrDAG$
recognize horizontal stripes \emph{deterministically}:
\\
The above figures illustrate with their red edges those edges
where the first row's color is determined.
\subref{pic:stripes-rfabfa-serialized}~While
the total orders on the vertices imposed by the $\RFA$ and the $\BFA$
scanning strategies with the DAG encodings
$\bfaDAG$ and $\rfaDAG$ can detect the first row's color,
\subref{pic:stripes-grass-left}~the DAG encoding
$\linlDAG$, even though permitting the same number of edges,
is blind to the row colors in the input.
Therefore, even though a DAG automaton can detect striped pictures with
the encoding $\linlDAG$, it cannot recognize $\Lstripes$,
but only a subset of it -- the set of those pictures whose first line
exhibits the hard-coded color.
\subref{pic:stripes-grass-right}~However, sees the first row's color
and can alternate the colors subsequently.
}
\label{fig:stripes}
\end{figure*}

\subsubsection{Adapted Scanning Direction for $\RFA$ and $\BFA$}
\label{subsubsec:dia}

As hinted in the previous section, an
$\RFA$ can recognize different languages by altering its scanning strategy. More precisely, it need not adapt the strategy; changing direction suffices.
 
%As hinted in the previous section, an
%\RFA can recognize different languages by altering its scanning strategy. More precisely, it does not need to adapt the strategy; it suffices to change direction.
\begin{example}[\tbd{DAG language of diagonals}]\label{ex:dia}
Let $\Ldia$ be the language where on every diagonal line
oriented southeast all pixels have the same color.
Fig.~\ref{fig:cycdia} illustrates this language with one diagonal only.
%Formally,
$$\Ldia =  \{\; P \in \Sigma^{m,n}\mid P_{i, j} = P_{i+k, j+k}
\text{ for } i,i+k\in[m]
\text{ and } j,j+k\in[n]
\;\}$$%.
The language $\Ldia$ is not recognizable by an
RFA~\cite[Lemma 47]{DBLP:journals/jcss/FernauPST18}\footnote{The language
in the cited proof is restricted to a specific diagonal
but that does not affect the validity of the pumping argument
for our language.}.

A DAG automaton can recognize a picture $\Pdia \in \Ldia$ deterministically
by encoding $\Pdia$ as $\diaDAG(\hPdia)$.
The rules
$\lambda\sharparrow\#$
and
$\#\sharparrow\lambda$
detect the border.
For all colors $\sigma\in\Sigma$, the rules
    $\#\sigmaarrow\sigma$,
    $\sigma\sigmaarrow\sigma$ and
    $\sigma\sigmaarrow\#$
ensure one fixed color for each diagonal.
Like that, a $\DDA$ accepts a DAG $\diaDAG(\hPdia)$.
By Def.~\ref{def:encdag}, $\diaDAG(\hPdia)$ is a disconnected DAG.
The DAG encoding $\diaDAG$ uses the edges
illustrated in green in Fig.~\ref{pic:dia}.
We could add,
as illustrated with black arrows in Fig.~\ref{pic:dia},
the edge sets $\edownright$
and edges from right to left at the bottom border to the DAG encoding,
in order to obtain a \emph{connected} DAG.
But we do not need these edges for recognizing the language $\Ldia$.
The automaton would need it however,
for recognizing diagonal stripes.
Recall that the border ensured the stripes in Example~\ref{ex:stripes}.
Analogously, with these additional edges,
a DAG automaton recognizes diagonal stripes deterministically.
\end{example}

An $\RFA$ scans row by row.
In order for an $\RFA$ to recognize $\Ldia$,
we could either rotate the pictures by the angle of $45^\circ$ clockwise,
or we could adapt the direction of the scanning strategy.
If an $\RFA$ uses its scanning strategy in the direction southeast
instead of east,
thus $\measanglerdtose$,
it will scan diagonals instead of rows. Similarly,
a $\BFA$ can alternate its scanning direction
between northwest and southeast instead of west and east.
The \emph{total order} for the diagonal scanning of an $\RFA$ %strategy
of a two-dimensional string $S$ in the direction southeast
is given by:
%\begin{itemize}
%\item $\RFAdia$:
$$ S(i_1j_1) \prec S(i_2j_2)
\emph{ if }
  (i_1+i_2 < j_1+j_2)
  \lor ((i_1+i_2 = j_1+j_2) \land (j_1 < j_2))$$
%  and
%$(i < k) \lor (i = k \land j < l)$
%\item $\BFAdia$:
%$S(i_1j_1) < S(i_2j_2) $
%if
%$
%  (i_1+i_2 < j_1+j_2)
%  \lor
%  ((i_1+i_2 = j_1+j_2) \land
%  \\
%    ((i_1 \mod 2 = 0 \land j_1 < j_2) \lor
%     (i_1 \mod 2 = 1 \land j_1 > j_2)))
%$
%.
%\end{itemize}
and similarly for a $\BFA$.

This shows that an $\RFA$ can detect lines in the scanning direction.
With a DAG encoding $\squarevfill$ or a scanning direction downwards,
$\DDA$ or $\RFA$, respectively,
    could recognize a language with vertical lines $L_|$.
With a DAG encoding $\symbolDia$ or a scanning direction southwest,
$\DDA$ or $\RFA$ could recognize
a language with diagonal lines oriented to the southwest $L_/$.
Adapting the scanning strategy to the language is kind of `cheating'
because we should not know which language we will parse.
Otherwise,
we could always encode a part of the language into the scanning strategy.
This would outsource the complexity of the membership problem for a language
into its automaton's description via the scanning strategy.
So we would hide a part of the language in the scanning strategy.
This can be appropriate for certain use cases.
However, if we do not know which angle for lines to expect in advance, we can use a grid.
A grid has the advantage that it is not tailored to the language.
The DAG encoding $\cooDAG$ serves as one option for implementing a grid.
Given a picture DAG $\cooDAG(\hPdia)$
with one diagonal of the color $\stateB\in\Sigma$,
the rule for a pixel of the diagonal
$tq\Barrow t\!p$
and the rule for the pixel on its right-hand side
$pt\Warrow qt$
form the rule cycle in Fig.~\ref{pic:diarulecycle}.
This rule cycle yields the diagonal in the picture
shown in Fig.~\ref{pic:diacyclepath}.
%The DAG encoding $\cooDAG$ can detect straight lines for all angles.
It is easy to verify that the DAG encoding $\cooDAG$
can detect deterministically horizontal, vertical, and diagonal lines.

The DAG encoding $\cooDAG$ enables the detection of horizontal, vertical, and diagonal patterns within a single language. In contrast, an $\RFA$ or $\BFA$ must be tailored with a specific scanning direction and, as a result, can recognize only one such pattern per language. Consequently, a DAG automaton operating on pictures encoded via $\cooDAG$ exhibits strictly greater expressive power than both $\RFA$ and $\BFA$. The following section identifies the classical picture automaton model to which the picture-to-DAG encoding $\cooDAG$ corresponds.
%yvo: no idea what I wanted to say here:
%Whereas the regular string languages are closed under reversal

\begin{figure*}[t]
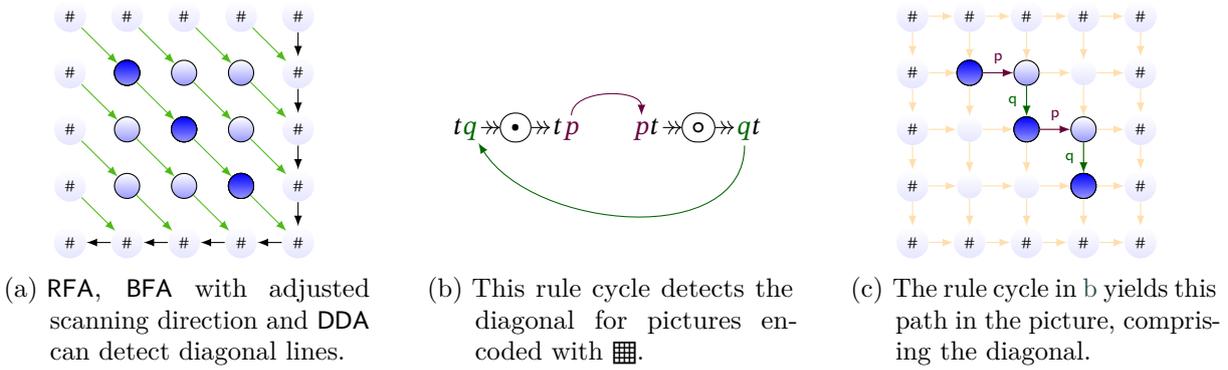

  \begin{subfigure}{0.3\textwidth}\center
    \input{pic/dia.tex}
    \caption{
    $\RFA$, $\BFA$ with adjusted scanning direction
    and $\DDA$ can detect diagonal lines.}
    \label{pic:dia}
  \end{subfigure}
  \hfill
  \begin{subfigure}{0.3\textwidth}\center
    \input{cycle/dia}
    \caption{
    This rule cycle detects the diagonal
    for pictures encoded with $\cooDAG$.
    %Note that it is not top-down deterministic.
    }
    \label{pic:diarulecycle}
  \end{subfigure}
  \hfill
  \begin{subfigure}{0.3\textwidth}\center
    \input{pic/diacyclepath}
    \caption{
    The rule cycle in
    \subref{pic:diarulecycle}
    yields this path in the picture,
    comprising the diagonal.
    }
    \label{pic:diacyclepath}
 \end{subfigure}
\captionsetup{subrefformat=parens}
\caption{
% The picture language automaton \subref{pic:adia}
% accepts DAG pictures with a diagonal \subref{pic:dia}.
% The rule cycle \subref{pic:diacyclepath}
% yields a path in the DAG \subref{pic:diarulecycle}
% for the visible diagonal
% and the transparent diagonal above it.
% The subgraph for the visible diagonal is an unconnected graph
% and thus no path in the DAG
\subref{pic:dia}
    An $\RFA$ or $\BFA$ can accept a picture $P_\backslash$
    with diagonal lines
    if the angle for the scanning strategy is decremented by $45^\circ$
    (the direction of the scanning strategy is shown in the picture,
    but not the strategy itself),
    just as a $\DDA$ can for the picture
    being encoded as $\diaDAG(\hPdia)$ (see picture).
    For the DAG encoding $\cooDAG$, which has the advantage that
    this encoding is not tailored to the language,
    the rule cycled \subref{pic:diarulecycle} yields
    the path which includes the diagonal~ \subref{pic:diacyclepath}.
}
\label{fig:cycdia}
\end{figure*}

%yvo: no idea what I wanted to say here:
%Whereas the regular string languages are closed under reversal

\subsubsection{Comparison of online tessellation automata to DAG automata}
\label{subsubsec:ota}
The DAG encoding $\cooDAG$ provides the capability to recognize 2D patterns
that an $\RFA$ cannot.
This leads us to the question: What capabilities
does a DAG automaton possess when processing pictures encoded
via the picture-to-DAG encoding $\cooDAG$?
In order to answer this question, we first show that the boundary
of a picture does not increase the expressiveness of a DAG automaton.
This observation will ease the proof of expressiveness
for the encoding $\cooDAG$.

The concept of encoding a picture $P$ as a boundary picture $\hat P$
raises the question
of whether the presence of this boundary influences expressive power.
The following lemma asserts that, for a picture language $L$, padding pictures with sentinels $\#$,
when encoding pictures using the picture-to-DAG encoding $\cooDAG$,
does not affect whether an $\NDA$ can accept the picture language $L$.

\begin{restatable}[Boundary invariance of $\NDA$] {lemma}{theoremhatPeqP}
%\begin{lemma}[Boundary invariance of $\NDA$]
\label{theorem:hatPeqP}
  Let $\hat A=(\hat N,\Sigmasharp,\hat R)$ be an $\NDA$ for recognizing
boundary pictures $\hat P\in\hatpicdags$ encoded as $\cooDAG(\hat P)$
with $P \in \picdags$.
Then there exists an equivalent $\NDA$ $A=(N,\Sigma,R)$
recognizing the same set of pictures encoded without the boundary
as $\cooDAG(P)$, but restricted to \emph{proper} pictures,
meaning pictures with more than one row and column:
$$
  \forall m,n\!>\!1\ \forall P \in\Sigma^{m,n}:
  \cooDAG(\hat P)\in L(\hat A) \iff \cooDAG(P) \in L(A).$$

\end{restatable}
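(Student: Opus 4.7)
The lemma is a biconditional, so I would prove both directions separately. The easy direction is from $A$ to $\hat A$: given $A=(N,\Sigma,R)$, I would let $\hat A=(N,\Sigmasharp,\hat R)$ where $\hat R$ contains (i) every rule of $R$ extended, for each border vertex of $P$, by one arbitrary state of $N$ per interface edge that the boundary introduces at that position, and (ii) every possible rule of the appropriate in/out-degree whose label is $\#$. Because the $\#$-rules impose no constraint, every accepting run of $A$ on $\cooDAG(P)$ extends to an accepting run of $\hat A$ on $\cooDAG(\hat P)$ by freely labelling the boundary, and conversely any accepting run of $\hat A$ on $\cooDAG(\hat P)$ restricts to an accepting run of $A$ on $\cooDAG(P)$.

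\paragraph{Hard direction ($\hat A\to A$).} My approach here is to \emph{fold} the boundary of $\hat P$ into the border rows and columns of $P$. The boundary is a fixed frame of $\#$-vertices whose four sides are chains under $\cooDAG$, so a run of $\hat A$ on any side is essentially a walk in a finite transition graph derived from $\hat R$. I would let the state set of $A$ be $N=\hat N\cup\hat N^2$ and use pair states $(q_H,q_h)$ on the edges of the top row, bottom row, leftmost column and rightmost column of $\cooDAG(P)$: the first component plays the role that edge would play in a run of $\hat A$, while the second component tracks the state on the parallel boundary edge just outside $P$. Interior vertices of $P$ reuse $\hat R$ rules unchanged. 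At a non-corner top-row vertex $(1,j)$ with $1<j<n$, the synthesised rule $(q_H,q_h)\sharparrow (q_H',q_h')\,q_v$ is placed in $R$ exactly when $\hat R$ contains both $q_h\sharparrow q_h'\,q_d$ (the $\#$-vertex just above) and $q_d\,q_H\sharparrow q_H'\,q_v$ (the $P$-vertex itself); the constructions for the other three sides are symmetric.

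\paragraph{Main obstacle and verification.} The hardest case will be the four corners of $P$: there two boundary chains meet, and the corresponding $P$-vertex has either no incoming edges ($\lambda\sharparrow\cdot$) or no outgoing edges ($\cdot\sharparrow\lambda$) in $\cooDAG(P)$, whereas in $\cooDAG(\hat P)$ its neighbourhood contains three $\#$-vertices and possibly another border vertex. Each corner rule in $R$ must therefore bundle the rule of the corner $\#$-vertex, the rules of the first $\#$-vertex on each adjacent side, and the rule of the $P$-vertex, into a single rule that initialises (or finalises) both pair-state second components in a mutually compatible way. The hypothesis $m,n>1$ is used precisely here, to guarantee that the four corners are distinct and the four border segments do not collapse, keeping the case analysis finite and non-overlapping. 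Correctness is then established by a run-level correspondence: projecting the second component of every pair state off and inserting a compatible labelling of the boundary frame (reconstructed from these second components and from unconstrained runs on the remaining boundary edges) converts an accepting run of $A$ on $\cooDAG(P)$ into one of $\hat A$ on $\cooDAG(\hat P)$, and restricting in the opposite direction is immediate.
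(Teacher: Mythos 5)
Your construction is essentially the paper's own proof: the paper likewise builds $A$ by having the border vertices of $\cooDAG(P)$ guess the labels that $\hat A$ would assign around the adjacent $\#$-vertices, storing these guesses in enriched states on the border edges, checking local compatibility with $\hat R$, bundling several boundary rules at the four corners, and invoking $m,n>1$ so that the sides and corners are handled separately (the paper additionally notes this is where boundary removal genuinely fails for one-row/one-column inputs). Your ``easy direction'' ($A\to\hat A$) is not actually required by the statement, which only asks for the construction $\hat A\to A$ together with the run correspondence you give in the verification paragraph, and the use of $\sharparrow$ for rules at $\Sigma$-labelled vertices is a harmless notational slip.
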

In fact, the use of the
%picture-to-DAG
encoding $\cooDAG$ corresponds
directly with an online tessellation automaton:

\begin{restatable}[$\twoOTA  \eqcoo \NDA$] {theorem}{theoremnotaeqda}
%\begin{theorem}[%2DOTA det. DAG A eqivalent on pictures]
%$\cooDAG(\Lclass{2DOTA}) = \Lclass{DagA^{det}} \cap \cooDAG(\Sigmastarstar)$
\label{theorem:notaeqda}
  %\input{fig/dotaeqdetdaga4pics}
%\begin{theorem}[%2DOTA det. DAG A eqivalent on pictures]
%$\twoOTA  \eqcoo \NDA$]
%$\cooDAG(\Lclass{2DOTA}) = \Lclass{DagA^{det}} \cap \cooDAG(\Sigmastarstar)$
%\label{theorem:otaeqda}
%DAG automata run on the picture DAGs $\cooDAG(\Sigmastarstar)$ and
%two-dimenstional online tesselation automata
%recognize the same class of picture languages,
%  respectively for the deterministic and nondeterministic variants.
Online tessellation automata and DAG automata are equivalent with respect to the picture-to-DAG encoding $\cooDAG$.
%$$\begin{array}{l}
%\twoDOTA \eqcoo \DDA,\\
%\twoOTA  \eqcoo \NDA.
%\end{array}$$
%\end{theorem}

\end{restatable}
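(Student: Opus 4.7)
The plan is to establish the two inclusions required by $\eqcoo$ via direct simulations, leveraging Lemma~\ref{theorem:hatPeqP} so that I may work with the boundary encoding $\cooDAG(\hat P)$ where every interior vertex has a uniform $2$-in/$2$-out profile.

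For the direction $\twoOTA \Rightarrow \NDA$, given a $\twoOTA$ $A = (\Sigma, Q, q_0, F, \delta)$ I will build an $\NDA$ $\hat A$ accepting $\cooDAG(\hat P)$ so that its edge labels mirror the OTA states at each interior position. Boundary vertices (label $\#$) will propagate $q_0$ via the rules $(\lambda\sharparrow q_0 q_0)$, $(q_0\sharparrow q_0 q_0)$ and $(q_0\sharparrow q_0)$ in the appropriate corner and side cases; every interior vertex with label $\sigma$ will apply $(q_1 q_2 \sigmaarrow q' q')$ for every $q' \in \delta(q_1, q_2, \sigma)$, so that the OTA state computed at $(i,j)$ appears identically on both outgoing edges. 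To decide acceptance at the single bottom-right corner, I introduce two disjoint tagged copies $Q^R, Q^D$ of $Q$ and let the right- and bottom-boundary chains carry forward the adjacent OTA state in the respective tag, finishing with the corner rule $(q^R q^D \sharparrow \lambda)$ whenever $q \in F$. A subsequent application of Lemma~\ref{theorem:hatPeqP} then yields the equivalent $\NDA$ for the plain encoding $\cooDAG(P)$.

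For the direction $\NDA \Rightarrow \twoOTA$, given an $\NDA$ $A = (Q, \Sigma, R)$ I will construct a $\twoOTA$ $A' = (\Sigma, Q', q_0', F', \delta')$ whose state at position $(i,j)$ is a pair $(q_d, q_r)$ recording the labels that a run of $A$ would place on the down- and right-outgoing edges of the encoded DAG vertex at $(i,j)$. The transition lifts directly from the rule set: $(q_d, q_r) \in \delta'((q^t_d, q^t_r), (q^l_d, q^l_r), \sigma)$ iff $R$ contains a rule combining the incoming-top label $q^t_d$ with the incoming-left label $q^l_r$ and emitting $(q_d, q_r)$ on the outgoing edges. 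The initial state $q_0'$ plays the role of ``missing neighbour'' so that transitions from $q_0'$ use the shorter-head rules of $A$ at the root, top row, and left column, and $F'$ will consist of those pairs at $(m,n)$ for which the right- and bottom-boundary chains of $A$ can be closed at the corner. Correctness in both directions will follow by a row-by-row induction matching the OTA and NDA runs.

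The main obstacle is that a DAG automaton is position-agnostic --- a rule matches any vertex with the given label and in-/out-degree profile --- whereas the OTA's behaviour depends on position both through the initial boundary and the acceptance test at $(m,n)$. The boundary encoding $\cooDAG(\hat P)$ together with Lemma~\ref{theorem:hatPeqP} unifies every interior vertex to the $2$-in/$2$-out shape and makes the top and left boundary carry $q_0$ explicitly, which handles most of the mismatch. The delicate part is routing the OTA state at $(m,n)$ to the single corner vertex without admitting spurious accepting runs; the disjoint tags $Q^R, Q^D$ on the two boundary chains force the corner rule to fire only when both chains have truly carried the same accepting OTA state from position $(m,n)$.
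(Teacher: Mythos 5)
Your proposal follows essentially the same route as the paper's proof: a mutual simulation in which the $\twoOTA$ state at a position is the pair of labels an $\NDA$ run places on the two outgoing edges of the corresponding DAG vertex (the unused components acting as don't-cares), and conversely the $\NDA$ duplicates the guessed $\twoOTA$ state on both outgoing edges of each interior vertex of $\cooDAG(\hat P)$, with Lemma~\ref{theorem:hatPeqP} used to tame the boundary. The only deviations are bookkeeping: you funnel acceptance to the bottom-right corner via tagged right/bottom boundary chains where the paper instead nondeterministically guesses the cell at position $(m,n)$ and emits the fixed state $\qfdag\qfdag$ consumed by uniform boundary rules, and you leave the empty-picture case untreated, which the paper handles explicitly.
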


If we use input-driven DAG encoding instead of input-agnostic DAG encoding, a DAG automaton can recognize more picture languages than online tessellation automata.

\begin{restatable}{theorem}{theoremescapeotasgraffito}
\label{theorem:escapeotasgraffito}
  %\begin{theorem}[]\label{theorem:escapeotasgraffito}
There exists a picture language %$L\in\Sigmastarstar$
that a $\DDA$ recognizes with the help of an input-driven DAG encoding,
which neither an online tessellation nor a sgraffito%
\footnote{We do not provide the definition of a sgraffito automaton here,
since we only refer to the result that $\Lbalance$ cannot be recognized
by a sgraffito automaton.}%
\cite{DBLP:conf/ncma/OttoM16, DBLP:journals/ita/PrusaMO14}
automaton can recognize.
%\end{theorem}

\end{restatable}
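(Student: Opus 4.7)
The plan is to take the picture language $\Lbalance$ as the witness. By the results in the theorem's footnote, $\Lbalance$ lies outside the class recognized by sgraffito automata, and since sgraffito automata strictly subsume both deterministic and nondeterministic $\twoOTA$, this immediately rules out $\Lbalance$ from $\twoOTA$-recognition as well. Hence it suffices to exhibit a $\DDA$ that, together with an input-driven DAG encoding, recognizes $\Lbalance$. The language $\Lbalance$ typically consists of pictures in which two designated symbols (say $a$ and $b$) occur the same number of times, and I would tailor the construction to this formulation.

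For the construction, I would combine the input-agnostic backbone $\cooDAG$ of Def.~\ref{def:blindedgesets} with input-driven edges over a doubly ranked alphabet, assigning $r(a)=(0,1)$, $r(b)=(1,0)$, and $r(\sigma)=(0,0)$ for every other symbol. By Def.~\ref{def:inputdriven}, the combined encoding is defined precisely when the sums of the first and second rank components coincide, which for this ranking is equivalent to $|P|_a=|P|_b$; thus any unbalanced picture has no encoding and is vacuously rejected. The $\DDA$ itself I would build on top of the rule set from Theorem~\ref{theorem:notaeqda} for a trivially-accepting $\twoOTA$ on the $\cooDAG$ scaffolding, extending it by a single fresh state $t$ that decorates every input-driven edge. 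The rules at $a$- and $b$-vertices insist on exactly one additional $t$-labeled outgoing, respectively incoming, edge; this preserves top-down determinism in the sense of Def.~\ref{def:dagautomaton}, since the vertex label together with the incoming labeling uniquely fixes the outgoing labeling. Intersected with the input-driven $\cooDAG$-encodings of pictures (Def.~\ref{def:eqpic}), the $\DDA$ then accepts exactly the encodings of balanced pictures.

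The principal obstacle is showing that whenever $|P|_a=|P|_b$, there is at least one acyclic wiring of the extra edges, so that the input-driven encoding is actually \emph{defined} and not merely a priori balanced. I would overcome this by giving a canonical construction: fix a topological order of the $\cooDAG$ backbone (for instance row-major), and match each $a$-vertex with the next still-unmatched $b$-vertex that occurs strictly later in this order. If the picture places some $b$'s before their matching $a$'s, I would augment the ranking with auxiliary routes through boundary vertices, or introduce a small number of additional ranked markers, so that every input-driven edge can be forced forward in the topological order and cannot close a cycle with the $\cooDAG$ scaffolding. A minor subtlety --- that Def.~\ref{def:inputdriven} leaves the specific wiring underdetermined --- is immaterial here because the $\DDA$ places the same neutral state on every input-driven edge and is therefore insensitive to which admissible wiring is chosen.
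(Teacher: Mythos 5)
Your choice of witness language ($\Lbalance$) matches the paper, but your construction has a genuine gap: you attach the input-driven edges \emph{on top of} the $\cooDAG$ backbone, and then claim the combined encoding is defined exactly when $|P|_\mathtt{a}=|P|_\mathtt{b}$. That claim is false. With ranks $r(\mathtt{a})=(0,1)$, $r(\mathtt{b})=(1,0)$, every extra edge must run from an $\mathtt{a}$-vertex to a $\mathtt{b}$-vertex, and if that $\mathtt{b}$ lies weakly north-west of the $\mathtt{a}$ in the grid, the edge closes a cycle with the $\cooDAG$ backbone; by Def.~\ref{def:inputdriven} the encoding is then undefined. The balanced one-row picture $\mathtt{b}\mathtt{a}$ already has no admissible wiring at all, so under Def.~\ref{def:eqpic} your $\DDA$ recognizes only a proper subset of $\Lbalance$, and nothing in the cited results says that subset escapes $\twoOTA$ or sgraffito automata. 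Your proposed repairs do not exist within the paper's definitions: ranks are attached to symbols of the fixed alphabet (so ``additional ranked markers'' cannot be inserted into the input), an input-driven edge joins exactly two vertices (so there is no ``routing through boundary vertices''), and allowing the pairing edge to point in either direction would require $\mathtt{a}$ to carry rank $(0,1)$ for some occurrences and $(1,0)$ for others, which a doubly ranked alphabet cannot express. Indeed the paper itself remarks, right after the theorem, that keeping the $\cooDAG$ topology forces one to add the mirrored rules $(\lambda \rarrow{b} p')$, $(p' \rarrow{a} \lambda)$ and thereby lose determinism --- so your route cannot yield a $\DDA$ for all of $\Lbalance$.

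The paper avoids the problem by dropping the input-agnostic backbone entirely: the picture is encoded \emph{purely} input-driven, so the resulting DAG is just a disjoint union of single edges, each from an $\mathtt{a}$-vertex to a $\mathtt{b}$-vertex. Acyclicity is then automatic, the encoding is defined iff the counts balance, and the two-rule automaton with $R=\{(\lambda \rarrow{a} p),\ (p \rarrow{b} \lambda)\}$ is top-down deterministic and accepts exactly the encodings of pictures in $\Lbalance$. (A minor further point: you justify non-recognizability by $\twoOTA$ via the claim that sgraffito automata subsume $\twoOTA$; the paper instead cites the counting limitation for both models directly. Make sure whichever inclusion you invoke is actually backed by the cited references rather than asserted.)
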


The DAG vertex and edge labels do not reflect the topology of the picture.
We can retrieve the picture by the unique names of their vertices
$v\in \{P(i,j)$ for $i\in [m],j\in [n]\}$,
but if we prefer a topology preserved also by the wiring of the DAG,
we can combine an input-agnostic with an input-driven DAG encoding.
Encoding a picture $P\in\Lbalance$ as $\cooDAG(\hat P)$
with additional input-driven edges, as specified in the Proof of
Theorem~\ref{theorem:escapeotasgraffito} above,
yields proper DAGs for encoding
this picture language while preserving the picture not only via the unique
vertex naming but also via the topology given by the DAG encoding $\cooDAG$.
Note that $\Lbalance$ includes the picture
$P'$, where $\rfaDAG(\hat P') = b^na^n$.
Using the rule set $R$ from the theorem above
does not suffice since it would introduce cycles for $P'$.
This can be fixed easily by adding
$(\lambda \rarrow b p')$ and
$(p' \rarrow a \lambda )$ to the rules.
However, these additional rules turn the $\DDA$ into an $\NDA$
since they introduce nondeterminism.
Hence, the topology $\cooDAG$ within the edge wiring
comes at the price of losing determinism.

\section{Conclusions}
\label{sec:conclusions}

We have proposed a unifying framework for recognizing picture languages via DAG automata by encoding two-dimensional inputs into directed acyclic graphs (DAGs). Using different input-agnostic encodings of pictures into DAGs, we obtained DAG automata equivalent to two-dimensional returning finite automata, boustrophedon automata, and online tessellation automata.

%When we restrict to string languages, as languages of one-row pictures, input-agnostic encoding characterizes regular languages.\marginpar{input-agnostic edges restrict DAG automata to regular languages only if the edges are computed with a $\FSA$.}
%
%On the other hand,
Encoding pictures as DAGs using input-driven edges increases the power of DAG automata for recognizing both string and picture languages. Input-driven edges enable recognition of some context-sensitive string languages. In two dimensions, input-driven DAG automata accept a proper super-class of the class of recognizable languages.

DAG automata have many appealing properties. Blum and Drewes \cite{journals/iandc/BlumDrewes2019} proved that emptiness and finiteness are decidable for DAG automata, and the class of DAGs accepted by DAG automata is closed under union and intersection. These properties carry over to the case of accepting encodings of picture languages.
% does decidability of emptiness and finiteness really carry over to accepting picture languages with DAG automata. If the decision is that the automaton accepts a non-empty DAG, the DAG need not to encode a picture
% Similarly, if the DAG automaton accepts an infinite language, it is possible that the set of accepted encodings of pictures is not infinite.

Deterministic DAG automata can be reduced and their equivalence is decidable in polynomial time \cite{journals/iandc/BlumDrewes2019}. Equivalence for deterministic DAG automata with respect to picture language recognition is still an open problem, as two deterministic DAG automata accepting the same encodings of pictures need not be equivalent, as they can accept different sets of DAGs that do not encode pictures.

Ad Meeres \cite{AdMeeres2024} provides a framework to port known efficient finite state automata algorithms and properties from strings to DAG languages. By applying the techniques, we could obtain classes of picture languages that can be efficiently recognized by finite state automata for DAG languages.

\bibliography{bibliography_no_url}
\newpage
\appendix
\section{Appendix}
\label{sec:proofappendix}
%\noindent{\bf Proof of Theorem~\ref{} (cont.)}
%\noindent{\bf Proof of Theorem~2.8}

Here are the proofs omitted in the main text.

%\noindent{\bf Proof of Theorem~\ref{theorem:1dDDAeqL3}}
\theoremdDDAeqL*
\begin{proof}
%Let $\Sigma$ be an alphabet such that $\#\not\in\Sigma$. Then the set of all strings $w\in\Sigmastar$ such that $w$ encoded as a string DAG is accepted by a given DAG automaton is regular and for each regular language $L$, there exists a DAG automaton such that the set of
Let $A=(Q,\Sigma,R)$ be a DAG automaton. We will construct an  $\NFA$ $A_{\mathrm{nfa}}=(Q\cup\{q_0,q_f\},\Sigma,\delta,\{q_0\},$ $\{q_f\})$, where $q_0,q_f$ are new states not in $Q$, $q_0$ is the only initial state of $A_{\mathrm{nfa}}$, and $q_f$ is the only accepting state of $A_{\mathrm{nfa}}$, such that $L(A_{\mathrm{nfa}})$ is the set of all strings of the form $w$ such that $w$ encoded as string DAG is accepted by $A$. The automaton $A_{\mathrm{nfa}}$ will have the transition $(q,a,q')\in\delta$ for some $q,q' \in Q\cup\{q_0,q_f\}$ and $a \in \Sigma$ if
$$\begin{array}{ll}
   (q\circled{$a$}q')\in R & \mbox{where } q,q' \in Q \mbox{ and } q \ne \lambda \ne q'\mbox{, or} \\
   (\lambda\circled{$a$}q)\in R, & \mbox{where } q'\in Q \mbox{ and } q=q_0\mbox{, or} \\
   (q\circled{$a$}\lambda)\in R, & \mbox{where } q \in Q \mbox{ and } q'=q_f.
\end{array}$$
It is easy to show that for each string $w \in \Sigmastar$ (e.g., by induction on the length of $w$), it holds: there exists an accepting run of $A$ on a string DAG corresponding to $w$ if and only if $A_{\mathrm{nfa}}$ accepts the word $w$.
As  $\NFA$s accept the class of regular languages, it holds $\mathcal{L}_{\mathrm{onerow}}(DAG) \subseteq Reg$.

For the opposite direction, if $L$ is a regular language, then it is accepted by a $\DFA$ $A_{\mathrm{dfa}}=(Q,\Sigma,\delta,q_0,F)$. Now, we can construct a (top-down deterministic) DAG automaton $A=(Q,\Sigma,R)$ accepting the set of string DAGs corresponding exactly to words in $L=L(A_{\mathrm{dfa}})$. The automaton $A$ will have the following set of rules:
$$\begin{array}{ll}
(q\circled{$a$}q') & \mbox{for all states } q,q' \in Q \mbox{ such that } (q,a,q')\in\delta,\\
(\lambda \circled{$a$}q') & \mbox{for each state } q' \in Q \mbox{ such that } (q_0,a,q')\in\delta, and\\
(q\circled{$a$}\lambda) & \mbox{for each state } q \in Q \mbox{ such that } (q,a,q_f)\in\delta \mbox{ for some } q_f \in F.
\end{array}
$$

For a word $w\in\Sigma^*$, if the $\DFA$ $A_{\mathrm{dfa}}$ has an accepting computation on $w$, then the DAG automaton $A$ has a run on the string DAG corresponding to $w$. Hence, it holds $\mathcal{L}_{\mathrm{onerow}}(DAG) \supseteq Reg$, and the statement of the theorem follows.
\end{proof}

%\noindent{\bf Proof of Theorem~\ref{theorem:anbn}}
\theoremanbn*
\begin{proof}
The context-free language
$\{ \mathtt{a}^n \mathtt{b}^n \mid n \in \mathbb{N} \}$
can be recognized by a $\DDA$ over a ranked alphabet
$(\{\mathtt{a},\mathtt{b}\}, r)$
if it is encoded as a string DAG with additional input-driven edges.
Let $A = (\{a,b,p\},(\{\mathtt{a},\mathtt{b}\}, r),R)$ be a $\DDA$
over a ranked alphabet $(\{\mathtt{a},\mathtt{b}\}, r)$
with the rules
$$
R =
(\lambda \rarrow{a} a),
      (a \rarrow{a} ap),
     (ap \rarrow{b} b),
     (bp \rarrow{b} b),
     (ap \rarrow{b} \lambda),
     (bp \rarrow{b} \lambda).
$$
Then, $A$ will accept exactly those strings encoded as string DAGs
of the form $\mathtt{a}^n\mathtt{b}^n$
with additional input-driven edges.
The edge labels $a$ and $b$ ensure that the string is of
the form $\mathtt{a}^*\mathtt{b}^*$.
Each $\mathtt{a}$ is matched with a $\mathtt{b}$ by the edge label $p$.
Without the ranked alphabet constraints, a string with an unbalanced number of
the symbols $\mathtt{a}$ and $\mathtt{b}$
would still yield a string DAG.
However, due to the additional edges
caused by the ranked alphabet $(\Sigma ,r)$,
where $r(\mathtt{a})=(0,1)$ and $r(\mathtt{b})=(1,0)$,
only strings with an equal number of
the symbols $\mathtt{a}$ and $\mathtt{b}$ result in a valid DAG encoding.

A $\DDA$ recognizes the context-sensitive language
$\{ \mathtt{a}^n\mathtt{b}^n\mathtt{c}^n \mid n \in \mathbb{N} \}$
in the same way.
Again, the input-driven edges match the number of occurrences,
this time for three symbols.
One possible rule set achieving the desired matching would be
$$
R =
(\lambda \rarrow{a} ap),
      (a \rarrow{a} ap),
     (ap \rarrow{b} bq),
     (bp \rarrow{b} bq),
     (bq \rarrow{c} c),
     (cq \rarrow{c} c),
     (cq \rarrow{c} \lambda).
$$
Here, the vertices labeled by $\mathtt{b}$ match with
both $\mathtt{a}$, via the edge label $p$,
and $\mathtt{c}$, via the edge label $q$.
Again, the edges of the string DAG ensure
the sequence of the symbols, here $\mathtt{a}^*\mathtt{b}^*\mathtt{c}^*$,
and the input-driven edges enforce the
balance of occurrences because otherwise the string does not yield a DAG.
%TODO: See picture \ref{fig:anbn}

% TODO:
%
% - identify the edges for the string: order of edges
%
% - strip off the non-string edges: order of edges
%
% - dag of the first dimension = the string DAG $a^*b^*$
%
% - projected to first dimension = the string $a^*b^*$
%
% - dag of the second dimension = the disconnected DAG $ab \& ab \& \dots ab$
%
% - projected to the second direction = the multiset of strings $ab$

\end{proof}

%\noindent{\bf Proof of Theorem~\ref{theorem:stringdag2rfabfa}}
\theoremstringdagrfabfa*
\begin{proof}
% Proof in bullet points:
% 1. $\RFA$ $\BFA$ scanning strategy
% 2. string DAG
% 3. string DAG eq DFA
% 4. DFA eq $\RFA$ $\BFA$
Both $\RFA$ and $\BFA$ are $\FSA$s
equipped with their respective scanning strategy
for two-dimensional input yielding a one-dimensional string.
The DAG encodings mimicking these scanning strategies
by representing the picture $P$ as the DAGs
$\rfaDAG(\hat P)$ and $\bfaDAG(\hat P)$, resp.,
yield picture DAGs which are string DAGs,
illustrated in Fig.~\ref{pic:stripes-rfabfa-serialized}.
These string DAGs represent exactly the strings
the respective scanning strategies for $\RFA$ and $\BFA$ yield.
Thus:
\newcommand\directprec{\prec\!\!\prec}
$$
\src\bigl( \hat P(i_1,j_1) \bigr) = \tar\bigl( \hat P(i_2,j_2) \bigr)
\iff
\hat P(i_1,j_1) \directprec \hat P(i_2,j_2)
$$
where $\directprec$ denotes the covering relation,
meaning that no intermediate position $\hat P(i',j')$ exists such that
$\hat P(i_1,j_1) \prec\hat P(i',j') \prec \hat P(i_2,j_2)$.
According to Lemma~\ref{theorem:1dDDAeqL3}, the $\DDA$ for string DAGs
is equivalent to a $\DFA$,
just as a $\BFA$ with its scanning strategy is.
The theorem is proved since all three automaton models act on the same (underlying) string.
%completing the proof.
%So, serializing the picture into a string yields ...
\end{proof}

%\noindent{\bf Proof of Theorem~\ref{theorem:ddalinrsimddalinl}}
\lemmaddalinrsimddalinl*
\begin{proof}
For an $m \times n$ picture $P$ encoded as
$\linlDAG(\hat P)$
as well as encoded as
$\linrDAG(\hat P)$,
every row excluding the border vertices corresponds to a string DAG.
For a picture encoded by $\linlDAG$,
the rules applied on the left border
$\lambda\sharparrow q_{\#1}  q_0$,
$q_{\#0}\sharparrow q_{\#2}  q_1$,
$q_{\#1}\sharparrow q_{\#3}  q_2$, \dots
$q_{\#m}\sharparrow q_{\#m+1}q_m$,
$q_{\#m+1}\sharparrow        q_{m+1}$
assign to every $i$-th row
its individual starting state $q_i$
(while assigning the edge labels $q_{\#1}, q_{\#2}, \dots, q_{\#m}$
to the edges from $\edownleft$ on the left border).

On the contrary, for the DAG-encoding $\linrDAG$,
every string DAG for a row starts with the same start state,
let it be $q_\text{row}$,
since a deterministic DAG automaton does allow only one rule
$\lambda \sharparrow q_\text{row}$ for a vertex labeled with $\#$,
which has exactly one edge and that one being outgoing.
However, the edges $\edownright$ pointing downwards the right border
can be labeled identically to the labels on the left border with the DAG-encoding $\linlDAG(\hat P)$,
as $q_{\#1}, q_{\#2}, \dots, q_{\#m}$.
Namely, the $\DDA$ obtains the information of the respective starting state
at the end of processing the string DAG of a row from the last node (border) of the above row.
Guessing the start state can be done nondeterministically.
This does not impose a problem for a $\DFA$ processing a string,
since for regular string languages this can be determinized by
the classical power set construction.
A $\DDA$ can recognize regular string languages
(c.f.~Theorems~\ref{theorem:1dDDAeqL3}),
consequently essentially processing as a $\DFA$,
thus it can defer the information of the start state~$q_i$
for the string DAG in row~$i$ to the end of the string.
This completes the proof.
\end{proof}

%\input{theorem/linlDAGdda-neq-linrDAGdda} empty

%\noindent{\bf Proof of Theorem~\ref{theorem:linlDAGdda-subset-linrDAGdda}}
\theoremlinlDAGddasubsetlinrDAGdda*
\begin{proof}
According to Theorem~\ref{theorem:ddalinrsimddalinl}, a $\DDA$ can simulate
the DAG-encoding $\linlDAG$ by using the encoding $\linrDAG$ instead.
However, the reverse direction does not apply.
Recall that Example \ref{ex:stripes} provided the intuition on the fact that
the language $\Lstripes$ cannot be recognized by a $\DDA$ if
%the picture is encoded by
it uses $\linlDAG$, but it can when it applies $\linrDAG$.
% The picture languages stripes cannot be recognized by a $\DDA$ but by a $\NDA$
% if the picture is encoded by $\linr$.

Formally, this is due to the different partial orders on the vertices
imposed by the picture-to-DAG encodings $\linlDAG$ and $\linrDAG$.
The first dark and the first light vertices
have as their least common ancestor the second $\#$ on the left border.
The least common ancestor edge is the ingoing edge to this sharp symbol,
colored in red in Fig.~\ref{pic:stripes-grass-left}.
With top-down determinism, this edge will always have the same edge label,
let it be $\stateWB$.
Since neither the dark vertex depends on the light one in the partial order,
nor vice versa, there must be a rule hardcoding the color of the first row.

For two colors, this can be either
  $(\stateWB \sharparrow \stateBW\stateW)$
  or
  $(\stateWB \sharparrow \stateBW\stateB)$.
Let
$L_\squaretopblack$ and $L_\squarebotblack$ denote the languages
with one of the rules above, respectively, hardcoding the first row's color.
Obviously, $\Lstripes = L_\squaretopblack \cup L_\squarebotblack$.
The union of these languages requires both rules in the rule set,
yielding a nondeterministic DAG automaton.
%
% - see stripe language in Example~\ref{ex:stripes}
%   that there is some DAG lang we cannot accept: union of stripes
%   without hardcoded first color Fig.~\ref{pic:stripes-grass-left}
%
%
% ====================================================
%
% - remark for other langs: $\linrDAG$ next lines have to guess because root but if input dependant ok
%
% - remark for other langs: $\linrDAG$ every lines can propage to subsequent lines
%
% - remark for other langs: $\linrDAG$ every lines has to guess, only at line end acceptance
%
\end{proof}
%
%
%
%\begin{corollary}[$\DDA \nequiv_\text{poset} \NDA$]\label{theorem:poset}
%On total orders $\DDA$ and $\NDA$ are of same power, on proper partial orders, not.
%\end{corollary}
%\begin{proof}
%According to \ref{theorem:linlDAGdda-subset-linrDAGdda},
%DDA $\neqlinl$ $\NDA$. Thus, there exists a DAG encoding
%inducing a partial order where $\DDA$ and $\NDA$
%do not recognize the same class of picture languages.
%\end{proof}

%\noindent{\bf Proof of Theorem~\ref{theorem:hatPeqP}}
\theoremhatPeqP*
\begin{proof}
Given an $\NDA$
$\hat A=(\hat N,\Sigmasharp,\hat R)$ for recognizing
pictures $P\in\picdags$ encoded as $\cooDAG(\hat P)$,
there exists an equivalent $\NDA$ $A'=(N',\Sigmasharp,R')$
recognizing the same set of nonempty pictures as $\hat A$
but labeling the edges adjacent to boundary vertices
(i.e., those with vertex label \#)
only with a fixed state $\qzfdag\in\hat N'$.
This results in a uniform labeling of the boundary:
vertices labeled by the \#, adjacent edges by $\qzfdag$.

%If $\hat A$ accepts the empty picture $\Lambda$, we simply add the rules $
%(\lambda \rarrow{\#} \qzfdag\qzfdag),
%      (\qzfdag \rarrow{\#} \qzfdag)$, and
%     $(\qzfdag\qzfdag \rarrow{\#} \lambda)
%$ to the set of rules $R'$.
%This holds because DAG automata operate locally: for each vertex $v \in V$, acceptance depends only on its label $\ell(v) \in \Sigma \cup \{\#\}$ and the tuple of states labeling its incoming and outgoing edges.

The rest of the rules in $R'$ can be constructed so that, for a nonempty picture $P \in \Sigma^{m,n}$, $m\ge 1$ and $n \ge 1$:
\begin{itemize}
\item For the top-left corner picture position $\hat P_{1,1}$, a labeling of six ingoing and outgoing edges of $\hat P_{0,1}$ and $\hat P_{1,0}$ done by the DAG automaton $\hat A$ can be guessed and stored in the labeling of the two outgoing edges from the vertex corresponding to the symbol $\hat P_{1,1}$ using a rule in $R'$. Actually, only the four outgoing edges must be stored.

\item For all positions of $\hat P_{i,j}$ adjacent to the left and top border (except the corners), that is for $(i\in \{2,\ldots,m-1\} \wedge j=1) \vee (i=1 \wedge j \in \{2,\ldots,n-1\})$ there must be rules in $R'$ that guess labeling of the outgoing edges of the closest border vertex by $\hat A$, store it in the states assigned to outgoing edges from the vertex $\hat P_{i,j}$ and check the compatibility with the labeling of ingoing edges of vertices corresponding to $\hat P_{i,j}$ and the border symbol. E.g., the rules in $R'$ that can be applied to $\hat P_{2,1}$ must ensure compatibility of the labeling of the ingoing edge from $\hat P_{1,1}$, which also has the information on the labeling of the edge from $\hat P_{1,0}$.

\item For the top-right corner position $\hat P_{1,n}$ rules in $R'$ must ensure compatibility with possible labeling of edges around the top-right border symbol and the vertex corresponding to the boundary position $\hat P_{1,n+1}$. The label on the edge between $\hat P_{1,n}$ and $\hat P_{2,n}$ must encode labeling of outgoing edges from $\hat P_{1,1}$ and $\hat P_{1,n+1}$.

    Similarly, the rules in $R'$ applicable to the vertex corresponding to the symbol $\hat P_{m,1}$ will encode the labeling around $\hat P_{m,1}$ and $\hat P_{m+1,1}$ done by $\hat A$.

\item For all positions of $\hat P_{i,j}$ adjacent to the bottom and right border (except the bottom-right corner), that is for $(i=m \wedge j \in \{1,\ldots,n-1\}) \wedge (i\in \{2,\ldots,m\} \vee j=n)$ there must be rules in $R'$ that guess labeling of the edges going to and from the closest border symbol assigned by the automaton $\hat A$, store it in the labeling of the outgoing edge which does not enter the border symbol and check their compatibility with the labeling of the preceding border symbol stored in the label of the preceding non-border symbol.

\item The rules in $R'$ applicable to the bottom-right corner vertex corresponding to $\hat P_{m,n}$ must ensure that there is a possible labeling by the automaton $\hat A$ also for the border symbols $\hat P_{m,n+1}$, $\hat P_{m+1,n}$ and $\hat P_{m+1,n+1}$.
\end{itemize}

In all rules in $R'$, all states assigned to the edges from or to a border symbol can be the same fixed state $\qzfdag\in\hat N'$.

In the next step, from $A'$, we can easily construct
an $\NDA$ $A=(N,\Sigma,R)$,
recognizing the same set of pictures as
$\hat A$ and $A'$
but encoded without the boundary
as $\cooDAG(P)$.
This boundary invariance, however, does not apply to one-column
and one-row pictures, thus to strings in vertial or horizontal orientation.
The reason is obvious: a string DAG has no orientation in a 2D-plane.
The boundary adds this orientation.
The equivalence of $A$ and $A'$ is thus restricted to pictures of
the dimensions $(m,n)$ where $m,n > 1$.
With boundary (for $\hat A'$), every picture vertex $P(i,j)$,
for $i \in [m]$ and $j \in [n]$, has in- and out-degree two and,
in outmost picture vertices,
it encodes the side of an outmost picture vertex.
In contrast, the degree of a boundary vertex does not reflect
on which of the four sides a vertex is located: top, bottom, left, or right.
However, this can be encoded into the edge labels, since the root rule
$\lambda\sigmaarrow q\outdown q\outright$ applied by NDA $A$ can be made  aware of the sides top and left
and can propagate that information through the states.
With this information, the corner rules $q\sigmaarrow q'$ of $A$ can detect
the bottom and right side and propagate it accordingly.
Accordingly, the encoding $\cooDAG$ is aware of its four sides,
regardless of whether we encode a picture or a boundary picture with it
This completes the proof that the boundary
does not enhance the expressive power for an $\NDA$
when accepting pictures encoded by $\cooDAG$.
\end{proof}

%\noindent{\bf Proof of Theorem~\ref{theorem:notaeqda}}
\theoremnotaeqda*
\begin{proof}

% First direction: NDA A simulates 2OTA.
We establish the equivalence with respect to picture languages
by mutual simulation of the nondeterministic versions
of online tessellation and DAG automaton.
As the first direction, we present the technically more involved simulation
of an $\NDA$ by a $\twoOTA$.

Let $A = (N, \Sigma \cup \{\#\}, R)$ be an NDA accepting a picture language $L$ using the picture-to-DAG encoding $\cooDAG$ (encoding an input picture $P$ as $\cooDAG(\hat P)$). According to Lemma \ref{theorem:hatPeqP}, we can assume that edges of each boundary vertex (labeled with $\#$)
are labeled with a fixed state, but here we assume that two different states are used. The first one, $\qzdag$, for labeling the edges incident with the top and left border of $\hat P$ and a second one, $\qfdag$, for labeling the edges incident with the right and bottom border. These states are not necessarily distinct states.
They are distinct if and only if
the language excludes the empty picture.
The left and top border vertices are labeled by~$\qzdag$,
including the corner vertices,
the remaining edges of the right and bottom boundary vertices by~$\qfdag$.
Formally, this yields for a rule of the form
$\sharprule$ the six rules:
  $\lambda\sharparrow \qzdag\qzdag$,
  $\qzdag\sharparrow \qzdag\qzdag$,
  $\qzdag\sharparrow \qzdag$,
  $\qzdag\qfdag\sharparrow \qfdag$,
  $\qfdag\qfdag\sharparrow \qfdag$ and
  $\qfdag\qfdag\sharparrow\lambda$.
No other rules are allowed for the boundary. If $\qzdag \ne \qfdag$, for an empty picture $\Lambda$, there is no run, as the rule $\qfdag\qfdag\sharparrow\lambda$ cannot be used for the bottom-right border symbol $\#$.

Now, we show how to construct a 2OTA automaton $M=(\Sigma, Q,q_0,F,\delta)$ accepting the same picture language $L$ by simulating the automaton $A$.

%We construct $M$ so that it never enters its initial state $q_0$ on a position different from the top and left border of an extended picture $\hat P$, for all input pictures $P \in \Sigma^{*,*}$.

Given an $\NDA$ $A=(\dagalph, \Sigmasharp, R)$
with $\symbolqzdag,\symbolqfdag \in \dagalph$,
we construct the $\twoOTA$,
simulating it for pictures, as a five tuple
$M=(\Sigma, Q, q_0, F, \delta)$
as follows:
\begin{enumerate}

  \item A $\twoOTA$ runs on a boundary picture, but it
does, however, never reference the symbols of the boundary,
only their states.
Therefore, it uses solely the \emph{alphabet} $\Sigma$,
excluding $\#$.
\item
The \emph{set of states} $Q = (\dagalphlambda) \cdot (\dagalphlambda)$
comprises the concatenation of two edges labels.
A state $nn' \in Q$ keeps track of two states of $A$ that label the outgoing edges of a node.
% Since the $\twoOTA$ propagates those two states for the two outgoing edges redundantly in both the directions instead of just the specified one, thus, n for downwards and n′ to the right. The edge label in the tuple not complying with the direction is a ‘don’t care’ within the simulation.
Since the $\twoOTA$ propagates those two states for the two outgoing edges
redundantly in both directions instead of just the specified one, thus,
$\qdown$ for downwards \emph{and} $\qright$ to the right,
the edge label in the tuple not complying with the direction
is a `don't care' within the simulation.
See the subsequent construction of the transition relation
in Point~\ref{item:ota:delta} below. For clarity, let us denote the state $\qfdag\qfdag$ as $q_f$.
\item
The \emph{starting state} is $q_0 = ~ \qzdag\qzdag$.

\item\label{item:ota:finalstates}
  The set of \emph{accepting states} is $F=\{q_f\}$. The initial state $q_0$ is final if and only if $q_0=q_f$ and $A$ accepts the DAG $\cooDAG(\hat \Lambda)$.
%  $F = \begin{cases}
%          \{\q_f\}, & \mbox{if } \Lambda \not\in L \\
%          \{\q_0, q_f\}, & \mbox{otherwise}.
%        \end{cases}
%        $$
%  The state $q_f$ corresponds to the rules in $R$
%%with $\alpha_1,\alpha_2,\beta_1,\beta_2\in N \cup \{\lambda\}$
%that can be applied to vertices without outgoing edges (bootom-right corner of an input picture).
%$$\OTAsimDAfinal$$
%Also for deterministic rules, the accepting states can easily be identified,
%see below the section about determinism.
%If $A$ accepts the empty picture $\Lambda$, then the initial state $q_0$ of $M$ is in the set of final states $F$ to ensure that $M$ accepts $\Lambda$. Hence, $M$ accepts the empty picture if and only if $A$ accepts the empty DAG.

\item\label{item:ota:delta}
The \emph{transition relation} $\delta$ computes possible states at a position in the picture based on the states of its adjacent upper and left positions and the symbol at the position. The value of the transition function corresponds to the rule of $\NDA$ $A$ matching the vertex's symbol and its ingoing edges:
$$\OTAsimDAforalltext.$$

% Note that $\delta$ does not need to encode the rules for the borders,
% $\sharprule$,
% since these correspond to identifying the start and accepting states.
%\item
%calculating states corresponds to coloring the outgoing edges
\end{enumerate}

We now prove the correctness of this construction by showing
%using structural induction by showing
%Thus we show that for the size of the picture $(m,n)$
\begin{equation}\label{eq:MA}
M \text{ accepts } P \iff A \text{ accepts } \cooDAG(\hat P)
\quad \text{for all } P \in \Sigmastarstar.
\end{equation}

We will reference the edges via the indices $i,j$ over the dimensions,
just as we do for vertices. We regard the ingoing edges as not part of
a vertex but only its outgoing
since that is the way a $\twoOTA$ works.
Therefore, let $\eijd$ and $\eijr$ denote the outgoing edges,
pointing down and to the right, respectively, such that
$\OUT(\Pij) = \eijd \eijr$.
Note that with this notation $\IN(\Pij) = \eijl\eiju$.

For the dimensions $(m,n) = (0,0)$, the empty picture $P = \Lambda$
encoded as $\cooDAG(\hat \Lambda)$ is a DAG. While $\Lambda$ itself is empty, its boundary picture $\hat\Lambda$
is a $2 \times 2$ grid filled entirely with~\# (recall Def.~\ref{def:pic}). $M$ assigns the initial state to the border positions of $\hat \Lambda$ except the bottom-right corner. The empty input picture is accepted by $M$ if and only if the state $q_0$ assigned to $\hat P(0,0)$ is a final state. Based on the above construction, $q_0$ is final (and equals $q_f$) for $M$ if and only if $A$ accepts $\cooDAG(\hat \Lambda)$.

The relevant rule $\lambda\sharparrow\!\qzdag\qzdag$, translates to
$\delta( n\lambda, \lambda n'\!\!, \qzfdag ) =
\delta( n, n'\!\!, \qzfdag ) =~\qzdag\qzdag$
  which does not appear in $\delta$,
  since, contrary to the $\NDA$,
  the $\twoOTA$ does not operate on the boundary symbols.
  However, this DAG rule yields an accepting state
  if and only if $\qzdag~=~\qfdag$.
  In that case, we can conclude that the starting state
  $\qzdag\qzdag~=~\qfdag\qfdag~= \qzfdag\qzfdag$.
  Then,
$\delta( n, n'\!\!, \# ) = \qzfdag\qzfdag$
  yields the accepting state $\qzfdag\qzfdag \in F$ which is the starting state,
  otherwise not.
  A $\twoOTA$ accepts a picture of dimensions $(m,n)$
  if $q_{m,n}\in F$, which for $\Lambda$ results in $q_{0,0}\in F$.
  The state $q_{0,0} = q_0 = \qzfdag\qzfdag \in F$ iff $A$ accepts $\Lambda$.
%$M$ accepts the empty picture $\Lambda$ if and only if
%$\qzdag = \qfdag$, denoted by $\#$.
  Consequently, $A$ accepts $\Lambda$ if and only $M$ does, too.

For a nonempty input picture $P$, of dimensions $(m,n)$ with $m,n>0$,
the rules $R$ of $\NDA$ $A$ applied to the picture itself, not the boundary,
correspond to the transition relation $\delta$
of the $\twoOTA$.
Let $\rhoDA$ denote a fixed run of the $\NDA$ $A$ on $\cooDAG(\hat P)$
and $\rhoOTA$ the corresponding run
of the $\twoOTA$ $M$ on $P$.
While $\rhoDA$ assigns two states $q_1$ and $q_2$
to the outgoing edges of any vertex $\Pij$ of a picture
according to the rules $R$,
$\delta$ assigns those states
as one state $q_1q_2$ to the vertex $\Pij$.
While $\delta$ assigns only states to the picture itself,
$\rhoOTA$ additionally assigns the edge labels $q_1q_2 \in Q$
as one state to
the border $P(0,j)$ and $P(i,0)$, identically to $\rhoDA$
which labels those outgoing edges with the states $q_1$ and $q_2$.

For each accepting run $\rhoDA$ of the automaton $A$ on $\cooDAG(\hat P)$, we can construct an accepting run $\rhoOTA$ on $\hat P$. The run $\rhoDA$ assigns two states $q_1$ and $q_2$
to the outgoing edges of each vertex $V_{i,j}$ of $\cooDAG(\hat P)$, where $i \in [m]_0$ and $j \in [n]_0$, according to the rules from $R$. Then, $\delta$ assigns those states as one state $q_1 q_2$ to the position $P(i,j)$:

\begin{enumerate}
  \item The string $\qzdag\qzdag$ labels the outgoing edges of
      all vertices $V_{i,0}$ and $V_{0,j}$, where
      $i\in[m]_0$ and $j\in[n]_0$. The automaton $M$ assigns the same tuple $\qzdag\qzdag=q_0$ to all positions $P(i,0)$ and $P(0,j)$.
      The corners $P(m+1,0)$, $P(0,n+1)$ are \emph{not} included.
      They are irrelevant since their assignments are identical for every picture,
      and the picture itself does not depend on them.
      Unlike the $\NDA$,
      the $\twoOTA$ does not even
      reference the states assigned to these positions.
      \[
      \begin{array}{rcl}
      \forallzeroij
      \rhoDA(e_{i,0}\outdown e_{i,0}\outright)
      & = &
      \rhoDA(e_{0,j}\outdown e_{0,j}\outright)
      = \
      \qzdag\qzdag \
      =
      q_{0}\\
      & = &
      q_{i,0} = q_{0,j}
      =
      \rhoOTA(P(i,0)) = \rhoOTA(P(0,j))
      \end{array}
      \]
  \item\label{item:rho:P}
      Now, using induction on $i$ and $j$, we can show that if $\IN(V_{i,j}) = \eijl\eiju$ and $\OUT(V_{i,j}) = \eijd \eijr$, then
      for $\alpha_1 = \rho_A(\eijl), \alpha_2 = \rho_A(\eiju), \beta_1 = \rho_A(\eijd), \beta_2 = \rho_A(\eijr)$ and $\sigma=P_{i,j}$, we have $(\alphaone\alphatwo \sigmaarrowX \betaone\betatwo) \in R$, $\betaonetwo\in
      \delta\left( \statealphaone, \statealphatwo, \sigma \right) = \delta(q_{i,j-1},q_{i-1,j},P_{i,j})$ and we can assign the state $q_{i,j}= \betaone\betatwo$ to position $P_{i,j}$.
  \item As the state assign to both outgoing edges of the vertex $V_{m,n}$ must be $\qfdag$, the state $q_{m,n}$ must be $\qfdag\qfdag\ \in F$. Thus, we get an accepting run of $M$ on $\hat P$.
\end{enumerate}

Similarly, we can show that for each accepting run $\rhoOTA$ of the automaton $M$ on $\hat P$ we can construct an accepting run $\rhoDA$ of the automaton $A$ on $\cooDAG(\hat P)$. This concludes the proof that (\ref{eq:MA}) is true and the picture language $L$ accepted by $\NDA$ $A$ using encoding $\cooDAG$ equals the language accepted by 2OTA $M$.\medskip

%DAG Automaton simulates OTA.
For the second direction,
let
$M = (\Sigma, Q, \qzdag, F, \delta)$ be a $\twoOTA$, where $\qzdag \in Q$ is the initial state of $M$. Without loss of generality, we can assume that $M$ assigns the initial state $\qzdag$ to the top and left border symbols of an extended picture, but not to any position with a symbol from $\Sigma$.
We can construct a $\NDA$ $A$ simulating $M$ as a triple
$A = (N, \Sigmasharp, R)$, in the following way:
\begin{enumerate}
  \item The \emph{set of states} $N = Q \cup \{\qfdag\}$, also called \emph{edge labels}, for a new state $\qfdag\ \not\in Q$,
      uses directly the $\twoOTA$'s states
      and requires additionally an edge label $\qfdag$ for acceptance
      due to the lack of a set of designated final states.

  \item Contrary to an $\twoOTA$ whose transition relation $\delta$
      defines its run $\rho_M$ only on the positions in an input picture $P$ itself,
      the rules $R$ determine a run $\rho_A$ on every edge of the DAG $\cooDAG(\hat P)$.
      Therefore, it uses the \emph{alphabet} $\Sigmasharp$,
      including the boundary symbol $\#$.

  \item The \emph{set of rules} $R$ reflects the starting state $\qzdag$
      and accepting states $F$
      of the $\twoOTA$
      as well as the transition relation $\delta$.
      \begin{itemize}
        \item %starting states yield the rules TODO
          As in the previous simulation, we achieve an equivalence concerning the runs
          on the top and left boundaries by adding the rules
          $\lambda\sharparrow \qzdag\qzdag$,
          $\qzdag\sharparrow \qzdag\qzdag$, and
          $\qzdag\sharparrow \qzdag$ to the set of rules $R$ of $A$.

        \item The rules for the position inside the picture itself, excluding the boundary, are constructed from $\delta$
            by propagating $\twoOTA$'s computed state redundantly.
            Instead of assigning it once to the current position, as $M$ does,
            $A$ labels the outgoing edges
            pointing downwards and to the right equivalently.
            Note that the order of the states that it depends on is switched:
            $$\text{if }
              q \in \deltaqarrows
              \text{, then we add the instruction }
              (q\inright q\indown \sigmaarrow qq) \text{ to } R.
            $$

        \item The $\twoOTA$ $M$ does not run on the boundaries below and right.
            Consequently, $A$ may assign arbitrary states to the edges pointing
            from a picture vertex to such a boundary vertex.
            By the following rules, we allow $A$ to label those edges
            just as edges not pointing to the boundary.
            These rules continue on the right and bottom boundaries
            to use the starting state $\qzdag$ of $M$
            for labeling edges pointing from boundary vertices to boundary vertices.
% Since the first edge of both boundaries is labeled by the corner rules
%   $\qzdag\sharparrow \qzdag$ with the starting state $\qzdag$,
%   allowing to exlude the empty picture,
% also the starting state needs to be considered.
            $$
              \forall q\inright,q\indown \in Q :
              ( \qzdag   q\indown \sharparrow \qzdag),
              (q\inright  \qzdag  \sharparrow \qzdag) \in R
            $$

        \item For agreeing on acceptance,
            $A$ guesses the bottom-right corner position $P(m,n)$ nondeterministically by
            $$\text{if }
              q \in F \land
              q \in \deltaqarrows
              \text{ then we add the instruction }
              (q\inright q\indown \sigmaarrow \qfdag\qfdag) \in R.
            $$
            Also nondeterministically,
            $A$ then guesses both vertices adjacent to the leaf corner
            in order to emulate the acceptance of $M$ by $F$ by adding the following rules to $R$:
            $$
              ( \qzdag \qfdag \sharparrow \qfdag),
              ( \qfdag \qzdag \sharparrow \qfdag) \text{ and }
              (\qfdag\qfdag\sigmaarrow\lambda).
            $$
            Additionally, if the 2OTA $M$ accepts the empty picture $\Lambda$, the starting state $\qzdag$ is final for $M$, and we add the instruction $(\qzdag\qzdag\sigmaarrow\lambda)$ to $R$.
      \end{itemize}
\end{enumerate}

Evidently, the 2OTA $M$ accepts the empty picture if and only if its initial state $\qzdag$ is final and the constructed $\NDA$ $A$ accepts the DAG $\cooDAG(\hat \Lambda)$.

For a nonempty input picture $P$, it is easy to verify that
equivalent runs for $M$ on $\hat P$ and $A$ on $\cooDAG(\hat P)$ can be constructed with the rule set $R$
in a similar way as for the first direction,
agreeing furthermore again on the acceptance of their inputs.

We have shown mutual simulation of the two models.
Thus, $\twoOTA \eqcoo \NDA$.
\end{proof}

%\noindent{\bf Proof of Theorem~\ref{theorem:escapeotasgraffito}}
\theoremescapeotasgraffito*
\begin{proof}
The proof is by construction.
Consider the picture language $L\in\{\mathtt{a},\mathtt{b}\}^{*,*}$
with a balanced number of positions labeled by $\mathtt{a}$ and $\mathtt{b}$.
Formally:
\[
\Lbalance = \Bigl\{\: P \in \Sigma^{m,n} \Bigm| \sum_{i=1}^{m} \sum_{j=1}^{n} |P_{i, j}|_\mathtt{a} = \sum_{i=1}^{m} \sum_{j=1}^{n}|P_{i, j}|_\mathtt{b} \text{ and } m,n \in \mathbb{N}\:\Bigl\}.
\]
Neither a $\twoOTA$ nor a sgraffito automaton can recognize this language,
because both automaton models lack the ability to count.
But, a DAG automaton $A =( \,\{p\},(\Sigma,r), R)$, where $p$ is a state and
$R =
(\lambda \rarrow a p),
(p \rarrow b \lambda )
$, accepts a picture $P$
if and only if $P\in L$.
Each edge labeled with the state $p$ pairs exactly one $\mathtt{a}$ with one $\mathtt{b}
$.
The input-driven encoding yields
unconnected picture DAGs where every connected component
consists of two vertices, labeled $a$ and $b$ respectively, joined by an edge
labeled~$p$.
\end{proof}

\end{document}